\newif\ifdraft\draftfalse
%\drafttrue
\newif\ifarxiv\arxivfalse

\arxivtrue

\ifarxiv
\documentclass{article}
\usepackage[square, numbers, sort]{natbib}
\usepackage[dvipsnames]{xcolor}
\else
\documentclass[acmsmall]{ec19acm}

%%
%% \BibTeX command to typeset BibTeX logo in the docs
\AtBeginDocument{%
	\providecommand\BibTeX{{%
			\normalfont B\kern-0.5em{\scshape i\kern-0.25em b}\kern-0.8em\TeX}}}

%% Rights management information.  This information is sent to you
%% when you complete the rights form.  These commands have SAMPLE
%% values in them; it is your responsibility as an author to replace
%% the commands and values with those provided to you when you
%% complete the rights form.

\copyrightyear{2019} 
\acmYear{2019} 
\setcopyright{acmcopyright}
\acmConference[EC '19]{ACM EC '19: ACM Conference on Economics and Computation}{June 24--28, 2019}{Phoenix, AZ, USA}
\acmBooktitle{ACM EC '19: ACM Conference on Economics and Computation (EC '19), June 24--28, 2019, Phoenix, AZ, USA}
\acmPrice{15.00}
\acmDOI{10.1145/3328526.3329607}
\acmISBN{978-1-4503-6792-9/19/06}

\settopmatter{printacmref=true}

\fi

\usepackage{comment}

\usepackage{fullpage}
\usepackage{amsmath, amsthm, amsfonts}

\usepackage{cleveref}
\usepackage{mathtools}
\usepackage{url}
\usepackage{soul}

\ifarxiv
\makeatletter
\def\thmhead@plain#1#2#3{%
	\thmname{#1}\thmnumber{\@ifnotempty{#1}{ }\@upn{#2}}%
	\thmnote{ {\the\thm@notefont#3}}}
\let\thmhead\thmhead@plain
\fi

\makeatother
\newtheorem{claim}{Claim}
\newtheorem{fact}{Fact}
\newtheorem{lemma}{Lemma}
\newtheorem{theorem}{Theorem}
\newtheorem{corollary}{Corollary}
\newtheorem{definition}{Definition}

\newtheorem{proposition}{Proposition}

\newtheorem*{claim*}{Claim}
\newtheorem*{theorem*}{Theorem}
\newtheorem*{lemma*}{Lemma}
\newtheorem*{corollary*}{Corollary}
\newtheorem*{proposition*}{Proposition}
\newtheorem*{remark*}{Remark}

\newcommand{\E}{\mathbb{E}}

\DeclareMathOperator*{\argmax}{arg\,max}
\newcommand{\bigo}{\mathcal{O}}
\DeclareMathOperator{\Var}{Var}
\newcommand{\info}{\mathcal{I}}
\newcommand{\PoP}{\mathrm{PoP}}
\newcommand{\noisy}{\tilde{\theta}}
\newcommand{\arxivurl}{\url{https://arxiv.org/abs/xxxx.xxxxxx}}
\newcommand{\mute}[1]{}
\newtheorem*{definition*}{Definition}

\begin{document}
\title{Price of Privacy in the Keynesian Beauty Contest}

\ifarxiv
\author{
	Hadi Elzayn\thanks{Applied Mathematics and Computational Science, University of Pennsylvania. Email: \texttt{hads@sas.upenn.edu}.}
	\and
	Zachary Schutzman\thanks{Department of Computer and Information Science, University of Pennsylvania. Email: \texttt{ianzach@seas.upenn.edu}.}
}

\maketitle
\begin{abstract}
	The Keynesian Beauty Contest is a classical game in which strategic agents seek to both accurately guess the true state of the world as well as the average action of all agents. We study an augmentation of this game where agents are concerned about revealing their private information and additionally suffer a loss based on how well an observer can infer their private signals.  We solve for an equilibrium of this augmented game and quantify the loss of social welfare as a result of agents acting to obscure their private information, which we call the `price of privacy'.  We analyze two versions of this this price: one from the perspective of the agents measuring their diminished ability to coordinate due to acting to obscure their information and another from the perspective of an aggregator whose statistical estimate of the true state of the world is of lower precision due to the agents adding random noise to their actions. We show that these quantities are high when agents care very strongly about protecting their personal information and low when the quality of the signals the agents receive is poor.

\end{abstract}
\else
\author{Hadi Elzayn}
\email{hads@sas.upenn.edu}
\author{Zachary Schutzman}
\affiliation{
	\institution{University of Pennsylvania}
	\city{Philadelphia}
	\state{PA}
	\country{USA}
}
\email{ianzach@seas.upenn.edu}

%%
%% By default, the full list of authors will be used in the page
%% headers. Often, this list is too long, and will overlap
%% other information printed in the page headers. This command allows
%% the author to define a more concise list
%% of authors' names for this purpose.
\renewcommand{\shortauthors}{Elzayn and Schutzman}

%%
%% The abstract is a short summary of the work to be presented in the
%% article.
\begin{abstract}

	\medskip\noindent Full Paper: \arxivurl
\end{abstract}

%%
%% The code below is generated by the tool at http://dl.acm.org/ccs.cfm.
%% Please copy and paste the code instead of the example below.
%%
\begin{CCSXML}
	<ccs2012>
	<concept>
	<concept_id>10003752.10010070.10010099.10010100</concept_id>
	<concept_desc>Theory of computation~Algorithmic game theory</concept_desc>
	<concept_significance>500</concept_significance>
	</concept>
	<concept>
	<concept_id>10003752.10010070.10010099.10010103</concept_id>
	<concept_desc>Theory of computation~Exact and approximate computation of equilibria</concept_desc>
	<concept_significance>500</concept_significance>
	</concept>
	<concept>
	<concept_id>10002978.10003029.10011150</concept_id>
	<concept_desc>Security and privacy~Privacy protections</concept_desc>
	<concept_significance>300</concept_significance>
	</concept>
	</ccs2012>
\end{CCSXML}

\ccsdesc[500]{Theory of computation~Algorithmic game theory}
\ccsdesc[500]{Theory of computation~Exact and approximate computation of equilibria}
\ccsdesc[300]{Security and privacy~Privacy protections}

%%
%% Keywords. The author(s) should pick words that accurately describe
%% the work being presented. Separate the keywords with commas.
\keywords{beliefs, Bayesian agents}

%%
%% This command processes the author and affiliation and title
%% information and builds the first part of the formatted document.
\maketitle

\fi

\section{Introduction}
In recent years, the mathematical study of {privacy} has become a major subject of inquiry. Much of the impetus for this work has been a series of data breaches and deanonymization of seemingly safe private information, perhaps most famously in the use of IMDb reviews to attack Netflix's data set over a decade ago \cite{narayanan2008robust}.  Models such as differential privacy alleviate this problem by providing formal guarantees about how much about any individual an adversary can learn from the release of some statistic computed from a dataset.
However, such techniques are generally predicated on the presence of a trusted central agent which applies the differentially private mechanism to the data it collects from individuals. Alternatively, in the case of the \textit{local} differential privacy model, the agents are typically instructed in how to privatize their data.

In this work, we endogenize a notion of privacy in the absence of a trusted party to coordinate the mechanism. In particular, we analyze a formalization of a classical game called the \textit{Keynesian Beauty Contest}, which has been used to study strategic interaction involving information acquisition and the coordination of collective action. In particular, we show how the traditional formulation of this game neglects privacy concerns in its equilibrium predictions, and we then provide a framework for extending the game to incorporate a flexible notion of privacy into the utility of the agents. Using this, we can characterize a \emph{price of privacy}, somewhat akin to quantities such as the price of anarchy, which measures the loss of social welfare in a population of agents who act selfishly to guarantee their own privacy.

Abstractly, we think about players in a game being perfectly rational Bayesian agents who observe some information, perform a utility-maximizing computation, and play an action. However, by observing some player $i$'s chosen action, player $j$ may be able to learn something about player $i$'s private information.  For this reason, if players fear that their public actions may reveal private information, they may be incentivized to deviate from the strategy which maximizes utility in the absence of privacy concerns. If all players share such concerns, or anticipate others harboring them, equilibrium behavior may be significantly different than standard predictions. 

One setting where there may be tension between privacy and coordination with other players and some underlying ground truth is intra-organizational information aggregation. Suppose that a firm wishes to poll a group of employees as to a particular decision, such as a prediction about the success of a particular product, an evaluation of a colleague, or a new procedure for evaluating and settling claims. In order to make the best decision possible, it is valuable to aggregate information, opinions, or signals received by individuals; however, individuals may be reluctant to fully share their opinion lest it be held against them or they prove to be `wrong'. In this situation, simple anonymization procedures may not be trustworthy or effective, and any such procedure requires trusting an internal mechanism to which the employee does not have access.  If the respondent is concerned about other agents deanonymizing her survey, then it should be expected that she does not respond honestly.

\subsection{Our results}

We consider a formal model of the Keynesian Beauty Contest, a game in which each agent observes some information and submits an estimate about the true `state of the world', then earns utility based on both how close her action is to the truth as well as how close it is to the \textit{average action} over all agents.  We describe the game, the information structure, and the strategy space and show the existence of a \textit{symmetric linear Nash equilibrium}, where agents' actions are a convex combination of the public and private signals they observe, extending the results in \citet{morrisshin}.

We then turn to the privacy-augmented version of the game, where agents face the same utility function but also suffer a loss of utility based on the ability of other players to infer their private information.  We show that this new game also has an equilibrium in strategies that can be written as strategies in the original game with added random noise.
This leads to two different values which can be thought of as a `price of privacy'.  In the first, we consider the perspective of the agents and quantify the total loss of the players' utilities as a result of incorporating this concern for privacy.  In the second, we think about an `untrusted aggregator' who wants to compute some statistic using the agents' private beliefs but cannot convince them to participate in a privacy-protecting mechanism. Here, we compute the decrease in the quality of the aggregation due to the players' addition of noise to their previously optimal actions.

At a high level, we a consider setting in which there is no centralized planning mechanism, either to perform the differentially private computation or to instruct the agents who own the data to add a particular amount of noise to their information in order to perform a locally differentially private aggregation.  Rather, we assume that agents are rational and derive some utility from both the aggregate-level computation being accurate as well as from the privacy gained by obscuring the information she releases. A major departure from the local differential privacy framework is in our treatment of outliers.  In local differential privacy, an individual whose data is very different from the norm may have to add a considerable amount of noise before providing her information to the aggregator, since we need to worry that her unmodified data might shift the distributions of outcomes farther than we would like.  In this work, we consider a different flavor of individual privacy, where an agent with outlying information does not necessarily care about being revealed as an outlier; she only cares how accurately an observer can guess her private information.

\subsection{Related Work}
\subsubsection{Economics}
Concretely, our work builds primarily on the results in \citet{morrisshin}, which formalizes the modern version of a Keynesian Beauty Contest in order to study the tendency for individuals to over-weight public information. We similarly explore linear equilibria in a Keynesian Beauty Contest game with public and private signals; in fact, the structure of our model is substantially very similar, and we recover their results as a special case when there is no concern for privacy. Viewing our model as a generalization of theirs, we expand their result on the existence of such equilibria to a privacy-aware setting. The results in \citet{hellwigveldkamp} build on those in \cite{morrisshin}, exploring a setting in which agents optimize a selection of information sources with different costs matching their qualities. While \cite{hellwigveldkamp} is quite different mechanically and in spirit, there is a certain sense in which our paper can be interpreted along similar lines: the equilibrium noisy  action can be viewed as similar to choosing a private signal of differing variances, and the privacy cost can be thought of as a cost to more precise information.  The Keynesian Beauty Contest is well-studied in the broader macroeconomics literature (see \citet{nagel2017inspired} for a recent survey) and is frequently used to study settings where agents derive some value for correctness and for coordination, such as in financial markets or strategic voting.  These settings are natural cases for the use of formal privacy methods, such as financial analysts wanting to conceal the model they use to predict asset prices or voters being concerned about their preferences being held against them. \citet{gradwohl2017perception} study a setting in which strategic agents are concerned that their actions in a game will reveal sensitive private information, but their model and approach, which is based on the concept of \textit{signaling games}, is considerably different from the setting we consider.

\subsubsection{Differential Privacy}

In the last decade, mathematical notions of privacy have been studied extensively in the computer science literature.  The most influential is \textit{differential privacy} (c.f. \citet{dworkdp}) which states roughly that a statistical algorithm is differentially private if the probability that the algorithm gives any particular output does not change by much when we modify one row of the database the algorithm is run on. 
Standard differential privacy is principally a \textit{central}, rather than \textit{individual} notion of privacy, as the algorithms operate under the assumption that the adversary does not have access to the raw data provided to the analyst.  The concept was introduced in \citet{kasiviswanathan2011can} and recent work has shown that this notion is a powerful framework for addressing privacy concerns (c.f. \cite{localdp,joseph2018locally,bassily2018linear}).
There is also a literature on \textit{privacy in mechanism design}.  The work in \citet{chen2016truthful} and \citet{nissim2012privacy} examines settings where agents participate in a data aggregation mechanism and earn utility which is increasing in the quality of the estimate but decreasing in the data leakage. The model in \cite{chen2016truthful} is one of \textit{truthful voting}. In this work, we study a game-theoretic setting where privacy is a concern, rather than designing a privacy-preserving mechanism. 
Our work differs from the formal study of privacy in that we use a definition of which does not (necessarily) satisfy the strictness of the various versions of \textit{differential} privacy. 
Rather, we consider the perspective of strategic agents who are concerned with other players' ability to learn `too much' about their own private information.  In this way, our work is related to the issue of \textit{response bias} on surveys, where respondents do not answer questions honestly in order to avoid revealing sensitive information.  Work examining the use of randomization to alleviate this effect goes back several decades (e.g. \citet{warner1965responsebias}).   

 { Our extended game does bear a resemblance to the \emph{local model} of differential privacy in that agents add noise to their actions, and one might intuitively map the case where agents add Gaussian noise to their actions to an instance of the local Gaussian mechanism (see, e.g. \cite{dworkdp}). However, agents in our model are \emph{not} using the Gaussian mechanism to achieve differential privacy, and such a guarantee cannot be recovered\footnote{There are other, more complicated mechanisms that can achieve differential privacy with unbounded input data. We refer readers to \cite{liu2019generalized} and \cite{wang2019collecting} for more discussion. }, as the fact that agents' data is unbounded means that this mechanism cannot give non-vacuous guarantees. To see this, recall that differential privacy requires that for any set of output of the mechanism must be approximately as likely when input is \emph{neighboring}, which typically means differing in one record. In our case, this corresponds to having a different private signal. The problem is that for any choice of finite variance, adding Gaussian noise to private signals that are sufficiently distant will produce noisy actions that are far apart, and can thus be distinguished with high confidence. 
 	
  More importantly, there is a significant conceptual difference between the models of differential privacy and our model. In differential privacy models, there is a strict separation of roles: there is an accuracy-concerned learner that wishes to perform and release the output of a query, and privacy-concerned agents that supply their data.  The addition of noise {either centrally or locally }  prevents  any other party from learning   the exact value of the private data with high confidence, and is assumed to be necessary to make agents willing to supply their data. In contrast, there is no external learner in our model; instead, agents attempt to coordinate their actions with other agents and the state of the world, but must add noise to their actions in order to prevent others from learning their private information. Noisier actions lead to greater variance around the state of the world and the average actions, and thus lower payoffs. Hence, the agents care about \emph{both} privacy \emph{and} accuracy, and this tradeoff determines how much noise they ultimately add. 
 	
 }

\begin{comment}
One might still ask whether the noise that agents add is in keeping with a local version of differential privacy. In general, achieving differential privacy with the Gaussian mechanism requires adding noise with variance $\bigo(\frac{1}{\epsilon} \log \frac{1}{\delta} \Delta f)$, where $\Delta f$ is a measure of the \emph{sensitivity} of the query. Similar requirements can be made for the Laplace mechanism (i.e. adding Laplace noise) and more complicated approaches such as the Staircase or Hybrid mechanisms. The issue is that again, since there is no learner in our model, the `query' is not well-defined. If one takes the query to be recovering the signal based on the action, which could be sensible, then the sensitivity is unbounded. 
\end{comment}

\section{Framework and Model}
\label{sec:framework}
\subsection{The Keynesian Beauty Contest}
The Keynesian Beauty Contest dates back to John Maynard Keynes' 1936 \textit{General Theory} \cite{keynes1936general} in which he formulates a game by analogy to a newspaper beauty contest. In this game, players select the `most beautiful' entrants from an array of photographs printed in the newspaper, and those players who choose the \textit{most popular} faces are eligible for a prize.  
The salient idea is that a rational contestant must consider her own opinions about which entrants are the most beautiful as well as her \textit{beliefs} about the opinions of all other contestants, and perhaps her beliefs about the beliefs of the other players' opinions of other players, and her beliefs about the beliefs about the beliefs about the beliefs about the opinions of other players, and so on.  
Keynes originally proposed this as a model to explain the behavior of financial markets, where the value of an asset depends as much on its fundamental potential for returns as it does the \textit{collective belief in its potential} for returns. 
Since its inception, this model has been used throughout the economics literature to describe the behavior of rational agents in strategic environments (c.f. \cite{gao2008keynesian,allen2006beauty,bosch2002one,cespa2015beauty,nagel2017inspired,morrisshin}), particularly in macroeconomic settings.

In this paper, we work with a common abstraction of the Keynesian Beauty Contest (see e.g. \cite{morrisshin,hellwigveldkamp}) in which there is some true state of the world $s$, and each agent submits a guess $\theta_i$ and earns utility equal to 
$$u_i(\theta_i,\theta_{-i}) = -(1-\alpha)(\theta_i-\bar{\theta})^2 - \alpha(\theta_i-s)^2,$$
where $\bar{\theta}$ is the average choice of $\theta_i$ over all players and $\theta_{-i}$ denotes the action of all other players other than $i$.  
We refer to the $(\theta_i-\bar{\theta})^2$ term as the \textit{coordination component}, which rewards how close player $i$'s action is to that of the other players, and the $(\theta_i-s)^2$ term as the \textit{guessing component}, which describes how close player $i$'s action is to being the correct guess for the true state of the world $s$. 
The parameter $\alpha\in[0,1]$ is common to all agents and describes the relative weighting of the values of coordination and guessing the true state of the world correctly. 

We let $n$ denote the number of players.  In the economics literature, it is standard to consider a continuum of agents indexed by the unit interval $[0,1]$.  We present our results from the perspective of the finite game and observe that the analogous result for the continuum case emerge by taking the limit as $n$ grows to infinity.  We state these as propositions immediately following the corresponding result in the finite game.  For the sake of brevity, all omitted proofs as well as direct proofs of the results in the infinite game can be found in \ifarxiv the appendix.\else the appendix of the full version of the paper.\footnote{Available on ArXiv at \arxivurl.}\fi

The fundamental difference between the two cases is in the definition of the average action $\bar{\theta}$.  In the game with finitely many and infinitely many agents, these are

\begin{align*}
\bar{\theta} = \frac{1}{n}\sum\limits_{j=1}^n \theta_j \qquad \text { and }\qquad \bar{\theta} = \int_{0}^{1} \theta_j dj,
\end{align*}

respectively. In the finite game, agent $i$'s action $\theta_i$ has a measurable effect on the average action whereas in the infinite game, her impact is infinitesimal.  Therefore, when there are finitely many players, agent $i$ must consider her own action's effect on $\bar{\theta}$;  in the infinite case, she can treat $\bar{\theta}$ as simply the average action of all players other than herself.  Intuitively, as  $n$ grows, any individual's ability to unilaterally affect $\bar{\theta}$, so we should expect that the results in the infinite game emerge as the limits of the results in the finite game as $n$ grows to infinity.

\subsection{Information Structure}

We now describe the information structure of the game.  
Each player has a common \textit{improper uniform} prior distribution over the value of $s$. There is a public signal $y$, drawn from a Gaussian distribution with mean $s$ and variance $\sigma^2_y$.  
Additionally, each player $i$ observes a private signal $x_i$ independently drawn from a Gaussian distribution with mean $s$ and variance $\sigma^2_x$. 
Each player observes $y$ and her own $x_i$, but not any other player $j$'s private signal $x_j$.  
Each player $i$'s utility function $u_i(\theta_i,\theta_{-i})$ is identical as given above.  The values of $\sigma^2_y$, $\sigma^2_x$, $\alpha$, the utility functions, and the realization of $y$ are public and common knowledge, and the form of the priors and structure of the game is also common knowledge.  Furthermore, we assume that the agents are rational and Bayesian.  That is, they seek to maximize their expected utility given their knowledge and their beliefs about other players.

{ We use the notion of an \textit{improper uniform prior} belief to align with the economics literature; however, this concept is not necessarily standard across disciplines. One can think of a (proper) uniform prior with compact support as indicating that players believe any underlying parameter in the support is equally likely to be the truth. An `improper uniform prior' expresses the same sentiment for an unbounded support. The term `improper' arises because such an object is not a probability density function, but, when updated with the proper Gaussian signals in our setting, the posterior is a proper distribution. \Citet{berger2009formal} provide a useful discussion of settings in which the improper uniform prior is appropriate. 

For readers uncomfortable with such a notion, we present an alternative framing of the information structure which is equivalent to the above.  The private signals $x_i$, the utility functions, and the values of $\alpha$ and $\sigma^2_x$ are as before.  Rather than $y$ representing a public signal, each agent instead has a prior belief about the true state of the world $s$, which is a Gaussian distribution with mean $y$ and variance $\sigma^2_y$.  This prior is common to all agents and the parameters of the prior are common knowledge.  This framing is equivalent to the previous one since an agent with an improper uniform prior, upon observing the realization of the public signal $y$, updates her posterior belief about $s$ to be Gaussian with mean $y$ and variance $\sigma^2_y$, which is what she would have believed if she instead began with the common Gaussian prior.}

We write $\mathcal{I}_i$ for the information set of player $i$, which captures the structural information about the game as well as the values of $x_i$ and $y$.  
Furthermore, we use $\mathbb{E}_i\left(\cdot\vert \mathcal{I}_i\right)$ to denote the \textit{expectation of player $i$ at $\mathcal{I}_i$}, which is player $i$'s belief about some quantity given that she knows everything at $\mathcal{I}_i$.  Where it is clear from context, we drop the $\mathcal{I}_i$ for notational clarity and simply write $\mathbb{E}_i(\cdot)$.

We now consider an equilibrium concept for this game. 
\begin{definition}[Symmetric linear Nash equilibrium]
	A \textit{symmetric linear Nash equilibrium} of this game is an action $\theta_i$ for each player $i$ which can be written as a linear combination of the private and public signals $x_i$ and $y$ such that no player can profitably deviate unilaterally and every player $i$ chooses the same weight to put on $x_i$ and $y$.  That is, a Nash equilibrium in which $\theta_i=\kappa x_i + (1-\kappa)y$ for all players and $\kappa$ is identical for all players.
\end{definition}

The authors in \cite{morrisshin} describe the \textit{unique} {symmetric linear Nash equilibrium} of a slightly different version of the game which has the same first order condition; our proof follows a very similar structure.

\begin{lemma}[First order condition]\label{lem:finite-foc}
	In equilibrium, an agent's optimal choice of $\theta_i^*$ must satisfy
	\begin{align*}
	\theta_i^* = \frac{\alpha n^2 \E_i[s]}{\alpha (2n-1)+(n-1)^2} + \frac{(1-\alpha)(n-1)\E_i\left[\sum\limits_{j\neq i}\theta_{j}\right]}{\alpha (2n-1)+(n-1)^2}
	\end{align*}

\end{lemma}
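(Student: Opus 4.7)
The plan is to obtain the first-order condition by taking the expectation of the utility function with respect to agent $i$'s information set $\mathcal{I}_i$, differentiating with respect to $\theta_i$, setting the derivative equal to zero, and solving for $\theta_i^*$ explicitly. Since $u_i$ is a strictly concave quadratic in $\theta_i$ (for any $\alpha \in (0,1)$), the first-order condition will pin down the unique maximizer.

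The key subtlety—and the one place where this derivation differs from the infinite-agents version in \citet{morrisshin}—is that in the finite game, $\theta_i$ has a non-negligible effect on the average action. Writing $\bar\theta = \frac{1}{n}\theta_i + \frac{1}{n}\sum_{j \neq i}\theta_j$, we get $\theta_i - \bar\theta = \frac{n-1}{n}\theta_i - \frac{1}{n}\sum_{j\neq i}\theta_j$, so $\frac{\partial}{\partial \theta_i}(\theta_i - \bar\theta)^2 = 2 \cdot \frac{n-1}{n}(\theta_i - \bar\theta)$. I would differentiate under the expectation (justified by the fact that $\theta_i$ is $\mathcal{I}_i$-measurable and the quadratic is integrable), giving
\begin{equation*}
    -2(1-\alpha)\cdot\frac{n-1}{n}\,\E_i\!\left[\tfrac{n-1}{n}\theta_i - \tfrac{1}{n}\textstyle\sum_{j\neq i}\theta_j\right] - 2\alpha\bigl(\theta_i - \E_i[s]\bigr) = 0.
\end{equation*}

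Pulling $\theta_i$ out of the expectation (again using that it is $\mathcal{I}_i$-measurable), collecting the $\theta_i$ terms on the left and dividing through by $2$, I obtain
\begin{equation*}
    \theta_i\left[\frac{(1-\alpha)(n-1)^2}{n^2} + \alpha\right] = \alpha \E_i[s] + \frac{(1-\alpha)(n-1)}{n^2}\,\E_i\!\left[\sum_{j\neq i}\theta_j\right].
\end{equation*}
Multiplying through by $n^2$ to clear denominators, the coefficient on $\theta_i$ becomes $(1-\alpha)(n-1)^2 + \alpha n^2$, and the right-hand side becomes $\alpha n^2 \E_i[s] + (1-\alpha)(n-1)\E_i[\sum_{j\neq i}\theta_j]$.

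The final step is a short algebraic simplification of the denominator: expanding,
\begin{equation*}
    (1-\alpha)(n-1)^2 + \alpha n^2 = (n-1)^2 + \alpha\bigl(n^2 - (n-1)^2\bigr) = (n-1)^2 + \alpha(2n-1),
\end{equation*}
which recovers exactly the denominator in the statement of the lemma. I do not anticipate a real obstacle beyond being careful with the $\frac{n-1}{n}$ factor arising from $\partial \bar\theta/\partial \theta_i = 1/n$; overlooking this would give the infinite-agent formula rather than the finite one. Concavity in $\theta_i$ (the coefficient $(1-\alpha)(n-1)^2/n^2 + \alpha$ is strictly positive whenever $\alpha > 0$ or $n > 1$) ensures that this stationary point is indeed the maximizer, so the displayed equation characterizes the equilibrium best-response.
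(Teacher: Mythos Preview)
Your proposal is correct and follows essentially the same approach as the paper: both rewrite $\theta_i-\bar\theta=\tfrac{n-1}{n}\theta_i-\tfrac{1}{n}\sum_{j\neq i}\theta_j$, take the expected utility, differentiate with respect to $\theta_i$, and rearrange. Your added observations on concavity and on the denominator identity $(1-\alpha)(n-1)^2+\alpha n^2=(n-1)^2+\alpha(2n-1)$ are helpful glosses, but the underlying argument is the same.
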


\mute{\begin{proof}
	Fix the actions of all other players $j\neq i$ and consider the utility of player $i$, recalling that in the finite setting, 
	$$\bar{\theta}= \frac{1}{n}\theta_i^* + \frac{1}{n}\sum\limits_{j\neq i}\theta_{j}.$$
	
	We have
	\begin{align*}
	u_i(\theta_i^*,\theta_{-i}) 
	&= -\alpha(\theta_i^*-s)^2 - (1-\alpha)(\theta_i^*-\bar{\theta})^2\\
	&= -\alpha(\theta_i^*-s)^2 - (1-\alpha)\left(\theta_i^*-\left(\frac{1}{n}\theta_i^* + \frac{1}{n}\sum\limits_{j\neq i}\theta_{j}\right)\right)^2\\
	&= -\alpha(\theta_i^*-s)^2 - (1-\alpha)\left(\frac{n-1}{n}\theta_i^*-\left( \frac{1}{n}\sum\limits_{j\neq i}\theta_{j}\right)\right)^2\\
	&= -\alpha(\theta_i^*-s)^2 - (1-\alpha)\frac{1}{n^2}\left((n-1)\theta_i^*-\left(\sum\limits_{j\neq i}\theta_{j}\right)\right)^2
	\end{align*}

	Player $i$'s first order condition will be to maximize this in expectation. Taking an expectation, we have 
	\begin{align*}
	\E_i[u_i(\theta_i,\theta_{-i})] 
	&= \E_i\left[ -\alpha(\theta_i^*-s)^2 - (1-\alpha)\frac{1}{n^2}\left((n-1)\theta_i^*-\left(\sum\limits_{j\neq i}\theta_{j}\right)\right)^2 \right]\\
	&= -\alpha\E_i[(\theta_i^*-s)^2] - (1-\alpha)\frac{1}{n^2} \E_i\left[\left((n-1)\theta_i^*-\left(\sum\limits_{j\neq i}\theta_{j}\right)\right)^2\right]\\
	&= -\alpha\E_i[(\theta_i^*)^2-2s\theta_i^*+s^2]\\
	&\qquad - (1-\alpha)\frac{1}{n^2} \E_i\left[ (n-1)^2(\theta_i^*)^2 -2(n-1)\theta_i^* \left(\sum\limits_{j\neq i}\theta_{j}\right) + \left(\sum\limits_{j\neq i}\theta_{j}\right)^2 \right]
	\end{align*}

	Differentiating with respect to agent $i$'s choice of $\theta_i$ gives us that, in equilibrium, $\theta_i^*$ must satisfy
	\begin{align*}
	\frac{\partial}{\partial \theta_i^*} &= 0 =
	-\alpha(2\theta_i^* - 2\E_i[s]) - (1-\alpha)\frac{1}{n^2} \left(2(n-1)^2\theta_i^* - 2(n-1)\E_i\left[\sum\limits_{j\neq i}\theta_{j}\right]\right)\\
	&= \alpha(\theta_i^* - \E_i[s]) + \frac{1-\alpha}{n^2} \left((n-1)^2\theta_i^* - (n-1)\E_i\left[\sum\limits_{j\neq i}\theta_{j}\right]\right)\\
	&= \theta_i^*\left( \alpha + \frac{(1-\alpha)(n-1)^2}{n^2}  \right) - \left(\alpha\E_i[s] + \frac{(1-\alpha)(n-1)}{n^2}\E_i\left[\sum\limits_{j\neq i}\theta_{j}\right]\right)
	\end{align*}
	
	Where we have used the fact that $\E_i[\theta_i^*]=\theta_i^*$, since it is agent $i$'s choice and not a random variable.
	
	Rearranging yields the claim.

\end{proof}}

\begin{proposition}\label{lem:FOC}
	In the game with infinitely many agents, in equilibrium, an agent's optimal choice of $\theta_i^*$ must satisfy
	\begin{align*}
	\theta_i^* = \alpha \E_i[s]+(1-\alpha) \E_i[\bar{\theta}]
	\end{align*}
\end{proposition}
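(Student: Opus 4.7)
The plan is to exploit the fundamental structural simplification in the continuum setting: because $\bar{\theta} = \int_0^1 \theta_j\,dj$ is unchanged when a single agent varies her action on a measure-zero set, agent $i$ may treat $\bar{\theta}$ as a random variable independent of her own choice of $\theta_i$ when computing her best response. This eliminates the cross-terms that produced the awkward coefficients in Lemma~\ref{lem:finite-foc}.

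Concretely, I would fix the (measurable) profile of the other agents' strategies and consider agent $i$'s expected utility
\begin{align*}
\E_i[u_i(\theta_i,\theta_{-i})] = -\alpha\,\E_i[(\theta_i-s)^2] - (1-\alpha)\,\E_i[(\theta_i-\bar{\theta})^2].
\end{align*}
Expanding each square and using linearity of expectation together with the fact that $\theta_i$ is chosen by agent $i$ (and therefore is deterministic from her perspective) yields
\begin{align*}
\E_i[u_i] = -\alpha\bigl(\theta_i^2 - 2\theta_i\E_i[s] + \E_i[s^2]\bigr) - (1-\alpha)\bigl(\theta_i^2 - 2\theta_i\E_i[\bar{\theta}] + \E_i[\bar{\theta}^2]\bigr).
\end{align*}
Differentiating in $\theta_i$, setting the derivative to zero, and dividing by $2$ gives
\begin{align*}
\alpha(\theta_i^* - \E_i[s]) + (1-\alpha)(\theta_i^* - \E_i[\bar{\theta}]) = 0,
\end{align*}
which rearranges to $\theta_i^* = \alpha\E_i[s] + (1-\alpha)\E_i[\bar{\theta}]$, as claimed. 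Second-order conditions are immediate because the objective is strictly concave in $\theta_i$ with leading coefficient $-1$.

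As a sanity check and alternative derivation, one could instead take the $n\to\infty$ limit in Lemma~\ref{lem:finite-foc}. The coefficient on $\E_i[s]$ is $\alpha n^2/(\alpha(2n-1)+(n-1)^2) \to \alpha$, and after writing $\sum_{j\neq i}\theta_j = (n-1)\bar{\theta}_{-i}$ the coefficient on $\E_i[\bar{\theta}_{-i}]$ becomes $(1-\alpha)(n-1)^2/(\alpha(2n-1)+(n-1)^2) \to 1-\alpha$, with $\bar{\theta}_{-i}\to\bar{\theta}$ in the limit. The main (and only) conceptual obstacle is justifying that agent $i$'s action has no effect on $\bar{\theta}$ in the continuum; this is precisely the standard nonatomic-games convention motivated in the framework section, and it is what makes the direct calculation so much cleaner than its finite-$n$ counterpart.
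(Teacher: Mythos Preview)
Your proposal is correct and follows essentially the same approach as the paper: expand the quadratic loss, use that $\theta_i$ is deterministic from agent $i$'s viewpoint and that $\partial\bar{\theta}/\partial\theta_i=0$ in the continuum, differentiate, and rearrange. Your added second-order check and the $n\to\infty$ sanity check against Lemma~\ref{lem:finite-foc} are welcome extras but not departures from the paper's argument.
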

\mute{\begin{proof}
Since agent $i$ chooses his action after observing signals $x_i$ and $y$, they optimize their action given those signals. We can write:
\begin{align*}
\theta_i^* \in \argmax \E_i [u_i(\theta_i)]
\end{align*}
Expanding, we have:
\begin{align*}
\E_i [u_i(\theta_i,\theta_{-i})] =-\alpha(\E_i[\theta_i^2 -2 \theta_i s+ s^2]) -(1-\alpha)(\E_i[\theta_i^2 -2\theta_i \bar{\theta} +\bar{\theta}^2]).
\end{align*}
We can differentiate with respect to $\theta_i^*$ and set the derivative equal to zero to find
\begin{align*}
0=\frac{\partial}{\partial \theta_i} \E_i[u_i(\theta_i,\theta_{-i})] = -\alpha (2\theta_i - 2 \E_i[s])-(1-\alpha)[2\theta_i -\bar{\theta}].
\end{align*}
Where we have used the fact that $\E_i[\theta_i^*]=\theta_i^*$, since it is agent $i$'s choice and not a random variable, as well as the fact that $\frac{\partial\bar{\theta}}{\partial\theta_i^*}=0$ because there are infinitely many agents, so agent $i$'s action cannot affect the average of all of the actions.

Rearranging yields the claim.
\end{proof}}

{These results say that the optimal action is a convex combination of agents' expectations about the state and the average actions; this follows directly from the structure of the payoff function. Now, if other agents' strategies are convex combinations of signals, then the unique best response is also a convex combination of signals. We can match coefficients to solve for this symmetric linear Nash equilibrium.}

\begin{lemma}[Nash equilibrium and Value of $\kappa$]  \label{lem:finitekappa}
	The symmetric linear Nash equilibrium is given by

\begin{align*}
\theta_i &= \kappa x_i + (1-\kappa)y \text{, where} \\
	\kappa &= \frac{\alpha n^2 \tau_x}{\alpha n^2 \tau_x + \left((n-1)^2 +\alpha (2n-1)\right)\tau_y}, \ \tau_x  = \frac{1}{\sigma_x^2},\text{ and } \tau_y =\frac{1}{\sigma_y^2}
		\end{align*}
	for all players $i$.

\end{lemma}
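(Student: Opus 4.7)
The plan is to substitute the symmetric linear ansatz $\theta_j = \kappa x_j + (1-\kappa)y$ into the first-order condition from Lemma~\ref{lem:finite-foc} and match coefficients on $x_i$ and $y$ to pin down $\kappa$. Two ingredients are needed first: agent $i$'s posterior mean of $s$ and her expectation of the sum of the other players' actions.

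For the first, under the improper uniform prior on $s$ with independent Gaussian signals $y \sim \mathcal{N}(s,\sigma_y^2)$ and $x_i \sim \mathcal{N}(s,\sigma_x^2)$, standard conjugacy gives the precision-weighted posterior mean
$$\E_i[s] = \frac{\tau_x x_i + \tau_y y}{\tau_x + \tau_y}.$$
For the second, since for each $j \neq i$ the signal $x_j$ equals $s$ plus noise that is independent of $\mathcal{I}_i$ conditional on $s$, we have $\E_i[x_j] = \E_i[s]$, and hence under the ansatz $\E_i[\theta_j] = \kappa \E_i[s] + (1-\kappa)y$. Summing over $j \neq i$ gives $\E_i\left[\sum_{j\neq i}\theta_j\right] = (n-1)\bigl(\kappa \E_i[s] + (1-\kappa)y\bigr)$.

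Plugging these into the FOC and writing $D = \alpha(2n-1) + (n-1)^2$, the equilibrium condition becomes
$$\kappa x_i + (1-\kappa)y = \frac{\alpha n^2 \E_i[s] + (1-\alpha)(n-1)^2\bigl(\kappa \E_i[s] + (1-\kappa)y\bigr)}{D}.$$
Expanding $\E_i[s]$ as the precision-weighted combination and equating the coefficient of $x_i$ on both sides yields
$$\kappa \bigl[D(\tau_x+\tau_y) - (1-\alpha)(n-1)^2 \tau_x\bigr] = \alpha n^2 \tau_x.$$

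The key algebraic step is then the simplification of the bracket, which is where the stated closed form comes from. Using $D - (1-\alpha)(n-1)^2 = \alpha(2n-1) + \alpha(n-1)^2 = \alpha\bigl(2n-1 + (n-1)^2\bigr) = \alpha n^2$, the bracket reduces to $\alpha n^2 \tau_x + D\tau_y$, giving exactly the claimed formula for $\kappa$. I would then verify that matching the $y$-coefficient produces the same equation (so the ansatz is internally consistent and the symmetric linear Nash equilibrium indeed exists), and note uniqueness follows from the strict concavity of the expected utility in $\theta_i$. The main obstacle is not really conceptual but bookkeeping: keeping the coefficient-matching clean and spotting the identity $2n - 1 + (n-1)^2 = n^2$ so the expression collapses to the stated form.
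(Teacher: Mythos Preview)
Your proposal is correct and follows essentially the same route as the paper: assume the symmetric linear form, compute $\E_i[s]$ and $\E_i[x_j]=\E_i[s]$, substitute into the first-order condition of Lemma~\ref{lem:finite-foc}, and match the coefficient on $x_i$ to solve for $\kappa$. The paper's proof proceeds identically (it notes the two coefficients sum to one, so matching on $x_i$ suffices), and your explicit identification of the identity $2n-1+(n-1)^2=n^2$ is exactly the simplification the paper uses to collapse the denominator.
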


\mute{\begin{proof}
	
	Suppose that all agents play $\theta_i = \kappa_i + (1-\kappa)y$, and consider a representative agent $i$.
	
	Since agent $i$ is Bayesian, she aggregates the public and private signals according to their precisions to compute $$\E_i[s] = \frac{\tau_x x_i + \tau_y y}{\tau_x+\tau_y}$$ and her belief about any other agent $j$'s private signal $x_j$ is that $$\E_i[x_j] = \E_i[s].$$
	
	Supposing that each agent acts optimally and chooses $\theta^*_i$ according to \Cref{lem:finite-foc}, agent $i$ will write the following.  We let $\mathcal{C} = \frac{1}{\alpha (2n-1)+(n-1)^2}$ for clarity of notation.

	\begin{align*}
	\theta_i^*  &= \frac{\alpha n^2 \E_i[s]}{\alpha (2n-1)+(n-1)^2} + \frac{(1-\alpha)(n-1)\E_i\left[\sum\limits_{j\neq i}\theta_{j}\right]}{\alpha (2n-1)+(n-1)^2}\\
	&= \mathcal{C} \left(\alpha n^2\E_i[s] + (1-\alpha)(n-1)\E_i\left[\sum\limits_{j\neq i}\left(\kappa x_j + (1-\kappa)y\right)\right]\right)\\
	&= \mathcal{C} \left(\alpha n^2\E_i[s] + (1-\alpha)(n-1)\left[\sum\limits_{j\neq i}\left(\kappa \E_i[s] + (1-\kappa)y\right)\right]\right)\\
	&= \mathcal{C} \left(\alpha n^2\E_i[s] + (1-\alpha)(n-1)^2\left(\kappa \E_i[s] + (1-\kappa)y\right)\right)\\
	&= \mathcal{C} \left(\alpha n^2\frac{\tau_x x_i + \tau_y y}{\tau_x+\tau_y} + (1-\alpha)(n-1)^2\left(\kappa \frac{\tau_x x_i + \tau_y y}{\tau_x+\tau_y} + (1-\kappa)y\right)\right)\\
	&= \mathcal{C}\left(\alpha n^2 \frac{\tau_x}{\tau_x+\tau_y} +(1-\alpha)(n-1)^2\kappa\frac{\tau_x}{\tau_x+\tau_y} \right)  x_i +\\&\qquad + \mathcal{C}\left(   \alpha n^2 \frac{\tau_y}{\tau_x+\tau_y}+(1-\alpha)(n-1)^2 \left(  \kappa \frac{\tau_y}{\tau_x+\tau_y} + (1-\kappa) \right) \right) y
	\end{align*}
	
	We can then equate $\kappa$ and the coefficient on $x_i$ or the coefficient on $y$ to $(1-\kappa)$.  By construction, it's easy to see that the sum of these two coefficients will be one, so it suffices to perform the computation for $\kappa$.
	
	\begin{align*}
	\kappa &= \mathcal{C}\left(\alpha n^2 \frac{\tau_x}{\tau_x+\tau_y} +(1-\alpha)(n-1)^2\kappa\frac{\tau_x}{\tau_x+\tau_y} \right)\\
	&= \mathcal{C}\alpha n^2 \frac{\tau_x}{\tau_x+\tau_y} + \kappa\mathcal{C}(1-\alpha)(n-1)^2 \frac{\tau_x}{\tau_x+\tau_y}\\
	&= \frac{\mathcal{C}\alpha n^2 \frac{\tau_x}{\tau_x+\tau_y} }{1-\mathcal{C} (1-\alpha)(n-1)^2 \frac{\tau_x}{\tau_x+\tau_y}}
	\end{align*}
	
	Plugging the value for $\mathcal{C}$ back in and simplifying, we get
	
	$$\kappa = \frac{\alpha n^2 \tau_x}{\alpha n^2 \tau_x + \left((n-1)^2 +\alpha (2n-1)\right)\tau_y}$$

\end{proof}}

\begin{proposition}\label{cor:infintenash}
	The symmetric linear Nash equilibrium in the game with infinitely many players is given by 

\begin{align*}
\theta_i &= \kappa x_i + (1-\kappa)y \text{, where} \\
\kappa&= \frac{\tau_x \alpha}{\tau_x \alpha +\tau_y},\ \tau_x  = \frac{1}{\sigma_x^2},\text{ and } \tau_y =\frac{1}{\sigma_y^2}
\end{align*}
for all players  $i$.
This recovers the analogous result in \cite{morrisshin}.
\end{proposition}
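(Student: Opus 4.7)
The plan is to follow the same fixed-point strategy used for the finite-player lemma, but exploit the simpler first-order condition of Proposition \ref{lem:FOC}. I posit that every agent uses the candidate symmetric linear strategy $\theta_j = \kappa x_j + (1-\kappa)y$ for some common $\kappa$, compute agent $i$'s best response under that hypothesis, and solve for the value of $\kappa$ that makes the best response match the posited strategy.

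First I would compute the two relevant expectations. Since each agent $i$ starts from an improper uniform prior on $s$ and observes the Gaussian signals $x_i$ and $y$ with precisions $\tau_x$ and $\tau_y$, Bayesian updating gives the familiar precision-weighted posterior mean
\begin{align*}
\E_i[s] = \frac{\tau_x x_i + \tau_y y}{\tau_x + \tau_y}.
\end{align*}
Because the private signals $x_j$ of other agents are conditionally independent of $x_i$ given $s$ and have conditional mean $s$, agent $i$'s expectation of any $x_j$ (with $j\ne i$) is also $\E_i[s]$. Integrating over the continuum, and using that the candidate strategy is linear in $x_j$ and $y$,
\begin{align*}
\E_i[\bar{\theta}] = \int_0^1 \E_i[\kappa x_j + (1-\kappa)y]\,dj = \kappa \E_i[s] + (1-\kappa)y.
\end{align*}

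Next I would substitute these two expressions into the first-order condition $\theta_i^* = \alpha \E_i[s] + (1-\alpha)\E_i[\bar{\theta}]$ from Proposition \ref{lem:FOC}, giving
\begin{align*}
\theta_i^* = \bigl(\alpha + (1-\alpha)\kappa\bigr)\E_i[s] + (1-\alpha)(1-\kappa)y.
\end{align*}
Plugging in the posterior mean and collecting terms in $x_i$ and $y$ yields a coefficient on $x_i$ equal to $\bigl(\alpha + (1-\alpha)\kappa\bigr)\tfrac{\tau_x}{\tau_x+\tau_y}$. For this to constitute a symmetric linear Nash equilibrium, the coefficient on $x_i$ must equal the posited $\kappa$; by construction the two coefficients sum to one, so matching just this one is enough.

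Finally I would solve the fixed-point equation $\kappa = \bigl(\alpha + (1-\alpha)\kappa\bigr)\tfrac{\tau_x}{\tau_x+\tau_y}$ for $\kappa$. Clearing denominators gives $\kappa(\tau_x+\tau_y) = \alpha\tau_x + (1-\alpha)\kappa\tau_x$, which rearranges to $\kappa(\alpha\tau_x + \tau_y) = \alpha\tau_x$ and hence the claimed $\kappa = \tfrac{\alpha\tau_x}{\alpha\tau_x + \tau_y}$. There is no real obstacle here beyond the conceptual step that in the continuum limit agent $i$ cannot move $\bar\theta$, which is exactly what allows the first-order condition to simplify from the form in Lemma \ref{lem:finite-foc} to the form in Proposition \ref{lem:FOC}; alternatively, this expression for $\kappa$ can be recovered directly by taking $n\to\infty$ in Lemma \ref{lem:finitekappa}, which provides a useful sanity check.
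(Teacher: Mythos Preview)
Your proof is correct and follows essentially the same approach as the paper: posit the symmetric linear strategy, compute $\E_i[s]$ and $\E_i[\bar\theta]$, substitute into the continuum first-order condition from Proposition~\ref{lem:FOC}, and match the coefficient on $x_i$ to solve the fixed-point equation for $\kappa$. Your closing remark about recovering the result as the $n\to\infty$ limit of Lemma~\ref{lem:finitekappa} also mirrors an observation the paper makes.
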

\mute{\begin{proof}
Lemma \ref{lem:FOC} shows that agent $i$'s optimal action is a convex combination of his belief about the true state $s$ and his belief about the average action of all players $\bar{\theta}$. $\E_i[s]$ is independent of the equilibrium profile, and given by
\begin{align*}
\E_i[s] = \frac{\tau_x x_i +\tau_y y}{\tau_x +\tau_y},
\end{align*}
which is the standard prior-free aggregation of independent Gaussian signals.

On the other hand, $\E_i[\bar{\theta}]$ does depend on the equilibrium profile. In a SLNE, all other players $j\neq i$ play \begin{align*}
\theta_j = \kappa x_j +(1-\kappa) y.
\end{align*}
Since agent $i$ is Bayesian:
\begin{align*}
\E_i[x_j] = \E_i[s] 
\end{align*}
and
\begin{align*}
\E_i[\theta_j] = \E_i[\kappa x_j + (1-\kappa)y] = \kappa \E_i[x_j] + (1-\kappa)y.
\end{align*}
Notice that  the expectation of agent $i$ about the belief of any representative player $j\neq i$ is the same, since his information is symmetric. Thus, we write $\E_i[\theta_{-i}]$ to emphasize that this is the belief of player $i$ about any other player.
Thus:
\begin{align*}
\E_i[\bar{\theta}] = \E_i \int_0^1 \theta_j dj =  \int_0^1 \E_i [\theta_{-i}]  =\E_i[\theta_{-i}] =\kappa \E_i[x_{-i}] + (1-\kappa)y
\end{align*}
where we have used the Fubini-Tonelli theorem to exchange the integral with the expectation.

But now note that 
\begin{align*}
\theta_i^* &= \alpha \E_i[s] +(1-\alpha) [\kappa \E_i[s]+(1-\kappa) y] \\&=\alpha \frac{\tau_x x_i +\tau_y y}{\tau_x +\tau_y} +(1-\alpha) (\kappa \frac{\tau_x x_i +\tau_y y}{\tau_x +\tau_y}) +(1-\alpha)(1-\kappa) y
\end{align*}
We can rearrange this as:
\begin{align*}
\theta_i^* &= \left[\frac{\alpha \tau_x}{\tau_x+\tau_y} +(1-\alpha) \frac{\kappa \tau_x}{\tau_x +\tau_y}\right] x_i + \left[\frac{\tau_y \alpha}{\tau_x +\tau_y} +(1-\alpha) \frac{\kappa \tau_y}{\tau_x+\tau_y}+(1-\alpha)(1-\kappa)\right]y\\&=
\left[\frac{ \tau_x (\alpha  +(1-\alpha)\kappa)}{\tau_x +\tau_y} \right] x_i + \left[(1-\alpha)(1-\kappa) + \frac{\alpha \tau_y +(1-\alpha) \kappa \tau_y}{\tau_x +\tau_y}\right]y
\end{align*}
Matching coefficients and solving:
\begin{align*}
\left[(\alpha + (1-\alpha)\kappa)\frac{\tau_x}{\tau_x+\tau_y}\right] = \kappa \implies \kappa =\frac{\alpha\tau_x}{\alpha \tau_x +\tau_y}
\end{align*}
as desired. 
\end{proof}}
Notice that if $\alpha=1$, the agent is only rewarded based on his closeness to the state, and $\theta_i^*$ then dictates that agent $i$ play his best guess of the state. As $\alpha$ falls, the weight Agent $i$ places on his private signal diminishes, until at $\alpha=0$, the agent chooses $\theta_i^*$ to be equal to $y$, the public signal. 

We can now write the expected utility to each agent in this equilibrium:

\begin{lemma}[Expected utility]
	\label{lem:finite-eukbp}
	The expected utility of agents for playing $\theta_i=\kappa x_i+(1-\kappa)y$, given $s$ is:
	\begin{align*}
\E_i[u_i\vert s]= -\alpha (\kappa^2 \sigma^2_x + (1-\kappa)^2\sigma^2_y) -\frac{(1-\alpha)\kappa^2(n-1)}{n}\left.\sigma^2_x \right.
	\end{align*}
\end{lemma}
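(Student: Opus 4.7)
The plan is to substitute the symmetric linear strategy profile $\theta_j = \kappa x_j + (1-\kappa)y$ into the utility function $u_i = -\alpha(\theta_i-s)^2 - (1-\alpha)(\theta_i-\bar\theta)^2$ and then take expectations conditional on $s$, treating $s$ as fixed and $x_1,\ldots,x_n,y$ as independent Gaussians with means $s$ and variances $\sigma_x^2,\sigma_y^2$ respectively.

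First I would handle the guessing component. Writing $\theta_i - s = \kappa(x_i - s) + (1-\kappa)(y - s)$ and using independence of $x_i$ and $y$ given $s$, the cross term vanishes in expectation, leaving
\begin{align*}
\E\bigl[(\theta_i - s)^2 \bigm| s\bigr] = \kappa^2 \sigma_x^2 + (1-\kappa)^2 \sigma_y^2.
\end{align*}

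Next I would handle the coordination component, which is where the factor $\tfrac{n-1}{n}$ arises. Because every agent uses the same weight $\kappa$, the average action simplifies to $\bar\theta = \kappa \bar x + (1-\kappa) y$, where $\bar x = \tfrac{1}{n}\sum_j x_j$. The public-signal piece cancels, giving
\begin{align*}
\theta_i - \bar\theta = \kappa(x_i - \bar x).
\end{align*}
Setting $\epsilon_j = x_j - s$, the quantity $x_i - \bar x = \epsilon_i - \bar\epsilon$ is a linear combination of independent mean-zero Gaussians with variance $\sigma_x^2$. A direct variance computation (or equivalently using $\Var(\epsilon_i - \bar\epsilon) = \sigma_x^2 - 2\sigma_x^2/n + \sigma_x^2/n$) yields $\E[(x_i-\bar x)^2 \mid s] = \tfrac{n-1}{n}\sigma_x^2$, and hence
\begin{align*}
\E\bigl[(\theta_i - \bar\theta)^2 \bigm| s\bigr] = \kappa^2 \cdot \frac{n-1}{n}\sigma_x^2.
\end{align*}

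Combining the two pieces with their respective weights $\alpha$ and $1-\alpha$ gives exactly the claimed expression. I do not expect any serious obstacle here; the only substantive point is recognizing the $\tfrac{n-1}{n}$ correction coming from the correlation between $x_i$ and $\bar x$, which is precisely the finite-population effect that disappears in the continuum limit.
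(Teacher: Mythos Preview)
Your proposal is correct and follows essentially the same route as the paper: substitute the linear strategy, write the signals as $s$ plus mean-zero noise, observe that the public-signal part cancels in $\theta_i-\bar\theta$, and compute the variance of $x_i-\bar x$ to obtain the $\tfrac{n-1}{n}$ factor. The only cosmetic difference is that the paper separates $\varepsilon_{x_i}-\bar\varepsilon$ into $\tfrac{n-1}{n}\varepsilon_{x_i}-\tfrac{1}{n}\sum_{j\neq i}\varepsilon_{x_j}$ and sums independent variances, whereas you use the equivalent $\Var(\epsilon_i)-2\operatorname{Cov}(\epsilon_i,\bar\epsilon)+\Var(\bar\epsilon)$ expansion.
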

\mute{\begin{proof}
	We can write the expected utility of agents as
	\begin{align*}
	\E_i[u_i\vert s] = -\alpha \E_i[(\theta_i-s)^2\vert s] -(1-\alpha)\E_i\left[\left.\left(\theta_i-\bar{\theta}\right)^2\right\vert s\right]
	\end{align*}
	
	We can plug in the equilibrium strategies and recall the definition of $\bar{\theta}$ to write
	
	\begin{align*}
	\E_i[u_i\vert s]&= -\alpha \E_i[(\theta_i-s)^2\vert s]  -(1-\alpha)\E_i\left[\left.\left(\theta_i-\frac{1}{n}\sum\limits_{j=1}^n \theta_j\right)^2\right\vert s\right]\\
	\end{align*}
	
	We can write each $\theta_j=\kappa x_j + (1-\kappa)y$ and since we are taking expectations conditional on knowing $s$, we can write the signals $x_j$ and $y$ as the state $s$ plus mean-zero Gaussian noise with variance $\sigma^2_x$ and $\sigma^2_y$, respectively.  We write these as $s+\varepsilon_{x_j}$ and $s+\varepsilon_y$, so we can write $\theta_j=s+\kappa \varepsilon_{x_j} + (1-\kappa)\varepsilon_y$.
	
	\begin{align*}
	\E_i[u_i\vert s]&= -\alpha \E_i[(s+\kappa \varepsilon_{x_i} + (1-\kappa)\varepsilon_y-s)^2\vert s] \\ &\qquad-(1-\alpha)\E_i\left[\left.\left(s+\kappa \varepsilon_{x_i} + (1-\kappa)\varepsilon_y-\frac{1}{n}\sum\limits_{j=1}^n s+\kappa \varepsilon_{x_j} + (1-\kappa)\varepsilon_y\right)^2\right\vert s\right]\\
	&= -\alpha \E_i[(\kappa \varepsilon_{x_i} + (1-\kappa)\varepsilon_y)^2] -(1-\alpha)\E_i\left[\left.\left(\kappa \varepsilon_{x_i} + -\frac{1}{n}\sum\limits_{j=1}^n \kappa \varepsilon_{x_j} \right)^2\right.\right]\\
	&= -\alpha \E_i[(\kappa \varepsilon_{x_i} + (1-\kappa)\varepsilon_y)^2] -(1-\alpha)\kappa^2\E_i\left[\left.\left( \varepsilon_{x_i} + -\frac{1}{n}\sum\limits_{j=1}^n \varepsilon_{x_j} \right)^2\right.\right]\\
	&= -\alpha \E_i[(\kappa \varepsilon_{x_i} + (1-\kappa)\varepsilon_y)^2] -(1-\alpha)\kappa^2\E_i\left[\left.\left(\frac{n-1}{n} \varepsilon_{x_i} + -\frac{1}{n}\sum\limits_{j\neq i}^n \varepsilon_{x_j} \right)^2\right.\right]\\
	&= -\alpha \E_i[(\kappa \varepsilon_{x_i} + (1-\kappa)\varepsilon_y)^2] -\frac{(1-\alpha)\kappa^2}{n^2}\E_i\left[\left.\left((n-1) \varepsilon_{x_i} + -\sum\limits_{j\neq i}^n \varepsilon_{x_j} \right)^2\right.\right]\\
	\end{align*}
	
	Because all of the $\varepsilon_{x_j}$ and $\varepsilon_y$ are independent with mean zero, we can write
	
	\begin{align*}
	\E_i[u_i\vert s] &= -\alpha (\kappa^2 \sigma^2_x + (1-\kappa)^2\sigma^2_y) -\frac{(1-\alpha)\kappa^2}{n^2}\left.\left( (n-1)^2\sigma^2_x + (n-1)\sigma^2_x \right)^2\right.\\
	&= -\alpha (\kappa^2 \sigma^2_x + (1-\kappa)^2\sigma^2_y) -\frac{(1-\alpha)\kappa^2(n-1)}{n}\left.\sigma^2_x \right.\\
	\end{align*}

\end{proof}

}

\begin{proposition}
	\label{lem:eukbp}
	The expected utility of agents for playing $\theta_i=\kappa x_i+(1-\kappa)y$, conditional on $s$ in the game with infinitely many players,  is:
	\begin{align*}
	\E_i[u_i\vert s] = -\alpha(1-\kappa)^2\sigma^2_y - \kappa^2\sigma^2_x
	\end{align*}
\end{proposition}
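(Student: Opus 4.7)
The plan is to write the expected utility as
\[
\E_i[u_i\vert s] = -\alpha\,\E_i[(\theta_i-s)^2\vert s] \;-\; (1-\alpha)\,\E_i[(\theta_i-\bar{\theta})^2\vert s],
\]
plug in the symmetric linear equilibrium strategy $\theta_j = \kappa x_j + (1-\kappa)y$ from \Cref{cor:infintenash} for every agent, and then exploit the fact that, conditional on $s$, every signal can be decomposed as $x_j = s + \varepsilon_{x_j}$ and $y = s + \varepsilon_y$ with mutually independent, mean-zero Gaussian errors of variance $\sigma_x^2$ and $\sigma_y^2$ respectively. Thus $\theta_j - s = \kappa \varepsilon_{x_j} + (1-\kappa)\varepsilon_y$, which makes the squared-deviation terms easy to evaluate.

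The main conceptual step, and the place where the infinite-agent case genuinely simplifies the finite computation, is handling $\bar{\theta}$. In the continuum we have
\[
\bar{\theta} \;=\; \int_0^1 \theta_j \, dj \;=\; s + (1-\kappa)\varepsilon_y + \kappa\int_0^1 \varepsilon_{x_j}\, dj.
\]
By the law of large numbers (or, formally, the standard continuum-of-i.i.d.-random-variables convention used in \cite{morrisshin}), the integral of the i.i.d.\ private noise terms vanishes almost surely, so $\bar{\theta} = s + (1-\kappa)\varepsilon_y$ and hence $\theta_i - \bar{\theta} = \kappa \varepsilon_{x_i}$. This is the key difference from the finite-$n$ lemma, where agent $i$'s own noise contributes a $\tfrac{n-1}{n}$ weighted residual to the coordination term.

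Given this, the guessing component contributes $\E_i[(\theta_i - s)^2\vert s] = \kappa^2 \sigma_x^2 + (1-\kappa)^2 \sigma_y^2$ by independence of $\varepsilon_{x_i}$ and $\varepsilon_y$, while the coordination component contributes $\E_i[(\theta_i - \bar{\theta})^2\vert s] = \kappa^2 \sigma_x^2$. Substituting and collecting the $\kappa^2 \sigma_x^2$ terms across the $\alpha$ and $(1-\alpha)$ weights yields
\[
\E_i[u_i\vert s] = -\alpha\bigl(\kappa^2 \sigma_x^2 + (1-\kappa)^2 \sigma_y^2\bigr) - (1-\alpha)\kappa^2 \sigma_x^2 = -\kappa^2 \sigma_x^2 - \alpha(1-\kappa)^2 \sigma_y^2,
\]
which is the desired expression. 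As a sanity check, this coincides with the $n\to\infty$ limit of the formula in \Cref{lem:finite-eukbp}, since $\tfrac{(1-\alpha)\kappa^2(n-1)}{n}\sigma_x^2 \to (1-\alpha)\kappa^2\sigma_x^2$, confirming that no obstacle beyond the standard continuum-averaging step arises.
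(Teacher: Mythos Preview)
Your proposal is correct and follows essentially the same approach as the paper: decompose the utility into the guessing and coordination components, write each $\theta_j$ as $s+\kappa\varepsilon_{x_j}+(1-\kappa)\varepsilon_y$, use the continuum law of large numbers to conclude that $\int_0^1 \varepsilon_{x_j}\,dj=0$ so that $\theta_i-\bar\theta=\kappa\varepsilon_{x_i}$, and then evaluate the two expectations by independence. The only addition beyond the paper is your sanity check against the $n\to\infty$ limit of \Cref{lem:finite-eukbp}, which is a nice touch but does not change the argument.
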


\mute{\begin{proof}
	We can write the expected utility of agents as
	\begin{align*}
	\E_i[u_i\vert s] = -\alpha \E_i[(\theta_i-s)^2\vert s] -(1-\alpha)\E_i\left[\left.\left(\theta_i-\bar{\theta}\right)^2\right\vert s\right]
	\end{align*}
	
	We can plug in the equilibrium strategies and recall the definition of $\bar{\theta}$ to write
	
	\begin{align*}
	\E_i[u_i\vert s]&= -\alpha \E_i[(\theta_i-s)^2\vert s]  -(1-\alpha)\E_i\left[\left.\left(\theta_i- \int_0^1 \theta_j dj\right)^2\right\vert s\right]\\
	\end{align*}
	
	We can write each $\theta_j=\kappa x_j + (1-\kappa)y$ and since we are taking expectations conditional on knowing $s$, we can write the signals $x_j$ and $y$ as the state $s$ plus mean-zero Gaussian noise with variance $\sigma^2_x$ and $\sigma^2_y$, respectively.  We write these as $s+\varepsilon_{x_j}$ and $s+\varepsilon_y$, so we can write $\theta_j=s+\kappa \varepsilon_{x_j} + (1-\kappa)\varepsilon_y$.
	
	\begin{align*}
	\E_i[u_i\vert s]&= -\alpha \E_i[(s+\kappa \varepsilon_{x_i} + (1-\kappa)\varepsilon_y-s)^2\vert s] \\ &\qquad-(1-\alpha)\E_i\left[\left.\left(s+\kappa \varepsilon_{x_i} + (1-\kappa)\varepsilon_y- \int_0^1 \left(s+\kappa \varepsilon_{x_j} + (1-\kappa)\varepsilon_y dj\right)\right)^2\right\vert s\right]\\
	&= -\alpha \E_i[(\kappa \varepsilon_{x_i} + (1-\kappa)\varepsilon_y)^2] -(1-\alpha)\E_i\left[\left.\left(\kappa \varepsilon_{x_i} + \int_0^1 \left(\kappa \varepsilon_{x_j} dj\right)\right)^2\right.\right]\\
\end{align*}

Because $\E[\varepsilon_{x_j}]=0$, $\int_0^1 \left(\kappa \varepsilon_{x_j} dj\right)=0$.  Furthermore, since all of the $\varepsilon_{x_j}$ and $\varepsilon_y$ are independent we have

\begin{align*}
	\E_i[u_i\vert s]&= -\alpha \E_i[(\kappa \varepsilon_{x_i} + (1-\kappa)\varepsilon_y)^2] -(1-\alpha)\E_i\left[\left.\left(\kappa \varepsilon_{x_i} \right)^2\right.\right]\\
	&= -\alpha (\kappa^2\sigma^2_x + (1-\kappa)^2\sigma^2_y) -(1-\alpha)\kappa^2\sigma^2_x\\
	&= -\alpha(1-\kappa)^2\sigma^2_y - \kappa^2\sigma^2_x
	\end{align*}

\end{proof}}

Using the value of $\kappa$ from the infinite setting for simplicity\footnote{For a fixed $n$, the infinite version of $\kappa$ differs from the finite version by a multiplicative factor of $1-\frac{1}{n^2}$ while the utilities differ by a factor of $1-\frac{1}{n}$.  Therefore, the value of $\kappa$ in the infinite case is very close to that in the finite case for modest values of $n$, and any countervailing effect of changing the parameters through $\kappa$ will be dominated by the direct effects on the utility.} we can now examine how utility changes as the variances of the signals and number of agents do.  We briefly analyze this next.

\begin{corollary}[Comparative statics]
	\label{cor:compstat}

	\begin{align*}
	\frac{\partial\E_i[u_i\vert s]}{\partial \sigma^2_x} = -\frac{(\alpha\sigma^2_y)^2}{(\alpha\sigma^2_y + \sigma^2_x)^3}\left( (2-\alpha)\alpha^2 \sigma^2_x \sigma^2_y + \sigma^2_y - \frac{n-1}{n}(1-\alpha)(\alpha\sigma^2_y - \sigma^2_x)   \right),
	\end{align*}

\begin{align*}
\frac{\partial\E_i[u_i\vert s]}{\partial \sigma^2_y} = -\frac{\alpha (\sigma^2_x)^2}{(\alpha\sigma^2_y + \sigma^2_x)^3} \left(2\alpha^2 \sigma^2 _y - \alpha\sigma^2_y + \sigma^2_x + \frac{n-1}{n} 2\alpha(1-\alpha)\sigma^2_y\right),
\end{align*}
and
\begin{align*}
\frac{\partial\E_i[u_i\vert s]}{\partial n}= -\frac{(1-\alpha)\alpha^2 \sigma^2_x (\sigma^2_y)^2}{n^2(\alpha\sigma^2_y + \sigma^2_x)^2}.
\end{align*}
	
\end{corollary}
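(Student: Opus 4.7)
The plan is to obtain these three partial derivatives by a direct substitution-and-differentiate calculation, using the closed-form expected utility from \Cref{lem:finite-eukbp} together with the infinite-game value of $\kappa$ from \Cref{cor:infintenash} (as the footnote specifies this is the version of $\kappa$ we use here). Concretely, I would start by rewriting $\kappa$ and $1-\kappa$ in the form most convenient for differentiation:
\begin{align*}
\kappa = \frac{\alpha \sigma_y^2}{\alpha \sigma_y^2 + \sigma_x^2}, \qquad 1 - \kappa = \frac{\sigma_x^2}{\alpha \sigma_y^2 + \sigma_x^2},
\end{align*}
so that $\kappa^2 \sigma_x^2 = \frac{\alpha^2 \sigma_y^4 \sigma_x^2}{(\alpha \sigma_y^2 + \sigma_x^2)^2}$ and $(1-\kappa)^2 \sigma_y^2 = \frac{\sigma_x^4 \sigma_y^2}{(\alpha \sigma_y^2 + \sigma_x^2)^2}$. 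Substituting these into the expression for $\E_i[u_i\vert s]$ from \Cref{lem:finite-eukbp} yields a closed-form rational function of $\alpha,\sigma_x^2,\sigma_y^2,n$, which is the object we then differentiate.

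The derivative in $n$ is essentially immediate: since (in the infinite-game form) $\kappa$ has no $n$-dependence, only the coordination term $-\frac{(1-\alpha)\kappa^2(n-1)}{n}\sigma_x^2$ contributes, and $\frac{\partial}{\partial n}\bigl(1 - \tfrac{1}{n}\bigr) = \frac{1}{n^2}$, giving $-\frac{(1-\alpha)\kappa^2 \sigma_x^2}{n^2}$; substituting $\kappa^2 = \frac{\alpha^2 \sigma_y^4}{(\alpha\sigma_y^2+\sigma_x^2)^2}$ produces the stated formula verbatim.

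For the two signal-variance derivatives, I would compute
\begin{align*}
\frac{\partial \kappa}{\partial \sigma_x^2} = -\frac{\alpha \sigma_y^2}{(\alpha \sigma_y^2 + \sigma_x^2)^2}, \qquad \frac{\partial \kappa}{\partial \sigma_y^2} = \frac{\alpha \sigma_x^2}{(\alpha \sigma_y^2 + \sigma_x^2)^2},
\end{align*}
and apply the product and chain rules term-by-term to the guessing part $-\alpha(\kappa^2 \sigma_x^2 + (1-\kappa)^2 \sigma_y^2)$ and the coordination part $-\frac{(1-\alpha)\kappa^2(n-1)}{n}\sigma_x^2$ separately. Each resulting expression naturally acquires a common denominator of $(\alpha \sigma_y^2 + \sigma_x^2)^3$, at which point one collects numerator terms and factors out the $\alpha$ powers (a factor of $(\alpha\sigma_y^2)^2$ for the $\sigma_x^2$ derivative and $\alpha(\sigma_x^2)^2$ for the $\sigma_y^2$ derivative), leaving a polynomial-in-$\alpha$ remainder that matches the bracketed factors in the statement.

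The main obstacle is purely algebraic: the $\sigma_x^2$ and $\sigma_y^2$ derivatives involve both the explicit variance factors in the utility and the implicit dependence through $\kappa$, so after expanding one obtains several rational terms with different denominator powers that must be put over the common $(\alpha\sigma_y^2+\sigma_x^2)^3$ denominator and combined carefully. The cleanest bookkeeping, which I would adopt, is to handle the two summands of $\E_i[u_i\vert s]$ independently, compute each derivative in unsimplified form, then combine and factor — the $\frac{n-1}{n}$ coefficient in the statement is the diagnostic that confirms the coordination-term contribution has been merged correctly.
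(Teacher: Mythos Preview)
Your proposal is correct and is exactly the approach the paper implicitly takes: the corollary is stated without proof because it follows by directly substituting the infinite-game $\kappa$ into the expected-utility formula of \Cref{lem:finite-eukbp} and differentiating, precisely as you outline. The paper offers no alternative argument to compare against, and your bookkeeping plan (handling the guessing and coordination terms separately, then putting everything over the common denominator $(\alpha\sigma_y^2+\sigma_x^2)^3$) is the natural way to organize the algebra.
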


All else fixed, decreasing  $\sigma^2_x$ unambiguously increases utility.  The fractional term in $\frac{\partial\E_i[u_i\vert s]}{\partial \sigma^2_x}$ is always positive and the term in the parentheses is also always positive, so $\frac{\partial\E_i[u_i\vert s]}{\partial \sigma^2_x}$ is strictly decreasing as $\sigma^2_x$ increases.  This aligns with the intuition that higher quality information means that agents will be able to both guess the true state of the world more precisely as well as coordinate better with one another.  A similar intuition holds for the effect of decreasing $\sigma^2_y$, since the term in the parentheses is always positive when $n\geq 2$.  However, the \textit{rate} at which utility increases when the signal variances decrease is not the same.  Supposing $\sigma^2_x=\sigma^2_y$, agents gain more value from decreasing $\sigma^2_y$ than $\sigma^2_x$ for two reasons.  The first is the over-weighting of $y$.  Since agents more heavily weight the public signal $y$ compared to their private information $x_i$, a small improvement in the precision of $y$ will help agents choose an action $\theta_i$ closer to $s$ than an identical improvement in the precision of the $x_i$ signals.  The second reason can be thought of as a second-order effect of this over-weighting.  Since agents over-weight it, improving the quality of the public signal $y$ even further increases the weight agents will place on it.  Therefore, not only will agents actions be closer (in expectation) to $s$, but they will be closer to \textit{each other}, thus also improving the utility in the coordination component.

As $n$ increases, utility decreases.  There is an $\tfrac{n-1}{n}$ coefficient on the second term in the utility function, and as $n$ grows, this grows towards 1.  We can think of this term as measuring the amount of non-impact any one agent can have on the average action $\bar{\theta}$.  When the number of agents is very few, agent $i$'s action $\theta_i$ can't be too far from the average since the construction of $\bar{\theta}$ will have a high weight on  $\theta_i$, and this weight decreases as $n$ grows.  For very large values of $n$, agent $i$ has a small impact on the average action, so the risk of being far away is greater.

\subsection{Privacy-Awareness}
Now we come to the heart of the privacy issue. Suppose that at the end of the game, each player's action $\theta_i$ is made public. Then, upon seeing $\theta_i$, player $j$ can simply write $$x_i = \frac{\theta_i-(1-\kappa)y}{\kappa},$$ and since everything on the right hand side is known to player $j$, she can learn player $i$'s private signal $x_i$ with perfect precision. This suggests that if the players care at all about preserving the privacy of their private signals, they ought not to play exactly the equilibrium prediction $\theta_i$. 

As written, privacy is not in the players' utility functions, and it is too much to ask for a prediction that incorporates privacy without actually incorporating privacy into the utility function. To overcome this, we let $\rho(\noisy_i)$ denote, abstractly, a measure which describes how `private' the (possibly randomized) action $\noisy_i$ is. For example, $\rho$ could represent the \textit{maximum precision (i.e. the reciprocal of the variance)\footnote{Taken as a worst-case over all $j\neq i$.} to which some player $j$ can infer the private signal $x_i$ of player $i$} after observing his action and given her information set $\mathcal{I}_j$ and knowledge of the equilibrium strategies.  Here, we consider both this precision measure of privacy as well as an \textit{entropy} measure, where $\rho(\noisy_i)$ is the (information-theoretic) entropy of this inference, rather than the variance.

It is clear that any equilibrium which prescribes a deterministic mapping from the available information to an action $\theta^*_i$ cannot offer any measure of privacy to the players.  To correct for this, we enhance players' utility functions to incorporate the value they gain from obscuring their signals and extend the equilibrium concept to a \textit{noisy} one, where players select not only a guess about the state, but also a \textit{noise-generating distribution}.
Formally, we extend the utility function to 
$$v_i(\noisy_i,\noisy_{-i}) = (1-\beta)u_i(\noisy_i,\noisy_{-i}) + \beta\rho(\noisy_i),$$ 
where $u_i$ is as before and $\beta\in[0,1]$ denotes the agents' \textit{relative value for obfuscation}.

We can observe that if $\beta\neq0$, the equilibrium actions in the original game do not support an equilibrium in this game, since upon observing $\theta^*_i$, any player $j$ can exactly recover the private signal $x_i$, and player $j$'s `distribution' over the possible values of $x_i$ is degenerate with variance or entropy equal to zero. Concretely, in the precision setting, $\beta\rho(\noisy_i)$ is $-\infty$ and $v_i(\theta_i^*,\noisy_{-i})=-\infty$. Conversely, if player $i$ were to choose $\kappa'=0$ (and therefore $\noisy_i=y$), her utility would be finite, thus demonstrating a profitable deviation.

 Thus, any deterministic equilibrium cannot be supported if privacy is incorporated into the utility function, and some degree of randomization is necessary; however, the next section will show that the prediction of the deterministic optimum will serve as the core deterministic component to an optimal randomized strategy.

We remark here that if agents do not care about others learning their signal at all (i.e. $\beta=0$), the utility function degenerates to that of the original game. If $\beta=1$, then players \textit{only} care about protecting their private signal, and a dominant strategy is to take the action which maximizes the obfuscation of the private signal.

%!TEX root = ./main_beliefs_priv.tex
\section{The Extended Game}
\label{sec:extend}
We now formalize this modified game and show that it still has a symmetric linear Nash equilibrium when we consider \emph{randomized} strategies. Here, optimal actions are of the form $\noisy_i = \theta_i + \eta_i$, where $\theta_i$ is the same linear aggregation of the public and private signals as in the original game and $\eta_i$ is independent noise drawn from a distribution whose form depends on how we measure the obfuscation component $\rho(\noisy_i)$.  That is to say, there is a symmetric linear Nash equilibrium in which players' optimal action is a noisy modification of their optimal action in the original game. Note that randomized strategies are \emph{not} mixed strategies in the classical sense.  Each player's choice of a noise-generating distribution can be thought of as committing to a collection of parameters which describe that distribution.  In this sense, our equilibrium concept is a `pure strategies' one, since each player will commit to a single collection of parameters, not a distribution over such collections.
\subsection{Defining the Extended Game}
In the extended game, $s$, $x_i$, $\sigma^2_x$, $y$, $\sigma^2_y$, and $\bar{\theta}$ are as before. Each player chooses an action $\noisy_i$, and receives utility 
$$v_i(\noisy_i) = (1-\beta)u_i(\noisy_i) + \beta\rho(\noisy_i),$$
 where $\beta\in[0,1]$, and $u_i(\noisy_i,\noisy_{-i}) = -(1-\alpha)(\noisy_i - \bar{\noisy})^2 - \alpha (\noisy_i - s)^2$ as before. We write $\bar{\noisy}$ to represent the average (noisy) action of the players.
After each player announces $\noisy_i$, players also learn the function each player used to select $\noisy_i$ as a function of $\mathcal{I}_i$.  For example, if a player's strategy is to play $x_i$ plus some random noise, player $j$ learns the distribution from which this noise was drawn, though (crucially) not the \textit{realization} of this draw.
We denote the \emph{privacy loss} of each player after actions are revealed $\rho(\noisy_i)$. The two possibilities we consider for $\rho$ are both functions of the belief distribution that a Bayesian agent, given their information set, observation of $\noisy_i$, and knowledge of equilibrium strategies and  the randomization mechanism used by player $i$,  assigns to $x_i$; one is the \emph{precision} of these beliefs, and the other is the \emph{entropy}. Formally,

\begin{definition}[Precision privacy loss in equilibrium]
	For a given equilibrium profile, Player $i$'s \emph{precision} privacy loss is the expected precision with which a Bayesian opponent, knowing the equilibrium profile and observing their own signal, the public signal, and Player $i$'s action $\noisy_i$, can estimate Player $i$'s signal $x_i$. That is, if $\gamma$ represents the belief distribution about $x_i$, we define:
	\begin{align*}
	 \rho_{\text{prec}}(\noisy_i) =  \frac{1}{\Var\gamma(x_i\vert s, \noisy_i,\mathcal{I}_j,H) }
	\end{align*}
\end{definition}

\begin{definition}[Entropy privacy loss in equilibrium]
	For a given equilibrium profile, Player $i$'s \emph{entropy} privacy loss s the expected precision with which a Bayesian opponent, knowing the equilibrium profile and observing their own signal, the public signal, and Player $i$'s action $\noisy_i$, can estimate Player $i$'s signal $x_i$. That is, if $\gamma$ represents the belief distribution about $x_i$, we define:
	\begin{align*}
	\rho_{\text{ent}}(\noisy_i) =  -\int \gamma(x_i\vert s, \noisy_i,\mathcal{I}_j,H) \log \gamma(x_i\vert s, \noisy_i,\mathcal{I}_j,H) d\noisy_i
	\end{align*}
\end{definition}

Before we move on, it is worth considering kinds of distributions these choices consider to be `private'. For example, considering the following two belief distributions:
\begin{align*}
\gamma_1 \ : \ x_j = U([-\epsilon,\epsilon])\qquad 
\gamma_2 \ : \ \begin{cases*}
M & with probability $\delta$ \\
-\epsilon        & otherwise
\end{cases*}
\end{align*}

By choosing $M$ large and $\epsilon$ small, the appropriate choice of $\delta$ can be made to force $\gamma_2$ to have arbitrarily large variance (small precision), but extremely low entropy. On the other hand, a uniform distribution $\gamma_1$ has relatively high entropy as compared to its variance.  If agents measure their privacy by precision, they will be indifferent between adding noise from a fairly narrow uniform distribution and a discrete distribution which almost always adds a very small amount of noise but rarely adds an enormous amount.  However, if they measure their privacy with respect to entropy, the uniform option offers a much greater amount of privacy, and since (as we will show in \Cref{lem:sep}) agents pay a price in their utility equal to the variance of the noise-generating distribution, for a fixed variance the higher entropy uniform distribution will be strictly preferred to the discrete one.

We aim to show the following theorem:
\begin{theorem}
	\label{thm:main-thm}
	Consider the modified game as defined above.  Suppose players' actions $\noisy_i$ and the (possibly randomized) mapping $(y,x_i)\mapsto \noisy_i$ are revealed at the end of the game, and players' loss of privacy $\rho(\noisy_i)$ is measured as either the reciprocal of the variance 
	 or the entropy of a representative player $j$'s posterior belief about $x_i$ at the end of the game.  Then, this game has a symmetric linear Nash equilibrium where each player chooses $\noisy_i = \kappa x_i + (1-\kappa)y + \eta_i$ where $\kappa$ is as it is in the original game and $\eta_i$ is a random variable drawn from a distribution whose form depends on whether we use reciprocal variance or entropy to measure privacy.
\end{theorem}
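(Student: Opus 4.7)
The plan is to verify the proposed profile as a symmetric Nash equilibrium by a best-response computation. I conjecture that every player $j \neq i$ plays $\noisy_j = \kappa x_j + (1-\kappa) y + \eta_j$, with $\kappa$ the coefficient from the original game and $\eta_j$ i.i.d.\ mean-zero draws from a common distribution $F$ to be determined, and then solve for player $i$'s best response over actions of the form $\noisy_i = \kappa' x_i + (1-\kappa') y + \eta_i'$ for arbitrary $\kappa'$ and an independent mean-zero $\eta_i'$ with arbitrary distribution $F'$. The symmetric Nash check then reduces to showing the unique best response is $(\kappa' , F') = (\kappa, F)$.

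The first key step is a separation argument for the non-privacy component of the utility. Substituting the conjectured strategies into $\E_i[u_i \mid s]$ and using independence of $\eta_i'$ from the signals $x_i, y, s$ and from the other players' noise, every cross-term between the deterministic part and $\eta_i'$ vanishes in expectation, yielding
\[
\E_i[u_i \mid s] \;=\; \E_i[u_i^{\mathrm{det}}(\kappa', \theta_{-i}^{*}) \mid s] \;-\; C_n(\alpha)\,\Var(\eta_i') \;+\; (\text{terms independent of player $i$'s choice}),
\]
where $C_n(\alpha) = \alpha + (1-\alpha)\bigl(\tfrac{n-1}{n}\bigr)^2$ and $u_i^{\mathrm{det}}$ denotes the original game's utility evaluated at the purely linear action. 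The $\kappa'$-dependence lives only in the first term and the $F'$-dependence only in $\Var(\eta_i')$ plus the privacy term $\rho(\noisy_i)$ (which, for a Bayesian observer who strips off the known $(1-\kappa)y$ piece, depends on $F'$ but not on $\kappa'$). Hence the first-order condition in $\kappa'$ reproduces \Cref{lem:finite-foc} exactly and pins down $\kappa' = \kappa$ regardless of the choice of $F'$; the deterministic and noise components can be optimized independently.

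The second key step is to optimize over the noise distribution $F'$ in each of the two privacy regimes. A Bayesian observer $j$ has a Gaussian prior over $x_i$ induced by her Bayesian update about $s$, and upon seeing $\noisy_i$ computes a posterior via Bayes' rule applied to the additive noise model $\noisy_i - (1-\kappa)y = \kappa x_i + \eta_i'$. If $F' = \mathcal{N}(0, \sigma_\eta^2)$, conjugacy yields a Gaussian posterior whose precision equals $\tau_{\mathrm{prior}} + \kappa^2/\sigma_\eta^2$, so both $\rho_{\mathrm{prec}}$ and $\rho_{\mathrm{ent}}$ reduce to explicit functions of $\sigma_\eta^2$. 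The remaining scalar trade-off $\beta \rho(\noisy_i) - (1-\beta) C_n(\alpha)\, \sigma_\eta^2$ has a unique interior maximizer in each case, obtained from a first-order condition whose functional form differs between the reciprocal-variance and entropy measures; substituting this $\sigma_\eta^2$ back closes the fixed point.

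The main obstacle will be distributional optimality: proving that, among all $F'$ with a given second moment, Gaussian noise maximizes the privacy reward. For $\rho_{\mathrm{prec}}$, one must show that Gaussian noise minimizes the Bayesian posterior precision about a Gaussian $x_i$ subject to a noise-variance constraint, which follows from a Fisher-information / Cramér--Rao argument applied to the likelihood $\noisy_i \mid x_i$. For $\rho_{\mathrm{ent}}$, the analogous claim is the classical fact that Gaussian noise minimizes $I(x_i; \noisy_i)$ for a Gaussian input at fixed noise variance (equivalently, maximizes posterior entropy), which is a direct consequence of the entropy power inequality. Once Gaussianity of $F'$ is established, the separation computation above shows that the proposed profile is a mutual best response, and symmetry delivers the symmetric linear Nash equilibrium claimed in the theorem.
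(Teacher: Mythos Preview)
Your architecture---conjecture the symmetric profile, exploit additive separability of $\E_i[u_i]$ into a deterministic part (pinning down $\kappa'=\kappa$) and a variance penalty, then optimize the noise distribution against the privacy reward---is the same skeleton the paper uses (their Claim~\ref{clm:meanzero}, Lemma~\ref{lem:finite-sep}, Lemma~\ref{lem:finite-ext-foc}). Where you diverge is in the treatment of the privacy term itself, and this divergence has consequences.

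The paper does \emph{not} carry out the Bayesian posterior computation you sketch. It simply sets $\rho_{\mathrm{prec}}(\nu)=-1/\nu$ and $\rho_{\mathrm{ent}}(\nu)=\tfrac12\log(2\pi e\nu)$ as functions of the noise variance alone, then uses the elementary max-entropy characterization of the Gaussian (their Fact~\ref{fact:maxent}) for the entropy case. Two things follow. First, for the \emph{precision} measure the paper concludes (Lemma~\ref{lem:slnne-var}) that \emph{any} mean-zero distribution with variance $\nu^*_{\mathrm{prec}}$ supports the equilibrium---Gaussianity is not required and no Cram\'er--Rao argument is needed. Second, for the \emph{entropy} measure, Fact~\ref{fact:maxent} does the work that you assign to the entropy power inequality, and is much lighter. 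So your proposed machinery (Fisher information, EPI) is considerably heavier than what the paper actually invokes, and in the precision case it leads you to a narrower conclusion than the paper states.

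There is also an internal tension in your plan. You assert that $\rho(\noisy_i)$ ``depends on $F'$ but not on $\kappa'$,'' which is what lets you decouple the two optimizations. But your own posterior-precision formula $\tau_{\mathrm{prior}}+\kappa^2/\sigma_\eta^2$ plainly carries the deviator's weight (it should read $(\kappa')^2/\sigma_\eta^2$ once the revealed mapping is used to strip off $(1-\kappa')y$), so under your Bayesian computation the privacy term \emph{does} depend on $\kappa'$ and the clean separation is lost. The paper sidesteps this entirely by modeling $\rho$ as a function of $\nu$ only; if you insist on the full posterior calculation, you would need either to justify ignoring the $\kappa'$ channel or to redo the fixed-point argument with the coupling present, which is not what the theorem (with $\kappa$ unchanged from the original game) asserts.
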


\begin{proof}[Proof sketch]
	We prove this theorem over the course of several steps throughout the remainder of this section, which contains the formal statements of the following:
	
	\begin{enumerate}
		\item First, we show (\Cref{clm:meanzero}) that if players do indeed choose their action by adding noise to their privacy-unaware equilibrium action, then the distribution which generates the noise must have mean zero.
		\item Using this, we prove (\Cref{lem:finite-sep}) that the utility function \textit{separates} into the sum of three parts: the utility in the original game, a penalty paid in the variance of the noise distribution, and the privacy component.
		\item Then, since the utility function separates additively, we can use the first-order conditions (\Cref{lem:finite-foc}) to solve for the optimal choice of distribution from which to draw the noise (\Cref{cor:finiteparams}), which depends on whether the loss of privacy is measured by reciprocal variance or negative entropy. 
		\item Finally, we demonstrate that this strategy profile supports a Nash equilibrium by arguing that there is no profitable deviation for any player in the game (\Cref{lem:slnne-ent,lem:slnne-var}).  We call this a \textit{symmetric noisy linear Nash equilibrium}.
	\end{enumerate}

\end{proof}

We first define a  Noisy Strategy:
\begin{definition}[(Linear) $H$-noisy strategy]

	An  $H$-noisy strategy has the form
	\begin{align*}
	\noisy(x,y,\nu) = \theta_i(x,y) + \eta_i
	\end{align*}
	where $\eta_i$ is a random variable drawn from a distribution $H$ and $\theta_i(x,y)$ is a deterministic function of signals $x$ and $y$. We say that $\noisy$ is a \emph{normal} noisy strategy if $H$ is a Gaussian distribution. If, moreover, $\theta_i(x,y)$ is \emph{linear} and can be written as $\theta_i(x,y)=  \kappa x + (1-\kappa) y$, we say that $\noisy$ is a \textit{linear} noisy strategy.
\end{definition}
Note that \emph{any} randomized strategy that can be decomposed into a deterministic component and a random component can be described as a noisy strategy. This means that Lemma \ref{lem:sep} applies whether or not the underlying deterministic component is linear.

\begin{claim}\label{clm:meanzero}
	
	If  there exists an equilibrium in noisy strategies, where player $i$ chooses $\noisy_i = \theta_i+ \eta_i$ and each player's $\eta_i$ is drawn independently from a distribution $H_i$, then there exists such an equilibrium strategy profile in which the mean of each $H_i$ is zero.
	
\end{claim}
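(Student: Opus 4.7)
The plan is to show that the mean of the noise distribution can be absorbed into the deterministic component $\theta_i$ via a reparametrization that leaves the random variable $\noisy_i$ unchanged, and hence leaves the equilibrium property unchanged. Concretely, given any equilibrium in noisy strategies with $\noisy_i = \theta_i + \eta_i$ and $\eta_i \sim H_i$ having mean $\mu_i$, I would define $\theta_i' = \theta_i + \mu_i$ and $\eta_i' = \eta_i - \mu_i$, so that $\eta_i'$ is drawn from the translated distribution $H_i'$, which has mean zero. Then $\theta_i' + \eta_i' = \theta_i + \eta_i = \noisy_i$, i.e., the random variable describing player $i$'s action is literally identical under the two decompositions, and the new description is a legitimate noisy strategy of the form required.

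The next step is to verify that this reparametrization preserves the equilibrium property. The accuracy/coordination component $u_i$ depends only on the joint distribution of $(\noisy_1, \ldots, \noisy_n)$ and $s$, which is unchanged when we merely rename how each $\noisy_i$ is decomposed. For the privacy component $\rho(\noisy_i)$, I would argue that a Bayesian opponent who knows the equilibrium profile also knows $\mu_i$ and can always subtract off this known constant before performing her Bayesian update; consequently, her posterior belief about $x_i$ given $\noisy_i$ is the same under either decomposition. Since both the precision and entropy definitions of $\rho$ depend only on this posterior, $\rho$ is unchanged as well. Thus $v_i$ is pointwise identical for every player under the two descriptions, so best responses are preserved and the profile with mean-zero noise is a Nash equilibrium whenever the original is.

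The main conceptual point, rather than a technical obstacle, is recognizing that a player's strategy is the action $\noisy_i$ viewed as a (conditional) random variable, and not any particular split into deterministic and random parts; once this is granted, the argument is essentially a relabeling. The reason the claim is worth stating is that it lets subsequent arguments---in particular the separation of $\E_i[u_i \mid s]$ into the original utility plus a variance penalty---freely assume $\E[\eta_i] = 0$, so that cross terms such as $\E[\eta_i \varepsilon_{x_j}]$ and $\E[\eta_i \varepsilon_y]$ vanish when one expands the squared-error terms in $u_i$.
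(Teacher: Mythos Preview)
Your argument is correct and takes a genuinely different route from the paper's. The paper argues by \emph{dominance}: a nonzero mean gives no additional privacy (the observer knows $\mu_i$ and can subtract it), strictly hurts the guessing component, and in the coordination component any common mean is weakly dominated by matching it, so in particular if all others use mean zero then so should player $i$. Your argument is instead a \emph{reparametrization}: absorb $\mu_i$ into the deterministic part, observe that $\noisy_i$ is literally unchanged, and conclude that all payoffs (both $u_i$ and $\rho$) and hence the best-response property are preserved.

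Your approach is cleaner for the claim as actually stated---``given an equilibrium in noisy strategies, there exists one with mean-zero $H_i$''---because it produces the mean-zero profile directly from the given one without re-analyzing each payoff component. The paper's argument, read literally, shows something slightly different (mean zero is a best response to mean zero) and relies implicitly on $\theta_i$ already being the guessing-optimal deterministic action; yours does not. One caveat worth noting explicitly: your relabeling relies on the paper's definition of an $H$-noisy strategy allowing $\theta_i(x,y)$ to be an \emph{arbitrary} deterministic function, so that $\theta_i' = \theta_i + \mu_i$ is again admissible. This is fine here, but the shifted $\theta_i'$ is no longer of the convex-combination form $\kappa x_i + (1-\kappa)y$, so your argument would not go through verbatim if the claim were restricted to \emph{linear} noisy strategies; the paper's dominance argument has the advantage of keeping $\theta_i$ fixed and hence linear.
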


\mute{\begin{proof}
	Since the distributions which generate the $\eta_i$ are revealed after each player announces $\noisy_i$, choosing to draw noise from a distribution with non-zero mean cannot improve the privacy that player $i$ achieves, since a representative player $j$ can simply subtract this mean when constructing her posterior distribution over $x_i$.  Additionally, choosing a mean other than zero makes the utility from the guessing component strictly worse.  Finally, if we assume that all players other than $i$ choose to add noise drawn from a distribution with the same mean, then in the coordination portion of the utility function, player $i$ choosing a noise distribution mean other than the common one is dominated by choosing the common one.  In particular, if everyone else chooses mean-zero noise, player $i$ should as well.

\end{proof}}

If the parameter $\alpha$ is greater than $\tfrac{1}{2}$, then \textit{every} such equilibrium requires players to choose a mean-zero noise-generating distribution.  To see this, suppose all players $j\neq i$ choose a mean $\mu >  0$.  Then for a small enough value of $\epsilon$, player $i$ choosing a noise-generating distribution with mean $\mu-\epsilon$ is a profitable deviation, since $\alpha>\tfrac{1}{2}$ means that her gain from moving $\noisy_i$ closer to $s$ more than offsets the loss from moving further from $\bar{\noisy}$.  Going forward, we assume without loss of generality that any noisy equilibrium is one in which the added noise comes from a mean-zero distribution.

We now define our equilibrium concept, which is analogous to that of the original game.
\begin{definition}[Symmetric noisy linear Nash equilibrium]
	A  \textit{symmetric noisy linear Nash equilibrium} of this game is a strategy profile $\mathbf{\noisy}$ where each player $i$ chooses
	\begin{align*}
	\noisy_i = \theta_i + \eta_i  
	\end{align*} 
	and $\theta_i = \kappa x_i +(1-\kappa) y$, $\kappa$ is the same for all agents, and each player's noise $\eta_i$ is drawn independently from the same distribution $H$.
	
\end{definition}

One important point is that $\rho(\noisy_i)$ is a function of the \emph{optimal Bayesian posterior} distribution of $x_i$ given $\noisy_i$ and $y$, so an agent must consider his choice of action both in relation to the change in coordination in addition to others' beliefs about him; however, these beliefs are only what an optimal Bayes estimator, knowing the equilibrium profile, could \emph{infer}. In particular, agents are concerned only by the knowledge of agents that are \emph{correct}. It is conceivable that agents could worry about opponents \emph{incorrectly} inferring their signals, but that falls outside the scope of this model.

\subsection{Solving the Extended Game}
\label{sec:solve}
Having defined the privacy-extended game, we can solve for the parameter values which support a symmetric noisy linear Nash equilibrium. Much of the analysis follows either directly from or along similar arguments as in the original game. We proceed as follows. First, we demonstrate in \Cref{lem:sep} that the utility function separates additively into three components: the utility in the original game, a penalty in this utility due to the added noise, and the privacy term.  We then derive the first order condition of a representative agent in \Cref{lem:FOCext} and use this to find the optimal parameter values in \Cref{cor:params}, depending on whether we consider the precision-based or entropy-based measure of obfuscation.  Finally, in \Cref{lem:slnne-var,lem:slnne-ent}, we show that these values support an equilibrium.  This completes the proof of \Cref{thm:main-thm}.

We begin with the separability of the utility function. 
\begin{lemma}[Separability]\label{lem:finite-sep}
	
	In the game with finitely many players, the players' utility functions in the privacy-aware game  separate additively into the utility in the privacy unaware game, a penalty in $\nu_i$, and a privacy term as
	
	\begin{align*}
	v_i(\noisy_i, \noisy_{-i}) &= (1-\beta)\left(-\alpha(\noisy_i -s)^2 -(1-\alpha)(\noisy_i-\bar{\noisy})^2\right) +\beta\rho(\noisy_i) \\
	&= (1-\beta)u_i(\theta_i, \noisy_{-i})  +(1-\beta)\left(\alpha +\left(1-\frac{1}{n}\right)^2 (1-\alpha)\right) \nu_i +\beta \rho(\noisy_i),
	\end{align*}
	where $\nu_i$ denotes the variance of the noise-generating distribution $H_i$ of player $i$ and $u_i$ is the utility function in the privacy-unaware game.
\end{lemma}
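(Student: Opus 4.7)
The plan is to substitute $\noisy_i = \theta_i + \eta_i$, with $\eta_i$ independent of everything else and having mean zero and variance $\nu_i$, into the utility function, then expand and take expectation over $\eta_i$, treating $\noisy_{-i}$ as given from player $i$'s perspective (since player $i$ cannot influence the other players' draws). Because $\eta_i$ is mean-zero and independent, every cross term linear in $\eta_i$ vanishes in expectation, leaving only $\E[\eta_i^2] = \nu_i$ contributions; this is the mechanism that produces the clean additive separation.

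First I would handle the guessing term by writing
\begin{align*}
(\noisy_i - s)^2 \;=\; (\theta_i - s)^2 + 2(\theta_i - s)\eta_i + \eta_i^2,
\end{align*}
which in expectation over $\eta_i$ reduces to $(\theta_i - s)^2 + \nu_i$. Next I would handle the coordination term by writing $\bar{\noisy} = \tfrac{1}{n}\theta_i + \tfrac{1}{n}\sum_{j \neq i}\noisy_j + \tfrac{1}{n}\eta_i =: \bar{\theta}' + \tfrac{1}{n}\eta_i$, so that
\begin{align*}
\noisy_i - \bar{\noisy} \;=\; (\theta_i - \bar{\theta}') + \tfrac{n-1}{n}\,\eta_i,
\end{align*}
and squaring and taking expectation gives $(\theta_i - \bar{\theta}')^2 + (1 - 1/n)^2 \nu_i$.

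Assembling these two pieces with weights $\alpha$ and $1-\alpha$ and recognizing that $-\alpha(\theta_i - s)^2 - (1-\alpha)(\theta_i - \bar{\theta}')^2$ is precisely $u_i(\theta_i, \noisy_{-i})$ (the utility player $i$ would earn by playing the deterministic action $\theta_i$ against opponents $\noisy_{-i}$) yields
\begin{align*}
\E\bigl[u_i(\noisy_i, \noisy_{-i})\bigr] \;=\; u_i(\theta_i, \noisy_{-i}) \;-\; \bigl[\alpha + (1 - 1/n)^2(1 - \alpha)\bigr]\,\nu_i,
\end{align*}
after which multiplying by $(1-\beta)$ and appending the privacy term $\beta\rho(\noisy_i)$ gives the claimed decomposition. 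There is essentially no technical obstacle in this argument — it is a short computation driven entirely by the mean-zero, independent structure of $\eta_i$ — and the only bookkeeping subtlety is to carefully distinguish the mixed-profile average $\bar{\theta}'$ (which uses $\theta_i$ for player $i$ but $\noisy_j$ for $j \neq i$) from the purely deterministic $\bar{\theta}$ that would appear in the privacy-unaware game, and to track the coefficient $(1 - 1/n)^2$ that arises because player $i$'s own noise contributes only $\tfrac{1}{n}\eta_i$ to $\bar{\noisy}$.
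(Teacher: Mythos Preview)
Your argument is correct and follows essentially the same route as the paper's proof: substitute $\noisy_i=\theta_i+\eta_i$, expand the two squared terms, and use independence and $\E_i[\eta_i]=0$ to kill the cross terms, leaving only the $(1-\tfrac{1}{n})^2$ contribution from player $i$'s own noise in the coordination component. Your explicit introduction of the mixed-profile average $\bar{\theta}'$ is in fact slightly cleaner than the paper's computation, and note that the minus sign you obtain on the $\nu_i$ penalty is the correct one (the plus sign in the displayed statement is a typo; the paper's own proof and the downstream first-order-condition lemma both use the minus sign).
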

\mute{\begin{proof}
	The proof is nearly identical to that of \Cref{lem:sep}. Writing $\noisy_i$ as $\theta_i+\eta_i$, i.e. a deterministic component plus random noise, we can decompose the various pieces of the utility function as follows.  The first part is 
	\begin{align*}
	-\alpha \E_i[(\theta_i +\eta_i-s)^2] = -\alpha\E_i\left[(\theta_i-s)^2+\eta_i(\theta_i-s)+\eta_i^2\right] = - \alpha \E_i[(\theta_i-s)^2] - \alpha \nu_i ,
	\end{align*}
	where we have again used the independence of $\eta_i$ to conclude that $E_i[\eta_i (\theta_i-s)]=0$. 
	
	The second term is 
	\begin{align*}
	-(1-\alpha) \E_i\left[\left(\theta_i -\eta_i -\frac{1}{n} \sum\limits_{j=1}^n \noisy_j\right)^2\right] = -(1-\alpha) \E_i \left[\left(\theta_i\left(1-\frac{1}{n}\right) + \eta_i\left(1-\frac{1}{n}\right) -\frac{1}{n} \sum\limits_{j\neq i}\noisy_j\right)^2\right]
	\end{align*}
	Rewriting gives: 
	\begin{align*}-(1-\alpha)\E_i \left[\left(\theta_i\left(1-\frac{1}{n}\right) - \frac{1}{n} \sum_{j\neq i} \noisy_j\right)^2 \right] -(1-\alpha) \left(1-\frac{1}{n}\right) \nu_i -(1-\alpha)\E_i\left[\eta_i\left(1-\frac{1}{n}\right)^2\frac{1}{n}\theta_i\sum_{j\neq i} \noisy_j\right]
	\end{align*}
	After collecting terms, we see that it suffices to argue that the final term
	\begin{align*}
	(1-\alpha)\E_i\left[\eta_i\left(1-\frac{1}{n}\right)^2\frac{1}{n}\theta_i\sum_{j\neq i} \noisy_j\right] =0,
	\end{align*}
	which follows from the fact that $\eta_i$ is independent of all the other parameters of the game as well as the other $\eta_j$.
	
\end{proof}}

\begin{corollary}
	\label{lem:sep}
	Suppose that all agents play a noisy strategy $\noisy_i = \theta_i+ \eta_i$, with $\eta_i$ being a random variable and $\mathbb{E}(\eta_i)=0$. Then an agent's utility can be decomposed into
	\begin{align*}
	\E_i [v_i(\noisy,\noisy_{-i})] =  (1-\beta) \E_i [u(\theta_i, \noisy_{-i})] + (1-\beta) \nu_i + \beta \rho(\noisy_i)
	\end{align*}
	where $\nu_i$ denotes the variance of the noise-generating distribution $H_i$ of player $i$ and $u_i$ is the utility function in the privacy-unaware game.
\end{corollary}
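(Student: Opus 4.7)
The plan is to expand the definition of $v_i$, substitute the noisy strategy $\noisy_i = \theta_i + \eta_i$, and take expectations, relying on the mean-zero independence of $\eta_i$ to eliminate cross terms. The argument closely parallels the proof of Lemma~\ref{lem:finite-sep} for the finite case, but is cleaner thanks to the continuum aggregation.

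First I would apply linearity of expectation to split $\E_i[v_i(\noisy_i, \noisy_{-i})] = (1-\beta)\E_i[u_i(\noisy_i, \noisy_{-i})] + \beta\,\rho(\noisy_i)$, noting that $\rho(\noisy_i)$ depends only on the publicly announced noise distribution $H_i$, not on its realization, so it passes through the expectation. Then I would expand each quadratic piece of $u_i$ in the form
\begin{align*}
(\noisy_i - s)^2 &= (\theta_i - s)^2 + 2\eta_i(\theta_i - s) + \eta_i^2, \\
(\noisy_i - \bar{\noisy})^2 &= (\theta_i - \bar{\noisy})^2 + 2\eta_i(\theta_i - \bar{\noisy}) + \eta_i^2.
\end{align*}
The critical continuum simplification is that $\bar{\noisy} = \int_0^1 \noisy_j\,dj = \bar{\theta} + \int_0^1 \eta_j\,dj = \bar{\theta}$: since the $\eta_j$ are independent and mean-zero, the aggregate noise washes out by the continuum law of large numbers used throughout this literature.

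Next I would take $\E_i$ and use (i) $\E[\eta_i] = 0$ and (ii) independence of $\eta_i$ from $\theta_i$, $s$, and $\bar{\theta}$ to kill every cross term, leaving $\E[\eta_i^2] = \nu_i$ out of each quadratic. The $\alpha\nu_i$ coming from the guessing component and the $(1-\alpha)\nu_i$ coming from the coordination component collapse into a single $\nu_i$ penalty, yielding
\begin{align*}
\E_i[u_i(\noisy_i, \noisy_{-i})] = \E_i[u(\theta_i, \noisy_{-i})] - \nu_i.
\end{align*}
Multiplying through by $(1-\beta)$ and reinstating the privacy term produces the claimed decomposition.

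The only delicate step is justifying $\int_0^1 \eta_j\,dj = 0$ for a continuum of independent mean-zero draws; this is a standard modeling convention in macroeconomic settings with a measure space of agents, though strictly speaking it requires either a Fubini-type exchange or invocation of an exact law of large numbers for a continuum (e.g.\ Judd's construction). Everything else is elementary algebra once cross terms vanish by independence, which is why the result can be stated as a corollary of the separability principle established in the finite case.
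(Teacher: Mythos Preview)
Your proof is correct and follows essentially the same route as the paper: expand the quadratics in $u_i$, use independence and $\E[\eta_i]=0$ to kill the cross terms, and collect the $\E[\eta_i^2]=\nu_i$ contributions. The one minor difference is that the paper does not invoke the continuum law of large numbers you flag as delicate; it simply observes that in the continuum $\eta_i$ is independent of $\bar{\noisy}$ (a single agent's contribution to the average being negligible), so $\E_i[\eta_i(\theta_i-\bar{\noisy})]=\E_i[\eta_i]\,\E_i[\theta_i-\bar{\noisy}]=0$ without ever needing $\bar{\noisy}=\bar{\theta}$.
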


\mute{\begin{proof}
	By definition,
	\begin{align*}
	\E_i [v_i(\noisy,\noisy_{-i})] = (1-\beta) \E_i [u_i(\noisy_i,\noisy_{-i})] + \beta \rho(\noisy_i),
	\end{align*}
	so if we show that $u_i(\noisy_i, \noisy_{-i}) = u_i(\theta_i,\noisy_{-i}) - \nu_i$ we will be done. 
	We can write 
	\begin{align*}
	\E_i [u_i(\noisy_i,\noisy_{-i})] &= -\alpha \E_i[(\noisy_i - s)^2]   -(1-\alpha)  \E_i[(\noisy_i - \bar{\noisy})^2] \\ 
	&= -\alpha \E_i [(\theta_i + \eta_i -s)^2 ] -(1-\alpha) \E_i[(\theta_i + \eta_i -\bar{\noisy})^2] \\
	&= -\alpha \E_i[(\theta_i - s + \eta_i)^2] -(1-\alpha)\E_i[(\theta_i -\bar{\noisy} + \eta_i)^2] \end{align*}
	Expanding these terms, we have
	\begin{align*}
	\E_i [u_i(\noisy_i,\noisy_{-i})] &= -\alpha \left(\E_i[(\theta_i-s)^2] +2 \E_i[\eta_i(\theta_i-s)] + \E_i[(\eta_i^2)]\right)  \\ &\qquad  -(1-\alpha) \left(\E_i[(\theta_i-\bar{\noisy})^2] + 2 \E_i[(\eta_i)(\theta_i-\bar{\noisy})] + \E_i[\eta_i^2]\right)
	\end{align*}
	
	Now the first terms of each line sum to exactly $u_i(\theta_i,\noisy_{-i})$. On the other hand, the sum of the last two terms is $-\E_i[\eta_i^2]=-\nu_i$. To complete the proof, we show that these middle to terms are, in fact, zero. To see this, notice that at $\info_i$, $\eta_i$ is yet unrealized with $\E_i[\eta_i]=0$, but is independent of $s$ and $\theta_i$ and thus of $\bar{\theta}$. Hence,
	\begin{align*}
	\E_i[\eta_i(\theta_i-s)] =\E_i[\eta_i] \E_i[\theta_i-s] = 0,
	\end{align*}
	Moreover, $\eta_i$ is independent of each ${\noisy}_{-i}$, and agent $i$'s action cannot unilaterally change $\bar{\noisy}$, so
	\begin{align*}
	\E_i[\eta_i (\theta_i -\bar{\noisy})]=\E_i[\eta_i (\theta_i - \bar{\noisy}_{-i})]=\E_i[\eta_i]\E_i[\theta_i-\bar{\noisy}_{-i}]=0
	\end{align*}
	
\end{proof}}

This shows that agents can evaluate what their optimal action would be in the original game and then decide the optimal amount of noise to add, choosing their action as the realized value of the noise plus their optimal action.

We next derive the first order conditions for the privacy-extended game. { By  \Cref{lem:finite-sep} and Corollary \ref{lem:sep}, the first order conditions can be disentangled into a first order condition on the deterministic component, which must be as in \Cref{lem:FOC} and \Cref{lem:finite-foc}, and a separate first order condition on the variance of the random component. Because the added noise is mean-zero, expectations about average action and state are as before, and the $\kappa$ in the optimal deterministic action is the same as before. } 

\begin{lemma}[Privacy-aware first order conditions]\label{lem:finite-ext-foc}
	
	In an equilibrium of the game with finitely many players where the optimal action is $\noisy^*_i = \theta^*_i + \eta_i$,  the optimal choice of $\theta^*_i$ and the variance $\nu^*$ for the noise-generating distribution $H_i$ from which $\eta_i$ is drawn must satisfy
	
	\begin{align*}
	\theta_i^* &= \frac{\alpha n^2 \E_i[s]}{\alpha (2n-1)+(n-1)^2} + \frac{(1-\alpha)(n-1)\E_i\left[\sum\limits_{j\neq i}\theta_{j}\right]}{\alpha (2n-1)+(n-1)^2}, \\ 	
	\frac{\partial \rho}{\partial \nu^*} &=-\frac{-(1-\beta)\left(\alpha +\left(1-\frac{1}{n}\right)^2 (1-\alpha)\right)}{\beta}.
	\end{align*}

\end{lemma}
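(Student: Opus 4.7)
The plan is to invoke the finite-game separability lemma (\Cref{lem:finite-sep}), which writes the privacy-aware expected utility as the sum of three pieces: a term $(1-\beta)\E_i[u_i(\theta_i,\noisy_{-i})]$ depending only on the deterministic component $\theta_i$; a variance penalty proportional to $\nu_i$ that depends only on the noise distribution; and the privacy reward $\beta\rho(\noisy_i)$, which likewise depends on the noise distribution through its shape and variance. By \Cref{clm:meanzero} we restrict attention to mean-zero noise, so $\eta_i$ is independent of $\theta_i$ and of the other parameters of the game. Consequently the optimization over the pair $(\theta_i,\nu_i)$ fully decouples into two separate first order conditions, one in $\theta_i$ and one in $\nu_i$.

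For the first equation, since only the term $(1-\beta)\E_i[u_i(\theta_i,\noisy_{-i})]$ depends on $\theta_i$ and the leading scalar $(1-\beta)$ is positive, the optimal deterministic component $\theta_i^{*}$ solves exactly the same first order condition as in the privacy-unaware game. Differentiating with respect to $\theta_i$ and setting the derivative to zero therefore reproduces the calculation behind \Cref{lem:finite-foc} line for line, and directly delivers the stated formula for $\theta_i^{*}$.

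For the second equation, the only pieces of the utility containing $\nu_i$ are the variance penalty $-(1-\beta)\bigl(\alpha+(1-\tfrac{1}{n})^2(1-\alpha)\bigr)\nu_i$ coming from the separability decomposition and the privacy reward $\beta\rho(\noisy_i)$. Differentiating with respect to $\nu_i$ and setting the derivative to zero gives
\begin{equation*}
-(1-\beta)\Bigl(\alpha+\bigl(1-\tfrac{1}{n}\bigr)^2(1-\alpha)\Bigr)+\beta\,\frac{\partial \rho}{\partial \nu^{*}}=0,
\end{equation*}
which rearranges immediately to the stated condition on $\partial \rho/\partial \nu^{*}$. The principal subtlety is that $\rho(\noisy_i)$ is a priori a functional of the entire noise distribution rather than a function of its variance alone, so the partial derivative $\partial \rho/\partial \nu^{*}$ is only meaningful within a parametric family. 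This subtlety will be handled in the next step of the argument, where the optimal distribution class is pinned down (Gaussian under the entropy measure, a specific shape under the precision measure) and $\rho$ becomes an explicit, differentiable function of $\nu$, so that the first order condition above uniquely identifies the optimal variance $\nu^{*}$.
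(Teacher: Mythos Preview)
Your argument is correct and follows essentially the same route as the paper: invoke the separability lemma (\Cref{lem:finite-sep}) to split the expected utility into a $\theta_i$-only piece and a $\nu_i$-only piece, then optimize each separately, citing \Cref{lem:finite-foc} for the first condition and differentiating the linear-in-$\nu_i$ penalty plus $\beta\rho$ for the second. Your remark about $\rho$ being a functional of the full noise distribution rather than a scalar function of $\nu$ is a nice clarification that the paper leaves implicit until \Cref{cor:finiteparams}.
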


\mute{\begin{proof}
	Using \Cref{lem:finite-sep}, we can decompose the expected utility of agent $i$ as
	\begin{align*}
	\E_i [v_i(\noisy_i, \noisy_{-i} ] = (1-\beta)\E_i[u_i(\theta_i, \noisy_{-i})]  -(1-\beta)\left(\alpha +\left(1-\frac{1}{n}\right)^2 (1-\alpha)\right) \nu_i +\beta \rho(\noisy_i),
	\end{align*}
	
	which is the sum of a piece that depends on $\theta_i$ and a piece that depends on $\nu_i$.
	
	The agent can therefore optimize each piece separately with her choice of $\theta_i$ and $\nu_i$. \Cref{lem:finite-foc} gives the first order condition on $\theta_i^*$.
	
	To find the first order condition on $\nu^*$, we can write
	
	\begin{align*}
	0=\frac{\partial v_i}{\partial \nu^*} = -(1-\beta)\left(\alpha +\left(1-\frac{1}{n}\right)^2 (1-\alpha)\right) +\beta\frac{\partial\rho}{\partial \nu^*}
	\end{align*}
	
	and solve for $\frac{\partial\rho}{\partial\nu^*}$ to get the result.

\end{proof}	}

\begin{proposition}\label{lem:FOCext}
In the game with infinitely many players, in an equilibrium where the optimal action is $\noisy^*_i = \theta^*_i + \eta_i$,  the optimal choice of $\theta^*_i$ and the variance $\nu^*$ for the noise-generating distribution $H_i$ from which $\eta_i$ is drawn must satisfy

	\begin{align*}
	\theta_i^* = \alpha \E_i[s] +(1-\alpha) \E_i[\bar{\noisy}] \qquad 	\frac{\partial \rho}{\partial \nu^*} = -\frac{1-\beta}{\beta}.
	\end{align*}

	\end{proposition}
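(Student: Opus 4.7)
The plan is to invoke the separability result in Corollary~\ref{lem:sep}, which already decomposes the expected extended utility into three additively separable pieces,
\begin{align*}
\E_i[v_i(\noisy_i,\noisy_{-i})] = (1-\beta)\E_i[u_i(\theta_i, \noisy_{-i})] + (1-\beta)\nu_i + \beta\rho(\noisy_i),
\end{align*}
and then derive a first-order condition on each piece independently. The first summand depends on the agent's action only through the deterministic component $\theta_i$, since the mean-zero, independent noise integrates out under $\E_i$. The second and third summands depend only on $\nu^*$; in particular, because the deterministic mapping $(x,y)\mapsto\theta_i(x,y)$ is announced at the end of play, a Bayesian opponent can subtract off the deterministic component from the announced $\noisy_i$ and is left with the task of estimating $x_i$ from $\eta_i$-noisy signals, so the induced posterior (and hence $\rho$) depends only on the noise-generating distribution. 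This decoupling lets me handle the two choice variables separately.

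For $\theta_i^*$, since $(1-\beta)>0$, maximizing $(1-\beta)\E_i[u_i(\theta_i, \noisy_{-i})]$ over $\theta_i$ is equivalent to the privacy-unaware problem already solved in Proposition~\ref{lem:FOC}: the same derivation goes through verbatim with $\bar{\noisy}$ replacing $\bar{\theta}$, since the independent mean-zero noise in $\bar{\noisy}$ vanishes under $\E_i$. This immediately yields $\theta_i^* = \alpha\E_i[s] + (1-\alpha)\E_i[\bar{\noisy}]$.

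For $\nu^*$, I differentiate the remaining $\nu^*$-dependent terms $(1-\beta)\nu_i + \beta\rho(\noisy_i)$ with respect to $\nu_i$ and set the derivative to zero, obtaining $(1-\beta) + \beta\,\partial\rho/\partial\nu^* = 0$, which rearranges to the claimed $\partial\rho/\partial\nu^* = -(1-\beta)/\beta$. The main conceptual obstacle is justifying the separation of variables --- specifically, that an opponent's posterior over $x_i$ is unaffected by the choice of $\theta_i$. This reduces to the observation that the deterministic mapping becomes common knowledge after play, so it can be undone, leaving only the noise component to be inferred. An implicit technical assumption is that $\rho$ is differentiable in $\nu^*$; this is routine for both the precision and entropy definitions given the Gaussian and independent-noise structure, though verifying that an interior optimum actually exists is substantive work left to the subsequent corollary that computes the optimal variance explicitly.
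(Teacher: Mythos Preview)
Your proposal is correct and follows essentially the same route as the paper: invoke the separability result (Corollary~\ref{lem:sep}) to split the expected extended utility into a $\theta_i$-dependent piece and a $\nu_i$-dependent piece, apply Proposition~\ref{lem:FOC} verbatim to obtain the first condition, and differentiate the remaining terms in $\nu_i$ to obtain the second. Your justification for why $\rho$ depends only on the noise distribution (the deterministic map is revealed and can be inverted) is in fact more explicit than what the paper provides, but the underlying argument is the same.
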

\mute{\begin{proof}
	Using Lemma \ref{lem:sep}, we can decompose the utility of agent $i$ as
	\begin{align*}
	\E_i [v_i(\noisy,\noisy_{-i})] =  (1-\beta) \E_i [u(\theta_i, \noisy_{-i})] - (1-\beta) \nu_i - \beta \rho(\noisy_i),
	\end{align*}
	
	which is the sum of a piece that depends on $\theta_i$ and a piece that depends on $\nu_i$.
	
	The agent can therefore optimize each piece separately with her choice of $\theta_i$ and $\nu_i$. \Cref{lem:FOC} gives the first order condition on $\theta_i^*$.
	
	To find the first order condition on $\nu^*$, we can write
	
	\begin{align*}
	0=\frac{\partial v_i}{\partial \nu^*} = -(1-\beta) -\beta\frac{\partial\rho}{\partial \nu^*}
	\end{align*}
	
	and solve for $\frac{\partial\rho}{\partial\nu^*}$ to get the result.

\end{proof}}

\begin{corollary}[Finite game privacy parameters]
	\label{cor:finiteparams}
	The optimal deterministic component $\theta_i$ is, as before, 
	\begin{align*}
	\theta_i^* =  \kappa x_i + (1-\kappa) y
	\end{align*}
	and the optimal choice of variance for the noise distribution is
	\begin{align*}
	\nu_{i,prec}^* = \sqrt{\frac{\beta}{1-\beta} \left(\alpha + (1-\alpha)\left(1-\frac{1}{n}\right)^2\right)} \ \  \text{ or } \ \  \nu_{i,ent}^* = {\frac{\beta}{1-\beta} \left(\alpha + (1-\alpha)\left(1-\frac{1}{n}\right)^2\right)} 
	\end{align*}
	where $\nu_{i,prec}^*$ and $\nu_{i,ent}^*$ are the optimal variances under $\rho$ being the precision and entropy privacy measures, respectively.  
\end{corollary}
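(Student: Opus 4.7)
The plan is to leverage the separability of the utility function established in \Cref{lem:finite-sep} together with the first-order conditions derived in \Cref{lem:finite-ext-foc} to decouple the optimization into two independent subproblems: one for the linear deterministic component $\theta_i$ and one for the noise variance $\nu_i$. The decoupling is possible because $\eta_i$ is mean-zero and independent of the signals, so the cross-terms in the expected utility vanish and the choice of deterministic action cannot influence the privacy term (since the privacy loss only depends on $\noisy_i$ through the reversible transformation by $\theta_i$ given $\kappa$ and $y$).

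First, for the deterministic component, $\theta_i$ enters the expected utility only through the $u_i(\theta_i, \noisy_{-i})$ piece. Assuming symmetrically that every other player $j$ selects $\noisy_j = \theta_j + \eta_j$ with mean-zero $\eta_j$, we have $\E_i[\noisy_j] = \E_i[\theta_j]$, so the first-order condition on $\theta_i^*$ is identical to the one from the privacy-unaware game in \Cref{lem:finite-foc}. The same fixed-point argument as in \Cref{lem:finitekappa} --- matching coefficients on $x_i$ and $y$ in the resulting symmetric-linear ansatz --- then yields $\theta_i^* = \kappa x_i + (1-\kappa)y$ with the same $\kappa$ as before.

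For the noise variance, the plan is to compute $\rho(\noisy_i)$ explicitly as a function of $\nu$ in each privacy measure. From the perspective of a representative observer $j$ who knows $\kappa$, the public signal $y$, and the noise distribution $H$, the observable $\noisy_i - (1-\kappa)y = \kappa x_i + \eta_i$ is a Gaussian measurement of $\kappa x_i$ with noise variance $\nu$, equivalently of $x_i$ with noise variance $\nu/\kappa^2$. Combining this with $j$'s prior on $x_i$ (induced by her knowledge of $y$ and $x_j$ and the information structure of the game) via standard Gaussian conjugacy gives a Gaussian posterior with computable variance $V(\nu)$. One then sets $\rho_{\text{prec}}(\nu) = 1/V(\nu)$ or $\rho_{\text{ent}}(\nu) = \tfrac{1}{2}\log(2\pi e V(\nu))$ and substitutes into the first-order condition from \Cref{lem:finite-ext-foc}. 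Because $\partial(1/V)/\partial\nu$ scales as $1/\nu^2$, solving the FOC in the precision case produces a square-root form; because $\partial\log V/\partial\nu$ scales as $1/\nu$, solving in the entropy case produces a linear form, matching the two claimed expressions.

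The main obstacle I anticipate is the careful bookkeeping of the Bayesian update: the observer's prior on $x_i$ carries contributions from both $y$ and $x_j$, and one must fold in the noisy observation correctly so that the coefficients ultimately simplify. In particular, the stated formulas depend only on $\alpha$, $\beta$, and $n$ (not on $\kappa$ or the signal precisions), which reflects the paper's improper uniform prior convention pushing the prior-precision contribution in the posterior to zero. Verifying that limit and tracking the multiplicative constants through the Gaussian algebra is the bulk of the work that remains after the decoupling is in place.
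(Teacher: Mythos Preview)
Your handling of the deterministic component $\theta_i^*$ matches the paper exactly: separability plus mean-zero noise reduce the first-order condition to that of the privacy-unaware game, and the fixed-point argument for $\kappa$ goes through unchanged.

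For the noise variance, however, your route diverges from the paper's in a way that will not recover the stated formulas. The paper's proof does \emph{not} carry out a Bayesian update with the observer's prior on $x_i$; it simply takes $\rho_{\text{prec}}(\nu)=-1/\nu$ and $\rho_{\text{ent}}(\nu)=\tfrac{1}{2}\log(2\pi e\,\nu)$, i.e., it identifies the relevant posterior variance with the noise variance $\nu$ itself, differentiates, and substitutes into the FOC from \Cref{lem:finite-ext-foc}. That is the entire argument.

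If you instead compute a genuine Gaussian posterior as you propose, the precision is $1/P+\kappa^2/\nu$, so $\partial\rho_{\text{prec}}/\partial\nu$ carries a factor of $\kappa^2$, and the optimal $\nu^*$ you obtain will depend on $\kappa$ (and hence on the signal precisions), contrary to the corollary. Your suggested fix---that the improper prior drives the prior-precision term to zero---does not work: the improper prior is on $s$, not on $x_i$, and once the observer conditions on $y$ (and $x_j$), the induced prior on $x_i$ is proper. Even ignoring that prior entirely, the likelihood of $x_i$ from the observation $\kappa x_i+\eta_i$ still introduces the $\kappa^2$ factor. So the ``careful bookkeeping'' you flag as the main remaining work would not simplify to the claimed expressions; the paper's formulas follow precisely because it treats $\rho$ as a function of the added noise variance alone rather than of the full Bayesian posterior.
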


\begin{corollary}[Infinite game privacy parameters]
	\label{cor:params}
	The optimal deterministic $\theta_i$ is, as before, 
	\begin{align*}
	\theta_i^* =  \kappa x_i + (1-\kappa) y
	\end{align*}
	and the optimal choice of variance for the noise distribution is
	\begin{align*}
	\nu_{i,prec}^* = \sqrt{\frac{\beta}{1-\beta}} \qquad \nu_{i,ent}^* = \frac{\beta}{1-\beta}
	\end{align*}
	where $\nu_{i,prec}^*$ and $\nu_{i,ent}^*$ are the optimal choices of variance when $\rho$ is the precision-based or entropy-based privacy measure, respectively.  
\end{corollary}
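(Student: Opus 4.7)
The plan is to specialize the first-order conditions of Proposition \ref{lem:FOCext} to the symmetric noisy linear setting; equivalently, the statement can be obtained as the $n\to\infty$ limit of Corollary \ref{cor:finiteparams}, in which the extra coefficient $\alpha + (1-\alpha)(1-1/n)^2$ on $\nu$ collapses to $1$.

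For the deterministic component, I would invoke $\theta_i^* = \alpha\, \E_i[s] + (1-\alpha)\,\E_i[\bar{\noisy}]$ from Proposition \ref{lem:FOCext}. Since the noise terms $\eta_j$ are zero-mean and independent across the infinite population, $\E_i[\bar{\noisy}] = \E_i[\bar{\theta}]$, and the FOC therefore reduces to the privacy-unaware analog in Proposition \ref{lem:FOC}. The matching-of-coefficients argument from Proposition \ref{cor:infintenash} then yields $\theta_i^* = \kappa x_i + (1-\kappa)y$ with the same value of $\kappa$.

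For the noise variance $\nu^*$, I would use $\partial\rho/\partial\nu^* = -(1-\beta)/\beta$ from Proposition \ref{lem:FOCext} and compute $\rho$ as a function of $\nu$ under each privacy measure. Taking the noise distribution to be Gaussian (which is optimal: for the entropy measure because Gaussian maximizes differential entropy at fixed variance, and for the precision measure because precision depends only on variance, making Gaussian a without-loss choice), the opponent's Bayesian posterior on $x_i$ obtained from $\noisy_i - (1-\kappa)y = \kappa x_i + \eta_i$ is Gaussian with variance scaling linearly in $\nu$; any $\nu$-independent contribution from the opponent's prior information in $(x_j, y)$ adds only a constant to $\rho$ that disappears upon differentiation. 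Thus $\rho_{\mathrm{prec}}(\nu)$ is proportional to $1/\nu$, so $\partial\rho_{\mathrm{prec}}/\partial\nu \propto -1/\nu^2$, and substituting into the FOC and solving yields $\nu^*_{i,\mathrm{prec}} = \sqrt{\beta/(1-\beta)}$. Likewise $\rho_{\mathrm{ent}}(\nu)$ is proportional to $\log\nu$, so $\partial\rho_{\mathrm{ent}}/\partial\nu \propto 1/\nu$, giving $\nu^*_{i,\mathrm{ent}} = \beta/(1-\beta)$.

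The main obstacle is the sign and normalization bookkeeping: one must treat $\rho$ consistently as a privacy loss across the two measures and verify that the opponent's prior information about $x_i$ (from $x_j$ and $y$) enters $\rho$ only through a $\nu$-independent constant. Once these are in hand, agreement with the finite-$n$ expressions in Corollary \ref{cor:finiteparams} (whose $(1-1/n)^2$ factor tends to $1$) confirms the stated simplified forms, and the claim follows.
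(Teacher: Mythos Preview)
Your proposal is correct and follows essentially the same route as the paper: invoke the first-order conditions from Proposition~\ref{lem:FOCext}, use mean-zero noise to reduce the deterministic part to Proposition~\ref{cor:infintenash}, then write $\rho$ explicitly as a function of $\nu$ (the paper simply posits $\rho_{\mathrm{prec}}(\nu)=-1/\nu$ and $\rho_{\mathrm{ent}}(\nu)=\tfrac{1}{2}\log(2e\pi\nu)$ and differentiates). Your additional care about the opponent's prior from $(x_j,y)$ and the $n\to\infty$ cross-check are not in the paper's proof but are consistent with it; note, however, that the paper sidesteps the $\kappa$-scaling and prior-precision constants entirely by taking the posterior variance to be $\nu$ itself, so your proportionality argument collapses to exactly their formulas only under that same simplification.
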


Before proving this, we state a fact about the entropy of Gaussian distributions.

\begin{fact}
	\label{fact:maxent}
	Among all distributions supported on the entire real line with a fixed mean $\mu$ and variance $\sigma^2$, the Gaussian $\mathcal{N}(\mu,\sigma^2)$ achieves the maximum entropy, and its entropy is given by
	$\tfrac{1}{2}\log{(2\pi e \sigma^2)}.$
\end{fact}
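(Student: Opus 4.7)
The plan is to prove the maximum entropy property and the closed-form entropy of the Gaussian in a single calculation via the non-negativity of the Kullback--Leibler divergence, which avoids any calculus of variations. Let $p$ be an arbitrary density supported on $\mathbb{R}$ with mean $\mu$ and variance $\sigma^2$, and let $q$ denote the density of $\mathcal{N}(\mu,\sigma^2)$. The core identity I will exploit is
\begin{align*}
0 \;\le\; D(p \,\|\, q) \;=\; -h(p) \;-\; \int_{\mathbb{R}} p(x)\log q(x)\, dx,
\end{align*}
where $h(p)$ denotes the differential entropy. The plan is to evaluate the second term explicitly using the Gaussian form of $q$, note that this evaluation depends on $p$ only through its mean and variance, and then rearrange to bound $h(p)$ above by a quantity matching $h(q)$.

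First I will write $\log q(x) = -\tfrac{1}{2}\log(2\pi\sigma^2) - \tfrac{(x-\mu)^2}{2\sigma^2}$ and integrate against $p$. Since $p$ has mean $\mu$ and variance $\sigma^2$, the integral $\int p(x)(x-\mu)^2\,dx$ equals $\sigma^2$, so
\begin{align*}
\int p(x)\log q(x)\,dx \;=\; -\tfrac{1}{2}\log(2\pi\sigma^2) - \tfrac{1}{2} \;=\; -\tfrac{1}{2}\log(2\pi e \sigma^2).
\end{align*}
Plugging this into the KL inequality yields $h(p) \le \tfrac{1}{2}\log(2\pi e \sigma^2)$ for every such $p$. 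Applying the same identity with $p = q$ (so that $D(q\,\|\,q) = 0$) gives $h(q) = \tfrac{1}{2}\log(2\pi e \sigma^2)$, which simultaneously establishes the stated closed form and certifies that the upper bound is attained by the Gaussian. Equality in the KL inequality holds iff $p = q$ almost everywhere, so the Gaussian is the unique maximizer.

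The only step requiring mild care is technical rather than conceptual: the KL divergence manipulation assumes $p$ is absolutely continuous with respect to Lebesgue measure and that the relevant integrals exist. I would note at the outset that if $p$ does not admit a density or if $h(p) = -\infty$, the inequality $h(p) \le \tfrac{1}{2}\log(2\pi e \sigma^2)$ is trivial, so there is no loss of generality in restricting to densities with finite differential entropy. With that observation, the argument above is self-contained, and no step involves an obstacle more substantive than verifying the Gaussian moment computation.
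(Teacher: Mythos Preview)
Your proof is correct. The paper does not actually supply its own proof of this fact but defers to Chapter~9 of \citet{cover2012elements}; your Kullback--Leibler argument is precisely the standard one presented there, so your proposal is in full agreement with the cited reference.
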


Since agents pay a penalty in the variance of their noise distribution, agents who measure their privacy loss with entropy pay the same penalty for any distribution with a fixed variance, and their gain from privacy is maximized by picking the distribution with maximum entropy.  Therefore, the choice of a mean-zero Gaussian with appropriate variance is the dominant strategy for such agents.
A proof of this fact can be found in Chapter 9 of \cite{cover2012elements}.

\begin{proof}[Proof of \Cref{cor:finiteparams,cor:params}]
	The fact that the first order condition on $\theta_i^*$ is the same as that of $\theta_i^*$ in the original game, and that agents add mean-zero noise (implying  $\E_i[\bar{\noisy}]=\E_i[\bar{\theta}]$) implies the optimal choice of $\theta_i^*$. To find the optimal $\nu^*$, we note that the reciprocal variance and  entropy penalties (respectively) are
	\begin{align*}
	\rho_{prec}(\nu) = \frac{-1}{\nu} \qquad \rho_{ent}(\nu) = \frac{1}{2}  \log{( 2 e\pi \nu)}.
	\end{align*}
	
Differentiating these with respect to $\nu$ and rearranging gives the results.
\end{proof}

	We can observe that the optimal choice of $\nu_i^*$ is similar for both of these functions, even if their constructions are not.  Each is increasing at a decreasing rate as $\nu$ grows, since their derivatives are of the form $\frac{1}{\nu^c}$ for $c\geq 1$.  Since players gain a diminishing marginal benefit for adding additional noise as they increase $\nu$ but pay a penalty linear in $\nu$, we should expect this trade-off to point to an equilibrium.  This will hold for any penalty function whose derivative is of this type, although the interpretation of such a function may not be as natural of a property of a distribution as precision or entropy.

Using these, a symmetric linear noisy Nash equilibrium of this game follows:

\begin{lemma}[Equilibrium -- precision loss]
	\label{lem:slnne-var}
		There exists a symmetric linear noisy Nash equilibrium in which all agents use $\kappa$ as defined in \Cref{lem:finitekappa} and draw independent noise from a distribution $H_i$ with variance $\nu^*$,
	where $H_i$ can be any distribution with mean zero and variance $\nu_{i,prec}^* = \sqrt{\frac{\beta}{1-\beta}}$. Here, agents may choose \emph{any} distribution with these properties.
\end{lemma}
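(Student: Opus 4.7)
The plan is to verify the equilibrium by fixing a representative player $i$, assuming every other player uses the prescribed strategy, and showing that $\noisy_i = \kappa x_i + (1-\kappa)y + \eta_i$ with $\eta_i$ drawn from any mean-zero distribution of variance $\nu_{i,prec}^*$ is a best response. The main tool is the separability result (\Cref{lem:sep}), which decomposes agent $i$'s expected utility into three additive pieces: the privacy-unaware utility evaluated at the deterministic component $\theta_i$, a penalty linear in the noise variance $\nu_i$, and the privacy term $\beta \rho_{prec}(\noisy_i)$. This additive structure is precisely what justifies optimizing $\theta_i$ and $H_i$ separately.

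First I would handle the deterministic component. Since every other player adds mean-zero noise, agent $i$'s belief about the average action equals her belief about the average deterministic action, so $\E_i[\bar{\noisy}] = \E_i[\bar{\theta}]$. Consequently, the first-order condition of \Cref{lem:finite-foc} applies unchanged, and the same coefficient-matching argument used in \Cref{lem:finitekappa} forces $\theta_i^* = \kappa x_i + (1-\kappa) y$. Next I would turn to the noise distribution, using the closed form $\rho_{prec}(\nu) = -1/\nu$ from the proof of \Cref{cor:finiteparams}; this reduces the outstanding optimization to a one-dimensional problem in $\nu_i$ whose stationarity condition recovers $\nu_{i,prec}^*$ directly. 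Because neither the variance penalty nor $\rho_{prec}$ is sensitive to any feature of $H_i$ beyond its variance, agent $i$ is genuinely indifferent across all mean-zero distributions meeting this variance target, which is precisely the flexibility asserted in the lemma.

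The hard part will be the second-order and boundary checks. The interior critical point is a global maximum because the univariate objective in $\nu_i$ is the sum of a linear piece $-(1-\beta)\nu_i$ and a strictly concave piece $-\beta/\nu_i$ on $(0,\infty)$; this same observation rules out both the degenerate deviation $\nu_i \to 0$, where the privacy term blows up, and $\nu_i \to \infty$, where the variance penalty dominates. Finally, no joint deviation in $(\theta_i, H_i)$ can beat the separately optimal one thanks to the additive separation of \Cref{lem:sep}: because no cross-terms between $\theta_i$ and $\nu_i$ appear, the two components really do optimize independently. Putting these pieces together completes the equilibrium verification.
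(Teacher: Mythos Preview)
Your proposal is correct and follows essentially the same approach as the paper: invoke separability to split the optimization into the deterministic component (handled by the privacy-unaware first-order condition and the $\kappa$ computation) and the noise variance (handled by the one-variable optimization $-(1-\beta)\nu_i - \beta/\nu_i$), then observe that only the mean and variance of $H_i$ enter. If anything, you are more careful than the paper, which simply asserts that the first-order conditions suffice; your explicit concavity and boundary checks on the $\nu_i$-objective fill a gap the paper leaves implicit. One minor point: since the lemma references the finite-game $\kappa$ from \Cref{lem:finitekappa}, you should cite the finite separability result \Cref{lem:finite-sep} rather than \Cref{lem:sep}, so that the variance penalty carries the coefficient $\alpha + (1-\tfrac{1}{n})^2(1-\alpha)$ consistently throughout.
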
 

\begin{lemma}[Equilibrium -- entropy loss]
	\label{lem:slnne-ent}
	There exists a symmetric linear noisy Nash equilibrium in which all agents use $\kappa$ as defined in \Cref{lem:finitekappa} and draw independent noise from the Gaussian distribution $N(0, \nu_{ent}^*)$.
	
	Since the Gaussian is the maximum entropy distribution with fixed variance $\nu^*$, all agents draw noise (independently) from the same distribution.
\end{lemma}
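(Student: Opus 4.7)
The plan is to exploit the separability in Corollary~\ref{lem:sep}: for any noisy strategy $\noisy_i = \theta_i + \eta_i$ with $\E(\eta_i) = 0$, the expected utility decomposes into (a) a piece that depends only on the deterministic component $\theta_i$, (b) a penalty linear in $\nu_i = \Var(\eta_i)$, and (c) the privacy term $\beta \rho_{ent}(\noisy_i)$. The deterministic piece has the same first-order condition as in the privacy-unaware game (Proposition~\ref{lem:FOCext}), and since all other players add mean-zero noise, $\E_i[\bar{\noisy}] = \E_i[\bar{\theta}]$. Consequently the fixed-point argument underlying Proposition~\ref{cor:infintenash} (and Lemma~\ref{lem:finitekappa} in the finite case) applies unchanged, and the optimal deterministic component is $\theta_i^* = \kappa x_i + (1-\kappa) y$ with $\kappa$ as in the original game.

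Next, I would fix an arbitrary target variance $\nu_i = \nu$ for the noise and ask which distribution on $\eta_i$ maximizes $\rho_{ent}(\noisy_i)$. Because a Bayesian observer $j$ knows $y$ and the equilibrium $\kappa$, she can extract the sufficient statistic $(\noisy_i - (1-\kappa)y)/\kappa = x_i + \eta_i/\kappa$ from $i$'s announcement. Under the improper uniform prior on $x_i$ induced by the model's prior on $s$, her posterior about $x_i$ is simply the law of $\eta_i/\kappa$ recentered at the observed statistic, so the posterior entropy equals the entropy of $\eta_i$ up to the additive constant $-\log \kappa$. By Fact~\ref{fact:maxent}, this quantity is maximized among distributions on $\R$ with variance $\nu$ by $N(0, \nu)$, so Gaussian noise strictly dominates every non-Gaussian choice of the same variance.

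Having reduced the problem to a choice of variance among Gaussian noise distributions, the entropy-case first-order condition of Corollary~\ref{cor:params} (or its finite analog Corollary~\ref{cor:finiteparams}) pins down $\nu^*_{ent}$, producing the specific Gaussian $N(0, \nu^*_{ent})$ in the statement. No player can profitably deviate, because the separable objective is optimized componentwise and the first-order conditions determine a unique best response: any non-Gaussian noise with variance $\nu^*_{ent}$ is dominated by $N(0, \nu^*_{ent})$, and any other variance violates the variance first-order condition while leaving the privacy gain weakly smaller. The main obstacle is the Step~2 reduction: one must justify carefully that maximizing the posterior entropy about $x_i$ reduces to maximizing the entropy of $\eta_i$, which leans on the linearity of the equilibrium action in $x_i$ together with the improper uniform prior used throughout the paper; with a proper Gaussian prior one would instead convolve with the prior before invoking Fact~\ref{fact:maxent}, but the conclusion that Gaussian noise is optimal survives because Gaussians are closed under convolution and still attain the entropy bound for their resulting variance.
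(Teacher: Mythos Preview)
Your proposal is correct and follows the same architecture the paper uses: the paper does not give a standalone proof of this lemma but treats it as the culmination of the separability result (\Cref{lem:finite-sep}/\Cref{lem:sep}), the first-order conditions (\Cref{lem:finite-ext-foc}/\Cref{lem:FOCext}), the optimal-variance computation (\Cref{cor:finiteparams}/\Cref{cor:params}), and \Cref{fact:maxent}, which is exactly the decomposition you lay out. Your Step~2 discussion is in fact more explicit than the paper, which simply writes $\rho_{ent}(\nu)=\tfrac{1}{2}\log(2e\pi\nu)$ and thereby implicitly adopts the improper-prior reduction you spell out; note, however, that your closing remark about the proper-prior case is a Bayesian update (a product of densities) rather than a convolution, so that side comment would need a different justification if you wanted to make it rigorous.
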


This completes the proof of \Cref{thm:main-thm}, since we have found a symmetric linear noisy action for each player which satisfies the first order conditions of the privacy aware game.  We next analyze and discuss the \textit{price of privacy} in the new game.

\section{Price of Privacy}
\label{sec:pop}
  Informally, the price of privacy describes the loss in quality of some measure as a result of introducing privacy-awareness into the game.  We describe two different, but related, quantities which represent this effect and discuss some settings where one may be interested in each.

\subsection{The Agents' Cost}

Our first notion of the price of privacy can be viewed as the utility the \textit{agents} pay in the original game in order to express their value for obfuscation.  Formally, recall that $u_i(\theta_i,\theta_{-i}) = -(1-\alpha)(\theta_i-\bar{\theta})^2 - \alpha(\theta_i-s)^2$ is the utility function in the unmodified game. In the symmetric linear Nash equilibrium, each player chooses the prescribed optimal action $\theta_i^*$ and earns utility $u_i(\theta_i^*,\theta_{-i}^*)$.  Similarly, in the modified game, in the symmetric linear Nash equilibrium chooses the optimal noisy action $\noisy_i^*$.  We can now ask how much worse-off playing the actions $\noisy^*$ in the original game makes a representative player as compared to playing the actions $\theta^*$.  Formally, we can quantify the price of privacy by taking the ratio of these utilities.

\begin{definition}[Agents' price of privacy]
	The \emph{price of privacy} given an information structure is 
	\begin{align*}
	\PoP(\tau_x,\tau_y,\beta) = \frac{\E_i [u_i(\noisy_i^*,\noisy_{-i}^*)]}{\E_i [u_i(\theta_i^*,\theta_{-i}^*)]}
	\end{align*}
where the expected utility is with respect to the game with signal variances $\sigma^2_x$ and $\sigma^2_y$.
\end{definition}

\begin{lemma}[Agents' price of privacy -- form]\label{lem:popform}
	The price of privacy in the game where agents play a linear strategy has the form
\begin{align*} 
\PoP(\tau_x,\tau_y,\beta) =1 +\frac{\nu^*_i}{ \E_i [u_i(\theta_i^*,\theta_{-i}^*)]}
\end{align*}
where the expected utility is with respect to the game with signal variances $\sigma^2_x$ and $\sigma^2_y$.
\end{lemma}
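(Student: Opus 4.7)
The plan is to reduce the ratio directly to the decomposition established in Corollary~\ref{lem:sep}. First I would observe that by Corollary~\ref{cor:params}, the optimal deterministic component of the noisy equilibrium strategy coincides with the original game's symmetric linear Nash equilibrium action: $\theta_i^* = \kappa x_i + (1-\kappa) y$ in both settings. Consequently the denominator $\E_i[u_i(\theta_i^*, \theta_{-i}^*)]$ in $\PoP$ is exactly the expected payoff in the underlying beauty contest evaluated at the SLNE, which is the quantity computed in Proposition~\ref{lem:eukbp}.

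Next I would rewrite the numerator using the separability computation. Writing $\noisy_i^* = \theta_i^* + \eta_i$ with $\E_i[\eta_i] = 0$ and $\Var(\eta_i) = \nu_i^*$, and instantiating Corollary~\ref{lem:sep} at $\beta = 0$ so that $v_i = u_i$, the decomposition immediately yields
\begin{align*}
\E_i[u_i(\noisy_i^*, \noisy_{-i}^*)] = \E_i[u_i(\theta_i^*, \theta_{-i}^*)] + \nu_i^*.
\end{align*}
The two inputs to this step are (i) independence and mean-zero of $\eta_i$ from all other game variables, which kill the cross terms in expanding each of the squared losses $(\noisy_i - s)^2$ and $(\noisy_i - \bar{\noisy})^2$, and (ii) the fact that in the infinite game $\bar{\noisy} = \bar{\theta}$ because independent mean-zero noise integrates to zero over the continuum of agents (the same Fubini-style argument already used in Proposition~\ref{lem:eukbp}).

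Finally I would divide to form the ratio,
\begin{align*}
\PoP(\tau_x,\tau_y,\beta) = \frac{\E_i[u_i(\noisy_i^*, \noisy_{-i}^*)]}{\E_i[u_i(\theta_i^*, \theta_{-i}^*)]} = 1 + \frac{\nu_i^*}{\E_i[u_i(\theta_i^*, \theta_{-i}^*)]},
\end{align*}
which is the claimed identity.

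I do not anticipate a serious obstacle: the lemma is essentially a one-line consequence of the separability already established. The only mild subtlety is the handling of the averaged noise in $\bar{\noisy}$, which collapses cleanly in the infinite game; in the finite case one would instead invoke Lemma~\ref{lem:finite-sep} and pick up the additional $\alpha + (1 - 1/n)^2(1-\alpha)$ factor multiplying $\nu_i^*$, giving the analogous finite-agent version of the identity.
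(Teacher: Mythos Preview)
Your proposal is correct and follows essentially the same route as the paper: invoke the separability result (Corollary~\ref{lem:sep}) to split the numerator into the privacy-unaware utility plus the variance penalty, use the mean-zero property of the noise to replace $\noisy_{-i}$ by $\theta_{-i}$ (equivalently $\bar{\noisy}=\bar{\theta}$ in the continuum), and divide. The paper phrases the second step as the identity $\E_i[u_i(\theta_i,\noisy_{-i})]=\E_i[u_i(\theta_i,\theta_{-i})]$ rather than $\bar{\noisy}=\bar{\theta}$, but these are the same observation, and your remark about the finite-$n$ correction factor from Lemma~\ref{lem:finite-sep} is also in line with the paper's development.
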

\mute{\begin{proof}
	First note that 
\begin{align*}
\E_i[u_i(\theta_i, \noisy_{-i})]= \E_i[u_i(\theta_i, \theta_{-i})] 
\end{align*}
because the noise added to $\theta_i$ has a mean of zero.  Now, \Cref{lem:sep} lets us write
\begin{align*}
\PoP(\tau_x,\tau_y,\beta) = \frac{\E_i [u_i(\noisy_i^*,\noisy_{-i}^*)]}{\E_i [u_i(\theta_i^*,\theta_{-i}^*)]}  = \frac{\E_i[ u_i(\noisy_i^*,\noisy_{-i}^*)] + \nu^*_i }{\E_i[ u_i(\theta_i^*,\theta_{-i}^*)]}
\end{align*}
where we have factored out all of the negative signs.  Combining with the previous part, we have that 
\begin{align*}
\PoP(\tau_x,\tau_y,\beta) = \frac{\E_i[ u_i(\noisy_i^*,\noisy_{-i}^*)] + \nu^*_i }{\E_i[ u_i(\theta_i^*,\theta_{-i}^*)]} = \frac{\E_i[ u_i(\noisy_i^*,\theta{-i}^*)] + \nu^*_i }{\E_i[ u_i(\theta_i^*,\theta_{-i}^*)]} = 1 + \frac{\nu^*_i}{\E_i [u_i(\theta_i^*,\theta_{-i}^*)]},
\end{align*}
as desired.
\end{proof}}

\begin{theorem}[Price of privacy -- value]\label{thm:pop}

	In the game with finitely many agents, the price of privacy is:
	\begin{align*}
	\PoP(\tau_x,\tau_y,\beta) = 1+\frac{\nu^*_i}{\alpha (\kappa^2 \sigma^2_x + (1-\kappa)^2\sigma^2_y) +\frac{(1-\alpha)\kappa^2(n-1)}{n}\left.\sigma^2_x \right.},
	\end{align*}
	
	for 	
	\begin{align*}
	\nu_{i,prec}^* = \sqrt{\frac{\beta}{1-\beta} \left(\alpha + (1-\alpha)\left(1-\frac{1}{n}\right)^2\right)} \ \  \text{ or }\ \  \nu_{i,ent}^* = {\frac{\beta}{1-\beta} \left(\alpha + (1-\alpha)\left(1-\frac{1}{n}\right)^2\right)} \end{align*} 
	if we measure the privacy loss using precision or entropy, respectively, and 
	
	\begin{align*}
	\kappa = \frac{\alpha n^2 \tau_x}{\alpha n^2 \tau_x + \left((n-1)^2 +\alpha (2n-1)\right)\tau_y}.
	\end{align*}

\end{theorem}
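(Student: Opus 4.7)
The plan is to assemble the final formula by combining three previously established results; the theorem is essentially a direct substitution exercise that collects in one place the values of the optimal noise variance, the equilibrium utility, and the mixing weight $\kappa$ in the finite-player game. No new equilibrium analysis is needed.

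First I would apply \Cref{lem:popform} to reduce the claim to evaluating $\PoP = 1 + \nu^*_i / |\E_i[u_i(\theta^*_i,\theta^*_{-i})]|$. This step relies on \Cref{cor:finiteparams}, which showed that the optimal deterministic component of the privacy-aware action is exactly the $\theta^*_i = \kappa x_i + (1-\kappa)y$ from the original symmetric linear Nash equilibrium; hence the denominator of interest is nothing other than the equilibrium expected utility of the privacy-unaware game.

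Next I would substitute the closed form from \Cref{lem:finite-eukbp} for that equilibrium expected utility into the denominator. Concretely,
\[
\E_i[u_i \mid s] = -\alpha(\kappa^2\sigma^2_x + (1-\kappa)^2\sigma^2_y) - \frac{(1-\alpha)\kappa^2(n-1)}{n}\sigma^2_x,
\]
which does not depend on $s$, so its magnitude is exactly the denominator appearing in the theorem.

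Finally I would plug in the two values $\nu^*_{i,\text{prec}}$ and $\nu^*_{i,\text{ent}}$ from \Cref{cor:finiteparams} into the numerator, giving the stated two cases, and record the value of $\kappa$ from \Cref{lem:finitekappa} in the denominator. The only real obstacle is careful sign bookkeeping: $\PoP$ is a ratio of two negative quantities, and \Cref{lem:popform} implicitly absorbs the signs, so one must verify that the denominator of the final formula is the magnitude of the equilibrium utility rather than its signed value. Once this is done the two stated expressions drop out verbatim.
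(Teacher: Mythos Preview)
Your proposal is correct and mirrors the paper's own argument: the paper derives the price of privacy by combining the functional form in \Cref{lem:popform} with the closed-form equilibrium utility (\Cref{lem:finite-eukbp} in the finite case), and then inserts the values of $\nu_i^*$ and $\kappa$ from \Cref{cor:finiteparams} and \Cref{lem:finitekappa}. Your added remark about sign bookkeeping is a helpful clarification but does not depart from the paper's approach.
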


\begin{proposition}
	
	In the game with infinitely many agents, the price of privacy is:
	\begin{align*}
	\PoP(\tau_x,\tau_y,\beta) = 1+\frac{\nu^*_i}{(\alpha(1-\kappa)^2\sigma^2_y + \kappa^2\sigma^2_x)},
	\end{align*}
	
	where $\nu^*_i$ is $\sqrt{\frac{\beta}{1-\beta}}$ if privacy is measured by precision and $\frac{\beta}{1-\beta}$ if privacy is measured by entropy and recalling that $\kappa= \frac{\tau_x \alpha}{\tau_x \alpha +\tau_y}$.
	
\end{proposition}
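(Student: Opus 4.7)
The plan is to derive this as a direct adaptation of Theorem~\ref{thm:pop} to the continuum setting, substituting in the infinite-game analogues of each ingredient. The key observation is that Lemma~\ref{lem:popform} was established in general for any game in which agents play a noisy linear strategy with mean-zero added noise; its derivation only used the separability property of Corollary~\ref{lem:sep} and the fact that $\E_i[u_i(\theta_i,\noisy_{-i})]=\E_i[u_i(\theta_i,\theta_{-i})]$ when the noise added by the other players has mean zero. Neither of these facts depended on whether the number of players was finite or infinite, so the identity
\begin{align*}
\PoP(\tau_x,\tau_y,\beta) = 1+\frac{\nu^*_i}{\E_i[u_i(\theta_i^*,\theta_{-i}^*)]}
\end{align*}
carries over without modification to the continuum game.

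From here, I would simply substitute the two infinite-game quantities that appear in the formula. First, Proposition~\ref{lem:eukbp} gives
\begin{align*}
\E_i[u_i(\theta_i^*,\theta_{-i}^*)\mid s] = -\alpha(1-\kappa)^2\sigma^2_y - \kappa^2\sigma^2_x,
\end{align*}
so the denominator becomes $\alpha(1-\kappa)^2\sigma^2_y + \kappa^2\sigma^2_x$ after absorbing the negative sign (note this quantity does not actually depend on $s$, so the conditional expectation equals the unconditional one). Second, Corollary~\ref{cor:params} gives the optimal noise variances in the infinite game, namely $\nu_{i,\text{prec}}^* = \sqrt{\beta/(1-\beta)}$ and $\nu_{i,\text{ent}}^* = \beta/(1-\beta)$. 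Plugging these into the numerator yields exactly the claimed expression.

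The $\kappa$ appearing in the denominator is the infinite-game value $\kappa = \tau_x\alpha/(\tau_x\alpha+\tau_y)$ from Proposition~\ref{cor:infintenash}, as required. Since each ingredient is a direct quotation of an earlier result and the substitution is purely algebraic, there is no real obstacle to overcome; the only subtlety worth double-checking is that the separability argument in Corollary~\ref{lem:sep} remains valid in the continuum, which it does because the proof used only the independence of $\eta_i$ from the other game quantities and the infinitesimal impact of $\eta_i$ on $\bar{\noisy}$ — both of which hold (in fact the second holds more strongly) in the infinite-player regime.
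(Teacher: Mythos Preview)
Your proposal is correct and follows essentially the same approach as the paper: the paper's proof consists of a single sentence invoking Lemma~\ref{lem:popform} for the functional form and Proposition~\ref{lem:eukbp} for the expected utility, which is exactly the substitution you describe. Your additional remarks justifying why Lemma~\ref{lem:popform} and Corollary~\ref{lem:sep} carry over to the continuum are more explicit than the paper's treatment but do not constitute a different route.
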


\begin{proof}
	This follows directly from the expected utility computation in \Cref{lem:eukbp}  and the functional form of the price of privacy in \Cref{lem:popform}.
\end{proof}

We can observe several features of the price of privacy. First, it can be \emph{arbitrarily large}, depending on the value of the parameters.  If $\beta$ is close to 1, then agents have a relatively high value for obfuscation and will add large amounts of noise to their actions in order to protect their private signals.  As $\beta$ becomes closer to zero, agents do not care too much about privacy and their actions become more and more similar to those in the original game, so the price of privacy is minimized when $\beta=0$.  Second, if we fix a value of $\beta$, then the price of privacy decreases as the variance of the public and private signals \textit{increase}.  This is because the value of $\beta$ determines the noise that players will add to their signals, and therefore fixes the numerator of the fractional part of the price of privacy.  The price of privacy is \textit{decreasing} in any factor that improves the expected utility, such as decreasing the signal variances or $n$ (as in \Cref{cor:compstat}).  The effect is ambiguous when changing $\sigma^2_y$, due to the over-weighting phenomenon.

This ratio measures the degree to which agents are worse-off in the coordination and guessing components by playing the privacy-aware equilibrium noisy actions as compared to the privacy-unaware equilibrium actions for a fixed $\alpha$, $\sigma^2_x$, and $\sigma^2_y$ and the realizations of $x_i$ and $y$.  This can be thought of as describing the increased risk of making the `wrong' decision in an opinion-aggregation setting or of misvaluing an asset in a financial markets setting as a result of agents adding noise to their actions.

\subsection{The Aggregator's Cost}

Suppose instead we are viewing this game from the position of a data aggregator.  The aggregator can observe the actions of some finite number of agents $n$ and takes the average of these to estimate the true state of the world $s$ by taking a simple average.  The aggregator does not know the realizations of any of the $x_i$ or $y$, but she does know the signal variances and the value of $\kappa$;  she also knows that $\E[\theta_i]=\E[\noisy_i]=s$ for all agents $i$, so the sample average will provide an unbiased estimate of $s$.
If we define the aggregator's `utility' to be the variance of its sample about $s$, then we can make the following observation:

\begin{lemma}[Aggregator's utility]
	\label{lem:util-agg}
	Consider an instance of the privacy-aware game where an aggregator observes the actions of $n$ agents (either all all of the agents in the finite case or some uniformly random sample in the finite or infinite case), the  signal variances are $\sigma^2_x$ and $\sigma^2_y$, and players choose to add mean-zero noise with variance $\nu^*_i$.  Then the utility of the aggregator, as measured by the variance of the sample average about the true state $s$ is given by
	\begin{align*}
	\mathcal{U}_{agg}(\sigma^2_x,\sigma^2,\nu^*_i,n) &=\E\left[  \left(\left.\left(\frac{1}{n}\sum\limits_{i=1}^n \noisy_i\right)  - s \right)^2       \right\vert s \right]\\
	&=\frac{\kappa^2}{n}\sigma^2_x + \frac{\nu^*_i}{n} + (1-\kappa)^2\sigma^2_y.
	\end{align*}
	
	We can find the aggregator's utility in the privacy-unaware game by letting $\nu^*_i=0$, which recovers a result in \citet{morrisshin}.
\end{lemma}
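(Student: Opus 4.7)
The plan is to reduce the aggregator's squared deviation to a sum of independent mean-zero random variables whose variances we already know. I would start by substituting the equilibrium action into the sample mean. Each agent plays $\noisy_i = \kappa x_i + (1-\kappa)y + \eta_i$, and conditional on $s$ we can write $x_i = s + \varepsilon_{x_i}$ with $\varepsilon_{x_i}\sim\mathcal{N}(0,\sigma^2_x)$ and $y = s + \varepsilon_y$ with $\varepsilon_y\sim\mathcal{N}(0,\sigma^2_y)$, while $\eta_i$ is mean-zero with variance $\nu^*_i$. Plugging these in and pulling the $s$ terms out gives
\begin{align*}
\frac{1}{n}\sum_{i=1}^n \noisy_i - s = \frac{\kappa}{n}\sum_{i=1}^n \varepsilon_{x_i} + (1-\kappa)\varepsilon_y + \frac{1}{n}\sum_{i=1}^n \eta_i.
\end{align*}

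Next I would square this expression and take expectation conditional on $s$. The crucial observation is that the three families of random variables $\{\varepsilon_{x_i}\}_i$, $\varepsilon_y$, and $\{\eta_i\}_i$ are mutually independent and each has mean zero (the $\eta_i$ by \Cref{clm:meanzero}, the signal noises by assumption). Hence all cross terms in the expansion vanish, and the conditional second moment decomposes into three variance contributions:
\begin{align*}
\mathbb{E}\!\left[\left(\tfrac{1}{n}\sum_i\noisy_i - s\right)^2 \,\Big|\, s\right] = \frac{\kappa^2}{n^2}\mathrm{Var}\!\left(\sum_{i=1}^n \varepsilon_{x_i}\right) + (1-\kappa)^2\mathrm{Var}(\varepsilon_y) + \frac{1}{n^2}\mathrm{Var}\!\left(\sum_{i=1}^n \eta_i\right).
\end{align*}

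Finally, I would evaluate each of the three variances using independence across $i$: $\mathrm{Var}(\sum_i\varepsilon_{x_i}) = n\sigma^2_x$, $\mathrm{Var}(\varepsilon_y)=\sigma^2_y$, and $\mathrm{Var}(\sum_i\eta_i)=n\nu^*_i$. Substituting and simplifying yields $\tfrac{\kappa^2\sigma^2_x}{n} + (1-\kappa)^2\sigma^2_y + \tfrac{\nu^*_i}{n}$, which is exactly the claimed formula. Setting $\nu^*_i = 0$ recovers the expression for the privacy-unaware setting in \citet{morrisshin}.

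There is essentially no obstacle here beyond bookkeeping; the only subtle point is making sure the independence argument is valid for the case where the aggregator samples a subset in the infinite game. I would briefly note that since agents' $\eta_i$ draws are i.i.d.\ across agents and the private signals $x_i$ are conditionally (on $s$) i.i.d., any uniformly chosen sample of size $n$ yields the same joint distribution for the noise terms, so the computation carries over unchanged. No convexity, optimization, or equilibrium analysis is invoked—we only need \Cref{clm:meanzero} to justify that $\E[\eta_i]=0$ and the form of $\noisy_i$ from \Cref{thm:main-thm}.
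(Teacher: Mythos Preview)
Your proof is correct and follows essentially the same route as the paper: substitute $\noisy_i=\kappa x_i+(1-\kappa)y+\eta_i$, decompose the signals as $s$ plus mean-zero noise, cancel the $s$ terms, and use mutual independence of $\{\varepsilon_{x_i}\}$, $\varepsilon_y$, and $\{\eta_i\}$ to reduce the squared deviation to the sum of three variances. Your additional remark on why the computation is unchanged under uniform sampling in the infinite game is a nice clarification that the paper leaves implicit.
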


\mute{\begin{proof}

\begin{align*}
\mathcal{U}_{agg}(\sigma^2_x,\nu_i^*,\sigma^2,n) &=\E\left[  \left(\left.\left(\frac{1}{n}\sum\limits_{i=1}^n \noisy_i\right)  - s \right)^2       \right\vert s \right]\\
&=\E\left[  \left(\left.\left(\frac{1}{n}\sum\limits_{i=1}^n \kappa x_i + (1-\kappa)y + \eta_i \right)  - s \right)^2       \right\vert s \right]\\
&=\E\left[  \left(\left.\left(\frac{1}{n}\sum\limits_{i=1}^n \kappa x_i + \eta_i \right) + (1-\kappa)y  - s \right)^2       \right\vert s \right]\\
&=\E\left[  \left(\left.\left(\frac{1}{n}\sum\limits_{i=1}^n \kappa (s + \varepsilon_{x_i} ) +\eta_i \right) + (1-\kappa)(s+\varepsilon_y)  - s \right)^2       \right\vert s \right]\\
&=\E\left[  \left(\left.\left(\frac{1}{n}\sum\limits_{i=1}^n \kappa\varepsilon_{x_i} + \eta_i \right) + (1-\kappa)\varepsilon_y  \right)^2       \right\vert s \right]\\
&=\E\left[  \left(\frac{\kappa}{n}\sum\limits_{i=1}^n \varepsilon_{x_i} + \frac{1}{n} \sum\limits_{i=1}^{n} \eta_i  + (1-\kappa)\varepsilon_)   \right)^2        \right]\\
&=\frac{\kappa^2}{n}\sigma^2_x + \frac{\nu^*_i}{n} + (1-\kappa)^2\sigma^2_y
\end{align*}

where we have decomposed $x_i$ and $y$ into $s$ plus mean-zero Gaussian noise $\varepsilon_{x_i}$ and $\varepsilon_y$.
\end{proof}}

We can again take the ratio of the aggregator's utility in the privacy-aware game to that in the privacy-unaware game to quantify the extent to which privacy-awareness degrades the quality of the aggregator's sample mean as an estimate of $s$.

\begin{lemma}[The Aggregator's price of privacy]
	
	The price of privacy for the aggregator in equilibrium is given by
	
	\begin{align*}
	\PoP_{agg}(\sigma^2_x,\sigma^2_y,n) = \frac{\E[\mathcal{U}_{agg}(\sigma^2_x,\sigma^2_y,\nu^*_i,n)]}{\E[\mathcal{U}_{agg}(\sigma^2_x,\sigma^2_y,0,n)]}
	=1+\frac{\nu^*_i}{\kappa^2\sigma^2_x + n(1-\kappa)^2\sigma^2_y}
	\end{align*}
		where $\nu^*_i$ is $\sqrt{\frac{\beta}{1-\beta}}$ if privacy is measured by precision and $\frac{\beta}{1-\beta}$ if privacy is measured by entropy and recalling that $\kappa= \frac{\tau_x \alpha}{\tau_x \alpha +\tau_y}$ in the game with infinitely many agents and $\kappa = \frac{\alpha n^2 \tau_x}{\alpha n^2 \tau_x + \left((n-1)^2 +\alpha (2n-1)\right)\tau_y}$ in the game with finitely many agents.
\end{lemma}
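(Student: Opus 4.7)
The plan is to substitute directly into the formula from the preceding Aggregator's utility lemma. First I would apply that lemma to obtain the numerator
\[
\E[\mathcal{U}_{agg}(\sigma^2_x,\sigma^2_y,\nu^*_i,n)] \;=\; \frac{\kappa^2}{n}\sigma^2_x + \frac{\nu^*_i}{n} + (1-\kappa)^2\sigma^2_y.
\]
Next, I would observe that the privacy-unaware benchmark corresponds to setting $\nu^*_i=0$ in the same formula: without privacy concerns, agents play the deterministic action $\theta_i = \kappa x_i + (1-\kappa)y$, which is the noisy action with zero-variance noise, so the aggregator's utility formula specializes to $\frac{\kappa^2}{n}\sigma^2_x + (1-\kappa)^2\sigma^2_y$. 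This also recovers the expression from \citet{morrisshin} as noted in the previous lemma.

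Then I would form the ratio, multiply the numerator and denominator of the inner fractions through by $n$ to clear the $\tfrac{1}{n}$ factors, and split off the common term so that the ratio becomes $1$ plus a remaining fractional correction:
\[
\frac{\kappa^2\sigma^2_x + \nu^*_i + n(1-\kappa)^2\sigma^2_y}{\kappa^2\sigma^2_x + n(1-\kappa)^2\sigma^2_y} \;=\; 1 + \frac{\nu^*_i}{\kappa^2\sigma^2_x + n(1-\kappa)^2\sigma^2_y},
\]
which matches the claim. The identification of $\nu^*_i$ as $\sqrt{\beta/(1-\beta)}$ or $\beta/(1-\beta)$, depending on whether privacy is measured by precision or entropy, is then inherited directly from \Cref{cor:finiteparams,cor:params}, and the values of $\kappa$ in the finite and infinite settings come from \Cref{lem:finitekappa} and \Cref{cor:infintenash} respectively.

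There is essentially no obstacle; the lemma is a direct algebraic consequence of the aggregator's utility formula together with the characterization of $\nu^*_i$ and $\kappa$ already established. The only conceptual point worth flagging is the identification of the privacy-unaware baseline with $\nu^*_i = 0$, which is immediate because the sample-mean computation in the aggregator's utility lemma decomposes the total variance about $s$ into contributions from $\varepsilon_{x_i}$, $\varepsilon_y$, and $\eta_i$, and the last term vanishes precisely when agents play the deterministic equilibrium from the original game.
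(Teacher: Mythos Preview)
Your proposal is correct and follows exactly the same approach as the paper, which simply states that the result ``follows from plugging in the forms for the expected utilities from \Cref{lem:util-agg} and simplifying.'' You have merely spelled out the substitution and the one-line algebraic rearrangement that the paper leaves implicit.
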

\begin{proof}
	This follows from plugging in the forms for the expected utilities from \Cref{lem:util-agg} and simplifying.
\end{proof}

We can again observe that, for the same reason as the agents' price of privacy, that this quantity can be arbitrarily large as $\beta$ approaches one and that it is decreasing as the signal variances grow.  We can also observe that this price decreases towards one as $n$ grows, all else fixed. This is consistent with the intuition that an aggregator can offset the cost of agents adding more noise to their actions by making more observations.  

The agents' and aggregator's prices of privacy are somewhat similar both in functional form and in interpretation and this results from their values almost aligning.  The agents' cost is measured by a weighted deviation from both the true state and the average while the aggregator's is measured by a deviation from the true state alone.  Any configuration of the game parameters which cause $\kappa=1$ (i.e. agents ignore the public signal) causes them to be exactly equal.  This again highlights the problem of over-weighting the public signal, and this issue is exacerbated for the aggregator, who, in effect, observes $n$ `copies' of $y$ rolled into the actions $\theta_i$ and $\noisy_i$.  If the aggregator knows the realization of $y$, she can, similarly to the agents, use that information to find (an estimate of) the private signals $x_i$ from the observations of $\theta_i$ or $\noisy_i$.  This motivates the same privacy concern as in the case where other agents were trying to infer agent $i$'s private signal, this time in the context of a distrusted central aggregator.

\section{Discussion and Future Directions}
In this work, we used a game-theoretic model to examine a game with agents acting to 
obscure their private information.  We began with the Keynesian Beauty Contest and modified the agents' utility functions so as to endogenize a notion of privacy.  Using this, we can quantify the social costs of this `selfish' privacy protection as a `price of privacy', both with respect to the agents as well as an (untrusted) central aggregator.

A clear next-step would be to repeat this analysis for other stylized models of strategic interaction, such as bargaining games or resource allocation problems, where individuals seek to balance expressing their preferences well enough to achieve a high payoff but choose to deviate slightly from this in order to obscure their true valuations.

Finally, the high-level motivation in this work is to examine how strategic agents behave when they are concerned with privacy, so another interesting direction would be to integrate this approach with existing formal notions of privacy.  For example, examining what agents' utility functions and action spaces must look like in order for there to exist an equilibrium which implements a locally differentially private mechanism could be a fruitful avenue of research.

\ifarxiv

\subsection*{Acknowledgments}

We would like to graciously thank Annie Liang and Aaron Roth for their guidance in the research process and feedback on previous drafts of this project.  We would additionally like to thank the anonymous reviewers for their helpful feedback.
\bibliographystyle{plainnat}
\else
\begin{acks}

\end{acks}

\bibliographystyle{ACM-Reference-Format}
\fi
\bibliography{bib}

\ifarxiv
\pagebreak
\appendix
\section{Omitted Proofs}\label{apx:omitted}

We present here the statements and proofs of the results whose proofs were omitted from the main text.  We also present from first-principles the proofs of the analogous results in the game with infinitely many players, which demonstrates that the proofs where we took the limit of the finite result as the number of players grows to infinity are indeed correct.

\subsection*{Omitted Proofs from \Cref{sec:framework}}
\begin{lemma*}[\ref{lem:finite-foc}]
	In equilibrium, an agent's optimal choice of $\theta_i^*$ must satisfy
\begin{align*}
\theta_i^* = \frac{\alpha n^2 \E_i[s]}{\alpha (2n-1)+(n-1)^2} + \frac{(1-\alpha)(n-1)\E_i\left[\sum\limits_{j\neq i}\theta_{j}\right]}{\alpha (2n-1)+(n-1)^2}
\end{align*}

\end{lemma*}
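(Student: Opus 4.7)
The plan is to derive the first-order condition directly by differentiating the expected utility with respect to $\theta_i^*$, being careful to account for the fact that in the finite-player game, agent $i$'s action appears \emph{both} as her own play and as a term inside the average $\bar{\theta} = \tfrac{1}{n}\theta_i + \tfrac{1}{n}\sum_{j\neq i}\theta_j$. This contrasts with the infinite-agent case, where the effect on $\bar{\theta}$ is infinitesimal.

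First, I would substitute the finite-sample expression for $\bar{\theta}$ into the coordination term to obtain
\begin{align*}
u_i(\theta_i^*,\theta_{-i}) = -\alpha(\theta_i^* - s)^2 - \frac{1-\alpha}{n^2}\left((n-1)\theta_i^* - \sum_{j\neq i}\theta_j\right)^2,
\end{align*}
where the $\tfrac{n-1}{n}$ factor emerges from collecting the $\theta_i^*$ terms. Next, I would take the expectation $\E_i[\cdot \mid \mathcal{I}_i]$, using that $\theta_i^*$ is player $i$'s own (deterministic) choice and so $\E_i[\theta_i^*]=\theta_i^*$, while $s$ and the $\theta_j$ for $j\neq i$ are random from $i$'s viewpoint.

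Then I would differentiate $\E_i[u_i]$ with respect to $\theta_i^*$, set the derivative equal to zero, and obtain
\begin{align*}
0 = -2\alpha(\theta_i^* - \E_i[s]) - \frac{2(1-\alpha)(n-1)}{n^2}\left((n-1)\theta_i^* - \E_i\!\left[\sum_{j\neq i}\theta_j\right]\right).
\end{align*}
Collecting the coefficients of $\theta_i^*$ gives the factor $\alpha + \tfrac{(1-\alpha)(n-1)^2}{n^2}$ on the left. Multiplying through by $n^2$ turns this into $\alpha n^2 + (1-\alpha)(n-1)^2 = \alpha(2n-1) + (n-1)^2$, which is exactly the denominator appearing in the claim. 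Solving the linear equation for $\theta_i^*$ then yields the stated expression.

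The key conceptual step is recognizing how the $\tfrac{n-1}{n}$ factor arises from the player's own contribution to $\bar{\theta}$ and tracking how it propagates through the derivative; everything else is routine algebra. The main obstacle, such as it is, is arithmetic bookkeeping, namely simplifying $\alpha n^2 + (1-\alpha)(n-1)^2$ to $\alpha(2n-1) + (n-1)^2$ and confirming the coefficient on $\E_i[\sum_{j\neq i}\theta_j]$ comes out to $(1-\alpha)(n-1)$ after the factor of $n^2$ is absorbed. No deeper tools are needed; the result follows from quadratic optimization of a concave objective, and concavity in $\theta_i^*$ guarantees the stationary point is indeed the maximum.
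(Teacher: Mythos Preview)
Your proposal is correct and follows essentially the same route as the paper: substitute the finite-player average into the coordination term to obtain the $\tfrac{n-1}{n}$ factor, take expectations, differentiate, and solve the resulting linear equation, with the same algebraic simplification $\alpha n^2 + (1-\alpha)(n-1)^2 = \alpha(2n-1)+(n-1)^2$. Your explicit mention of concavity to certify the maximum is a small addition beyond the paper's version, but otherwise the arguments coincide.
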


\begin{proof}
	Fix the actions of all other players $j\neq i$ and consider the utility of player $i$, recalling that in the finite setting, 
	$$\bar{\theta}= \frac{1}{n}\theta_i^* + \frac{1}{n}\sum\limits_{j\neq i}\theta_{j}.$$
	
	We have
	\begin{align*}
	u_i(\theta_i^*,\theta_{-i}) 
	&= -\alpha(\theta_i^*-s)^2 - (1-\alpha)(\theta_i^*-\bar{\theta})^2\\
	&= -\alpha(\theta_i^*-s)^2 - (1-\alpha)\left(\theta_i^*-\left(\frac{1}{n}\theta_i^* + \frac{1}{n}\sum\limits_{j\neq i}\theta_{j}\right)\right)^2\\
	&= -\alpha(\theta_i^*-s)^2 - (1-\alpha)\left(\frac{n-1}{n}\theta_i^*-\left( \frac{1}{n}\sum\limits_{j\neq i}\theta_{j}\right)\right)^2\\
	&= -\alpha(\theta_i^*-s)^2 - (1-\alpha)\frac{1}{n^2}\left((n-1)\theta_i^*-\left(\sum\limits_{j\neq i}\theta_{j}\right)\right)^2
	\end{align*}

	Player $i$'s first order condition will be to maximize this in expectation. Taking an expectation, we have 
	\begin{align*}
	\E_i[u_i(\theta_i,\theta_{-i})] 
	&= \E_i\left[ -\alpha(\theta_i^*-s)^2 - (1-\alpha)\frac{1}{n^2}\left((n-1)\theta_i^*-\left(\sum\limits_{j\neq i}\theta_{j}\right)\right)^2 \right]\\
	&= -\alpha\E_i[(\theta_i^*-s)^2] - (1-\alpha)\frac{1}{n^2} \E_i\left[\left((n-1)\theta_i^*-\left(\sum\limits_{j\neq i}\theta_{j}\right)\right)^2\right]\\
	&= -\alpha\E_i[(\theta_i^*)^2-2s\theta_i^*+s^2]\\
	&\qquad - (1-\alpha)\frac{1}{n^2} \E_i\left[ (n-1)^2(\theta_i^*)^2 -2(n-1)\theta_i^* \left(\sum\limits_{j\neq i}\theta_{j}\right) + \left(\sum\limits_{j\neq i}\theta_{j}\right)^2 \right]
	\end{align*}

	Differentiating with respect to agent $i$'s choice of $\theta_i$ gives us that, in equilibrium, $\theta_i^*$ must satisfy
	\begin{align*}
	\frac{\partial}{\partial \theta_i^*} &= 0 =
	-\alpha(2\theta_i^* - 2\E_i[s]) - (1-\alpha)\frac{1}{n^2} \left(2(n-1)^2\theta_i^* - 2(n-1)\E_i\left[\sum\limits_{j\neq i}\theta_{j}\right]\right)\\
	&= \alpha(\theta_i^* - \E_i[s]) + \frac{1-\alpha}{n^2} \left((n-1)^2\theta_i^* - (n-1)\E_i\left[\sum\limits_{j\neq i}\theta_{j}\right]\right)\\
	&= \theta_i^*\left( \alpha + \frac{(1-\alpha)(n-1)^2}{n^2}  \right) - \left(\alpha\E_i[s] + \frac{(1-\alpha)(n-1)}{n^2}\E_i\left[\sum\limits_{j\neq i}\theta_{j}\right]\right)
	\end{align*}
	
	Where we have used the fact that $\E_i[\theta_i^*]=\theta_i^*$, since it is agent $i$'s choice and not a random variable.
	
	Rearranging yields the claim.

\end{proof}

\begin{proposition*}[\ref{lem:FOC}]
	In the game with infinitely many agents, in equilibrium, an agent's optimal choice of $\theta_i^*$ must satisfy
	\begin{align*}
	\theta_i^* = \alpha \E_i[s]+(1-\alpha) \E_i[\bar{\theta}]
	\end{align*}
\end{proposition*}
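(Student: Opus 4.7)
The plan is to mirror the structure of the proof of \Cref{lem:finite-foc}, but exploit the measure-theoretic fact that in the continuum setting a single agent's action has no effect on $\bar{\theta} = \int_0^1 \theta_j \, dj$. This should considerably simplify the first-order condition relative to the finite case.

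First, I would write the expected utility of agent $i$ as
\begin{align*}
\E_i[u_i(\theta_i, \theta_{-i})] = -\alpha \E_i[(\theta_i - s)^2] - (1-\alpha)\E_i[(\theta_i - \bar{\theta})^2],
\end{align*}
expand the squares, and use linearity of expectation together with the fact that $\theta_i$ is agent $i$'s chosen action (so $\E_i[\theta_i] = \theta_i$) to pull $\theta_i$ outside the expectations where appropriate. I would then differentiate with respect to $\theta_i$. The crucial step is noting that because $\bar{\theta}$ is an integral over a continuum of players, perturbing the action of a single agent (a measure-zero set) leaves the integral unchanged, so $\partial \bar{\theta}/\partial \theta_i = 0$. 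Hence the chain rule contributions from $\bar{\theta}$ vanish, and the first-order condition reduces to
\begin{align*}
0 = -2\alpha(\theta_i - \E_i[s]) - 2(1-\alpha)(\theta_i - \E_i[\bar{\theta}]),
\end{align*}
which upon rearrangement yields $\theta_i^* = \alpha \E_i[s] + (1-\alpha)\E_i[\bar{\theta}]$.

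As a sanity check, I would also verify that this matches the limit of \Cref{lem:finite-foc} as $n \to \infty$: the coefficient $\alpha n^2 / (\alpha(2n-1) + (n-1)^2)$ tends to $\alpha$, and $(1-\alpha)(n-1)/(\alpha(2n-1)+(n-1)^2)$ combined with $\E_i[\sum_{j\neq i}\theta_j]$ (rescaled appropriately into an average) tends to $(1-\alpha)\E_i[\bar{\theta}]$, which confirms the claim.

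The main obstacle is the justification of $\partial \bar{\theta}/\partial \theta_i = 0$. This is a standard but subtle point in continuum-of-agents models and requires invoking that modifying an integrand on a set of Lebesgue measure zero does not alter the integral; equivalently, one treats $\bar{\theta}$ as an exogenous aggregate from the perspective of any individual optimizer. The rest of the argument is routine differentiation and relies on the independence of $\eta_i$-free signals already exploited in the finite analog.
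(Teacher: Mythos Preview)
Your proposal is correct and follows essentially the same approach as the paper: expand the expected utility, differentiate with respect to $\theta_i$, invoke $\partial\bar{\theta}/\partial\theta_i = 0$ in the continuum (together with $\E_i[\theta_i]=\theta_i$), and rearrange. The limit sanity check against \Cref{lem:finite-foc} is a nice addition that the paper states separately as a remark.
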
\begin{proof}
	Since agent $i$ chooses his action after observing signals $x_i$ and $y$, they optimize their action given those signals. We can write:
	\begin{align*}
	\theta_i^* \in \argmax \E_i [u_i(\theta_i)]
	\end{align*}
	Expanding, we have:
	\begin{align*}
	\E_i [u_i(\theta_i,\theta_{-i})] =-\alpha(\E_i[\theta_i^2 -2 \theta_i s+ s^2]) -(1-\alpha)(\E_i[\theta_i^2 -2\theta_i \bar{\theta} +\bar{\theta}^2]).
	\end{align*}
	We can differentiate with respect to $\theta_i^*$ and set the derivative equal to zero to find
	\begin{align*}
	0=\frac{\partial}{\partial \theta_i} \E_i[u_i(\theta_i,\theta_{-i})] = -\alpha (2\theta_i - 2 \E_i[s])-(1-\alpha)[2\theta_i -\bar{\theta}].
	\end{align*}
	Where we have used the fact that $\E_i[\theta_i^*]=\theta_i^*$, since it is agent $i$'s choice and not a random variable, as well as the fact that $\frac{\partial\bar{\theta}}{\partial\theta_i^*}=0$ because there are infinitely many agents, so agent $i$'s action cannot affect the average of all of the actions.

	Rearranging yields the claim. 
\end{proof}
{
\begin{remark*}
	These results show that the optimal deterministic action \emph{is} a linear combination of expectations about the state and average action; thus, the optimal deterministic action and the optimal linear action are one and the same. Because these propositions hold also for the first-order conditions in the extended game, this will be true in the extended game as well. 
\end{remark*}
}
\begin{remark*}
	If we think of $\sum\theta_j$ as $(n-1)$ times the average action of other players and take the limit of this expression as $n$ goes to infinity, we recover the result for the continuum case, as the coefficient on $\E_i[s]$ becomes $\alpha$ and the coefficient on the expected average action becomes $(1-\alpha)$.
\end{remark*}

\begin{lemma*}[\ref{lem:finitekappa}]
	In the game with finitely many agents, the value of $\kappa$ which supports a symmetric linear Nash equilibrium is
 \begin{align*}
\kappa = \frac{\alpha n^2 \tau_x}{\alpha n^2 \tau_x + \left((n-1)^2 +\alpha (2n-1)\right)\tau_y}
 \end{align*}
\end{lemma*}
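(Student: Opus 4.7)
The plan is to start from the first-order condition in Lemma \ref{lem:finite-foc} and impose the symmetric linear ansatz $\theta_j = \kappa x_j + (1-\kappa) y$ for all $j$, then solve for the unique value of $\kappa$ that makes this profile self-consistent.

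First I would compute the two conditional expectations appearing in the FOC. Since agent $i$ has an improper uniform prior on $s$ and observes two independent Gaussian signals $x_i$ and $y$ with precisions $\tau_x$ and $\tau_y$, her posterior mean is the standard precision-weighted average
\begin{align*}
\E_i[s] = \frac{\tau_x x_i + \tau_y y}{\tau_x + \tau_y}.
\end{align*}
For any other player $j \neq i$, the signal $x_j$ is unobserved by $i$ and, conditional on $i$'s information, has mean $\E_i[x_j] = \E_i[s]$. Hence
\begin{align*}
\E_i\!\left[\sum_{j\neq i}\theta_j\right] = (n-1)\bigl(\kappa \E_i[s] + (1-\kappa) y\bigr).
\end{align*}

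Next I would substitute these into Lemma \ref{lem:finite-foc}, writing $\mathcal{C} = 1/\bigl(\alpha(2n-1)+(n-1)^2\bigr)$ for brevity, to obtain
\begin{align*}
\theta_i^* = \mathcal{C}\Bigl(\alpha n^2 \E_i[s] + (1-\alpha)(n-1)^2 \bigl(\kappa \E_i[s] + (1-\kappa) y\bigr)\Bigr).
\end{align*}
Plugging in the explicit form of $\E_i[s]$ and collecting terms in $x_i$ and $y$ expresses $\theta_i^*$ as a linear combination of $x_i$ and $y$. For a symmetric linear equilibrium we need the coefficient on $x_i$ to equal $\kappa$; the coefficient on $y$ then automatically equals $1-\kappa$ since the FOC preserves the affine structure and the weights on $x_i$ and $y$ inside $\E_i[s]$ sum to $1$, as do the coefficients on the two pieces of $\theta_i^*$.

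Matching the $x_i$ coefficient gives a single linear equation in $\kappa$:
\begin{align*}
\kappa = \mathcal{C}\left(\alpha n^2 \frac{\tau_x}{\tau_x+\tau_y} + (1-\alpha)(n-1)^2 \kappa \frac{\tau_x}{\tau_x+\tau_y}\right).
\end{align*}
Solving this for $\kappa$, multiplying numerator and denominator through by $\tau_x + \tau_y$, and substituting back the value of $\mathcal{C}$ should yield, after cancellation of the $(1-\alpha)(n-1)^2$ terms against the $\alpha(2n-1)+(n-1)^2$ in $\mathcal{C}^{-1}$,
\begin{align*}
\kappa = \frac{\alpha n^2 \tau_x}{\alpha n^2 \tau_x + \bigl((n-1)^2 + \alpha(2n-1)\bigr)\tau_y}.
\end{align*}

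The main obstacle is purely algebraic: the simplification step where the $(n-1)^2$ contribution from $(1-\alpha)(n-1)^2$ combines with the $\alpha(2n-1) + (n-1)^2$ factor inside $\mathcal{C}^{-1}$ to leave only the $\alpha(2n-1)+(n-1)^2$ multiple of $\tau_y$ in the denominator. Once that cancellation is verified, uniqueness follows because the fixed-point equation for $\kappa$ is linear with a nonzero coefficient on $\kappa$, and existence follows by construction since the resulting $\theta_i^*$ satisfies the FOC identically in $x_i$ and $y$.
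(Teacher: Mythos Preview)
Your proposal is correct and follows essentially the same approach as the paper: impose the symmetric linear ansatz, compute $\E_i[s]$ and $\E_i\bigl[\sum_{j\neq i}\theta_j\bigr]$ exactly as you do, substitute into the first-order condition of Lemma~\ref{lem:finite-foc} with the same shorthand $\mathcal{C}$, match the coefficient on $x_i$ to $\kappa$, and solve the resulting linear fixed-point equation. The paper even makes the same remark that the coefficients on $x_i$ and $y$ sum to one, so it suffices to match only the $x_i$ coefficient.
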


\begin{proof}
	
	Suppose that all agents play $\theta_i = \kappa_i + (1-\kappa)y$, and consider a representative agent $i$.
	
	Since agent $i$ is Bayesian, she aggregates the public and private signals according to their precisions to compute $$\E_i[s] = \frac{\tau_x x_i + \tau_y y}{\tau_x+\tau_y}$$ and her belief about any other agent $j$'s private signal $x_j$ is that $$\E_i[x_j] = \E_i[s].$$

Supposing that each agent acts optimally and chooses $\theta^*_i$ according to \Cref{lem:finite-foc}, agent $i$ will write the following.  We let $\mathcal{C} = \frac{1}{\alpha (2n-1)+(n-1)^2}$ for clarity of notation.

\begin{align*}
\theta_i^*  &= \frac{\alpha n^2 \E_i[s]}{\alpha (2n-1)+(n-1)^2} + \frac{(1-\alpha)(n-1)\E_i\left[\sum\limits_{j\neq i}\theta_{j}\right]}{\alpha (2n-1)+(n-1)^2}\\
&= \mathcal{C} \left(\alpha n^2\E_i[s] + (1-\alpha)(n-1)\E_i\left[\sum\limits_{j\neq i}\left(\kappa x_j + (1-\kappa)y\right)\right]\right)\\
&= \mathcal{C} \left(\alpha n^2\E_i[s] + (1-\alpha)(n-1)\left[\sum\limits_{j\neq i}\left(\kappa \E_i[s] + (1-\kappa)y\right)\right]\right)\\
&= \mathcal{C} \left(\alpha n^2\E_i[s] + (1-\alpha)(n-1)^2\left(\kappa \E_i[s] + (1-\kappa)y\right)\right)\\
&= \mathcal{C} \left(\alpha n^2\frac{\tau_x x_i + \tau_y y}{\tau_x+\tau_y} + (1-\alpha)(n-1)^2\left(\kappa \frac{\tau_x x_i + \tau_y y}{\tau_x+\tau_y} + (1-\kappa)y\right)\right)\\
&= \mathcal{C}\left(\alpha n^2 \frac{\tau_x}{\tau_x+\tau_y} +(1-\alpha)(n-1)^2\kappa\frac{\tau_x}{\tau_x+\tau_y} \right)  x_i +\\&\qquad + \mathcal{C}\left(   \alpha n^2 \frac{\tau_y}{\tau_x+\tau_y}+(1-\alpha)(n-1)^2 \left(  \kappa \frac{\tau_y}{\tau_x+\tau_y} + (1-\kappa) \right) \right) y
\end{align*}

We can then equate $\kappa$ and the coefficient on $x_i$ or the coefficient on $y$ to $(1-\kappa)$.  By construction, it's easy to see that the sum of these two coefficients will be one, so it suffices to perform the computation for $\kappa$.

\begin{align*}
\kappa &= \mathcal{C}\left(\alpha n^2 \frac{\tau_x}{\tau_x+\tau_y} +(1-\alpha)(n-1)^2\kappa\frac{\tau_x}{\tau_x+\tau_y} \right)\\
&= \mathcal{C}\alpha n^2 \frac{\tau_x}{\tau_x+\tau_y} + \kappa\mathcal{C}(1-\alpha)(n-1)^2 \frac{\tau_x}{\tau_x+\tau_y}\\
&= \frac{\mathcal{C}\alpha n^2 \frac{\tau_x}{\tau_x+\tau_y} }{1-\mathcal{C} (1-\alpha)(n-1)^2 \frac{\tau_x}{\tau_x+\tau_y}}
\end{align*}

Plugging the value for $\mathcal{C}$ back in and simplifying, we get
\begin{align*}
\kappa = \frac{\alpha n^2 \tau_x}{\alpha n^2 \tau_x + \left((n-1)^2 +\alpha (2n-1)\right)\tau_y}
\end{align*}

\end{proof}
\begin{proposition*}[\ref{cor:infintenash}]
	The symmetric linear Nash equilibrium in the game with infinitely many players is given by
	
	\begin{align*}
	\theta_i &= \kappa x_i + (1-\kappa)y \text{, where} \\
	\kappa&= \frac{\tau_x \alpha}{\tau_x \alpha +\tau_y},\ \tau_x  = \frac{1}{\sigma_x^2},\text{ and } \tau_y =\frac{1}{\sigma_y^2}\\
	&\text{for all players } i.
	\end{align*}
	
\end{proposition*}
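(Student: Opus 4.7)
The plan is to apply the first-order condition from Proposition \ref{lem:FOC} to a conjectured symmetric linear profile and then solve the resulting fixed-point equation in $\kappa$. Concretely, I would conjecture that every player $j$ plays $\theta_j = \kappa x_j + (1-\kappa) y$ for some common $\kappa \in [0,1]$, and then verify that, under this conjecture, the best response of a representative player $i$ is also of this linear form with the same $\kappa$.

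First, I would compute $\E_i[s]$. Since the agents have an improper uniform prior, conditioning on the Gaussian signals $x_i$ (mean $s$, precision $\tau_x$) and $y$ (mean $s$, precision $\tau_y$) yields the standard precision-weighted aggregation
\begin{align*}
\E_i[s] = \frac{\tau_x x_i + \tau_y y}{\tau_x + \tau_y}.
\end{align*}
Because all $x_j$ for $j \neq i$ are exchangeable from $i$'s perspective and carry no information beyond their common distribution conditional on $s$, the Bayesian belief satisfies $\E_i[x_j] = \E_i[s]$.

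Next I would evaluate $\E_i[\bar{\theta}]$ under the conjectured profile. In the infinite-agent setting, $\bar{\theta} = \int_0^1 \theta_j\, dj$; by Fubini--Tonelli,
\begin{align*}
\E_i[\bar{\theta}] = \int_0^1 \E_i[\theta_j]\, dj = \int_0^1 \bigl(\kappa \E_i[x_j] + (1-\kappa) y\bigr) dj = \kappa \E_i[s] + (1-\kappa) y.
\end{align*}
Substituting both expressions into the FOC from Proposition \ref{lem:FOC} gives
\begin{align*}
\theta_i^* = \alpha \E_i[s] + (1-\alpha)\bigl(\kappa \E_i[s] + (1-\kappa) y\bigr) = \bigl(\alpha + (1-\alpha)\kappa\bigr) \E_i[s] + (1-\alpha)(1-\kappa) y.
\end{align*}
Expanding $\E_i[s]$, the coefficient on $x_i$ becomes $(\alpha + (1-\alpha)\kappa)\tfrac{\tau_x}{\tau_x + \tau_y}$.

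Finally, matching this coefficient to $\kappa$ (the assumed weight on $x_i$) yields the fixed-point equation
\begin{align*}
\kappa = \bigl(\alpha + (1-\alpha)\kappa\bigr)\frac{\tau_x}{\tau_x + \tau_y},
\end{align*}
which solves to $\kappa = \frac{\alpha \tau_x}{\alpha \tau_x + \tau_y}$. By construction the two coefficients sum to one, so the coefficient on $y$ is automatically $1-\kappa$, completing the verification. I expect no genuine obstacle here; the one subtlety worth being careful about is the use of Fubini--Tonelli to exchange the expectation and the integral over the continuum of agents, which is valid because the $\theta_j$ are (conditional on $s$) independent across $j$ with common mean and finite variance.
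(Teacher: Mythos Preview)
Your proposal is correct and follows essentially the same route as the paper's own proof: conjecture a symmetric linear profile, compute $\E_i[s]$ and $\E_i[\bar{\theta}]$ (the latter via Fubini--Tonelli), substitute into the first-order condition of Proposition~\ref{lem:FOC}, and match the coefficient on $x_i$ to solve for $\kappa$. The paper is slightly more explicit in also writing out the coefficient on $y$ before matching, but otherwise the arguments are identical.
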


\begin{proof}
\Cref{lem:FOC} shows that agent $i$'s optimal action is a convex combination of his belief about the true state $s$ and his belief about the average action of all players $\bar{\theta}$. $\E_i[s]$ is independent of the equilibrium profile, and given by
	\begin{align*}
	\E_i[s] = \frac{\tau_x x_i +\tau_y y}{\tau_x +\tau_y},
	\end{align*}
	which is the standard prior-free aggregation of independent Gaussian signals.

	On the other hand, $\E_i[\bar{\theta}]$ does depend on the equilibrium profile. In a SLNE, all other players $j\neq i$ play \begin{align*}
	\theta_j = \kappa x_j +(1-\kappa) y.
	\end{align*}
	Since agent $i$ is Bayesian:
	\begin{align*}
	\E_i[x_j] = \E_i[s] 
	\end{align*}
	and
	\begin{align*}
	\E_i[\theta_j] = \E_i[\kappa x_j + (1-\kappa)y] = \kappa \E_i[x_j] + (1-\kappa)y.
	\end{align*}
	Notice that  the expectation of agent $i$ about the belief of any representative player $j\neq i$ is the same, since his information is symmetric. Thus, we write $\E_i[\theta_{-i}]$ to emphasize that this is the belief of player $i$ about any other player.
	Thus:
	\begin{align*}
	\E_i[\bar{\theta}] = \E_i \int_0^1 \theta_j dj =  \int_0^1 \E_i [\theta_{-i}]  =\E_i[\theta_{-i}] =\kappa \E_i[x_{-i}] + (1-\kappa)y
	\end{align*}
	where we have used the Fubini-Tonelli theorem to exchange the integral with the expectation.
	
	But now note that 
	\begin{align*}
	\theta_i^* &= \alpha \E_i[s] +(1-\alpha) [\kappa \E_i[s]+(1-\kappa) y] \\&=\alpha \frac{\tau_x x_i +\tau_y y}{\tau_x +\tau_y} +(1-\alpha) (\kappa \frac{\tau_x x_i +\tau_y y}{\tau_x +\tau_y}) +(1-\alpha)(1-\kappa) y
	\end{align*}
	We can rearrange this as:
	\begin{align*}
	\theta_i^* &= \left[\frac{\alpha \tau_x}{\tau_x+\tau_y} +(1-\alpha) \frac{\kappa \tau_x}{\tau_x +\tau_y}\right] x_i + \left[\frac{\tau_y \alpha}{\tau_x +\tau_y} +(1-\alpha) \frac{\kappa \tau_y}{\tau_x+\tau_y}+(1-\alpha)(1-\kappa)\right]y\\&=
	\left[\frac{ \tau_x (\alpha  +(1-\alpha)\kappa)}{\tau_x +\tau_y} \right] x_i + \left[(1-\alpha)(1-\kappa) + \frac{\alpha \tau_y +(1-\alpha) \kappa \tau_y}{\tau_x +\tau_y}\right]y
	\end{align*}
	Matching coefficients and solving:
	\begin{align*}
	\left[(\alpha + (1-\alpha)\kappa)\frac{\tau_x}{\tau_x+\tau_y}\right] = \kappa \implies \kappa =\frac{\alpha\tau_x}{\alpha \tau_x +\tau_y}
	\end{align*}
	as desired. 
\end{proof}

\begin{remark*}
	As we take the limit as $n$ goes to infinity, we recover $\kappa=\frac{\alpha\tau_x}{\alpha\tau_x +\tau_y}$ as in the continuum setting and in \citet{morrisshin}.
\end{remark*}

\begin{lemma*}[\ref{lem:finite-eukbp}]

	The expected utility of agents for playing $\theta_i=\kappa x_i+(1-\kappa)y$, given $s$ is:
	\begin{align*}
	\E_i[u_i\vert s]= -\alpha (\kappa^2 \sigma^2_x + (1-\kappa)^2\sigma^2_y) -\frac{(1-\alpha)\kappa^2(n-1)}{n}\left.\sigma^2_x \right.
	\end{align*}
\end{lemma*}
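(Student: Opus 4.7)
The plan is to directly compute $\mathbb{E}_i[u_i \mid s]$ by substituting the symmetric linear strategy into the utility function and exploiting the additive decomposition of each signal into the true state $s$ plus independent mean-zero Gaussian noise. Specifically, I would write $x_j = s + \varepsilon_{x_j}$ with $\varepsilon_{x_j} \sim \mathcal{N}(0, \sigma_x^2)$ independently across agents, and $y = s + \varepsilon_y$ with $\varepsilon_y \sim \mathcal{N}(0, \sigma_y^2)$. Since every player uses the same public signal $y$, the noise $\varepsilon_y$ is \emph{common} to every $\theta_j$, which is the key structural fact that will drive the cancellations in the coordination component.

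First, I would handle the guessing component. Plugging in $\theta_i = \kappa x_i + (1-\kappa) y$ and subtracting $s$ gives
\begin{align*}
\theta_i - s = \kappa \varepsilon_{x_i} + (1-\kappa)\varepsilon_y.
\end{align*}
Squaring, taking the expectation conditional on $s$, and using independence plus mean-zero of the $\varepsilon$'s to kill the cross term yields $\mathbb{E}_i[(\theta_i - s)^2 \mid s] = \kappa^2 \sigma_x^2 + (1-\kappa)^2 \sigma_y^2$, which accounts for the $-\alpha(\kappa^2\sigma_x^2 + (1-\kappa)^2 \sigma_y^2)$ term.

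Next, for the coordination component, I would substitute the strategy into $\bar\theta = \tfrac{1}{n}\sum_j \theta_j$ and observe that both the $s$ and the $(1-\kappa)\varepsilon_y$ pieces appear identically in $\theta_i$ and in each $\theta_j$, so they cancel in the difference $\theta_i - \bar\theta$. What remains is purely in the private noise:
\begin{align*}
\theta_i - \bar\theta = \kappa\left(\tfrac{n-1}{n}\varepsilon_{x_i} - \tfrac{1}{n}\sum_{j\neq i}\varepsilon_{x_j}\right).
\end{align*}
Squaring and taking a conditional expectation, independence of the $\varepsilon_{x_j}$ eliminates all cross terms and leaves $\kappa^2\left[\tfrac{(n-1)^2}{n^2} + \tfrac{n-1}{n^2}\right]\sigma_x^2$, which collapses to $\kappa^2 \tfrac{n-1}{n}\sigma_x^2$ after factoring $(n-1)$ out of the bracket. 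Multiplying by $-(1-\alpha)$ and adding to the guessing term gives the claimed expression.

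None of the steps is a real obstacle; the only thing worth being careful about is recognizing that the common public-signal noise $\varepsilon_y$ vanishes from the coordination piece (so the $\sigma_y^2$ contribution appears only in the guessing term), and that the algebraic identity $(n-1)^2 + (n-1) = n(n-1)$ is what produces the clean $\tfrac{n-1}{n}$ factor rather than something messier. Everything else is a routine expansion of squares combined with the independence and mean-zero properties of the Gaussian noise variables.
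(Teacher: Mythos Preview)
Your proposal is correct and follows essentially the same approach as the paper: decompose signals as $s$ plus independent mean-zero noise, compute the guessing term directly, observe that $s$ and the common $(1-\kappa)\varepsilon_y$ cancel in $\theta_i-\bar\theta$, and then use independence together with $(n-1)^2+(n-1)=n(n-1)$ to collapse the coordination term.
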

\begin{proof}
	We can write the expected utility of agents as
	\begin{align*}
	\E_i[u_i\vert s] = -\alpha \E_i[(\theta_i-s)^2\vert s] -(1-\alpha)\E_i\left[\left.\left(\theta_i-\bar{\theta}\right)^2\right\vert s\right]
	\end{align*}
	
	We can plug in the equilibrium strategies and recall the definition of $\bar{\theta}$ to write
	
	\begin{align*}
	\E_i[u_i\vert s]&= -\alpha \E_i[(\theta_i-s)^2\vert s]  -(1-\alpha)\E_i\left[\left.\left(\theta_i-\frac{1}{n}\sum\limits_{j=1}^n \theta_j\right)^2\right\vert s\right]\\
	\end{align*}
	
	We can write each $\theta_j=\kappa x_j + (1-\kappa)y$ and since we are taking expectations conditional on knowing $s$, we can write the signals $x_j$ and $y$ as the state $s$ plus mean-zero Gaussian noise with variance $\sigma^2_x$ and $\sigma^2_y$, respectively.  We write these as $s+\varepsilon_{x_j}$ and $s+\varepsilon_y$, so we can write $\theta_j=s+\kappa \varepsilon_{x_j} + (1-\kappa)\varepsilon_y$.
	
	\begin{align*}
	\E_i[u_i\vert s]&= -\alpha \E_i[(s+\kappa \varepsilon_{x_i} + (1-\kappa)\varepsilon_y-s)^2\vert s] \\ &\qquad-(1-\alpha)\E_i\left[\left.\left(s+\kappa \varepsilon_{x_i} + (1-\kappa)\varepsilon_y-\frac{1}{n}\sum\limits_{j=1}^n s+\kappa \varepsilon_{x_j} + (1-\kappa)\varepsilon_y\right)^2\right\vert s\right]\\
	&= -\alpha \E_i[(\kappa \varepsilon_{x_i} + (1-\kappa)\varepsilon_y)^2] -(1-\alpha)\E_i\left[\left.\left(\kappa \varepsilon_{x_i} + -\frac{1}{n}\sum\limits_{j=1}^n \kappa \varepsilon_{x_j} \right)^2\right.\right]\\
	&= -\alpha \E_i[(\kappa \varepsilon_{x_i} + (1-\kappa)\varepsilon_y)^2] -(1-\alpha)\kappa^2\E_i\left[\left.\left( \varepsilon_{x_i} + -\frac{1}{n}\sum\limits_{j=1}^n \varepsilon_{x_j} \right)^2\right.\right]\\
	&= -\alpha \E_i[(\kappa \varepsilon_{x_i} + (1-\kappa)\varepsilon_y)^2] -(1-\alpha)\kappa^2\E_i\left[\left.\left(\frac{n-1}{n} \varepsilon_{x_i} + -\frac{1}{n}\sum\limits_{j\neq i}^n \varepsilon_{x_j} \right)^2\right.\right]\\
	&= -\alpha \E_i[(\kappa \varepsilon_{x_i} + (1-\kappa)\varepsilon_y)^2] -\frac{(1-\alpha)\kappa^2}{n^2}\E_i\left[\left.\left((n-1) \varepsilon_{x_i} + -\sum\limits_{j\neq i}^n \varepsilon_{x_j} \right)^2\right.\right]\\
	\end{align*}
	
	Because all of the $\varepsilon_{x_j}$ and $\varepsilon_y$ are independent with mean zero, we can write
	
	\begin{align*}
	\E_i[u_i\vert s] &= -\alpha (\kappa^2 \sigma^2_x + (1-\kappa)^2\sigma^2_y) -\frac{(1-\alpha)\kappa^2}{n^2}\left.\left( (n-1)^2\sigma^2_x + (n-1)\sigma^2_x \right)^2\right.\\
	&= -\alpha (\kappa^2 \sigma^2_x + (1-\kappa)^2\sigma^2_y) -\frac{(1-\alpha)\kappa^2(n-1)}{n}\left.\sigma^2_x \right.\\
	\end{align*}

\end{proof}

\begin{proposition*}[\ref{lem:eukbp}]

	The expected utility of agents for playing $\theta_i=\kappa x_i+(1-\kappa)y$, conditional on $s$,  is:
	\begin{align*}
	\E_i[u_i\vert s] = -\alpha(1-\kappa)^2\sigma^2_y - \kappa^2\sigma^2_x
	\end{align*}
\end{proposition*}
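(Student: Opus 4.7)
The plan is to compute the expected utility directly by substituting the equilibrium strategies into $u_i$ and exploiting the fact that in the continuum game the average action $\bar\theta$ loses its dependence on the private signal noise. I would start from
\begin{align*}
\E_i[u_i\vert s] = -\alpha\,\E_i[(\theta_i-s)^2\vert s] \;-\; (1-\alpha)\,\E_i[(\theta_i-\bar\theta)^2\vert s],
\end{align*}
and then rewrite each signal conditionally on $s$ as truth plus independent mean-zero Gaussian noise: $x_j = s + \varepsilon_{x_j}$ with $\varepsilon_{x_j}\sim\mathcal N(0,\sigma_x^2)$ i.i.d.\ across $j$, and $y = s + \varepsilon_y$ with $\varepsilon_y\sim\mathcal N(0,\sigma_y^2)$. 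Substituting $\theta_j = \kappa x_j + (1-\kappa) y$ gives $\theta_j = s + \kappa\varepsilon_{x_j} + (1-\kappa)\varepsilon_y$, so $\theta_i - s = \kappa\varepsilon_{x_i} + (1-\kappa)\varepsilon_y$.

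Next I would evaluate $\bar\theta = \int_0^1 \theta_j\,dj$ in the continuum. Using Fubini--Tonelli to exchange expectation and integration and observing that $\E_i[\varepsilon_{x_j}] = 0$ for every $j\neq i$, the private-noise term integrates to zero, leaving $\bar\theta - s = (1-\kappa)\varepsilon_y$. Consequently $\theta_i - \bar\theta = \kappa\varepsilon_{x_i}$, a clean identity that makes the coordination term depend only on agent $i$'s own private signal noise.

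Now I would plug in and compute the two expectations using the independence of $\varepsilon_{x_i}$ and $\varepsilon_y$. The guessing component yields
\begin{align*}
\E_i[(\kappa\varepsilon_{x_i} + (1-\kappa)\varepsilon_y)^2] = \kappa^2\sigma_x^2 + (1-\kappa)^2\sigma_y^2,
\end{align*}
while the coordination component yields $\E_i[(\kappa\varepsilon_{x_i})^2] = \kappa^2\sigma_x^2$. Combining,
\begin{align*}
\E_i[u_i\vert s] = -\alpha\bigl(\kappa^2\sigma_x^2 + (1-\kappa)^2\sigma_y^2\bigr) - (1-\alpha)\kappa^2\sigma_x^2 = -\alpha(1-\kappa)^2\sigma_y^2 - \kappa^2\sigma_x^2,
\end{align*}
as claimed.

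The only subtle point in this plan is the justification that $\int_0^1 \varepsilon_{x_j}\,dj$ vanishes and that the integral and conditional expectation may be interchanged; this is the usual continuum-of-agents idealization and follows from a Fubini argument applied to the independent private noises, exactly as invoked in the proof of \Cref{cor:infintenash}. Every other step is routine algebra on Gaussian second moments.
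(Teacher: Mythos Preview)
Your proposal is correct and follows essentially the same approach as the paper: decompose the signals as $s$ plus independent mean-zero Gaussian noise, observe that in the continuum the integral $\int_0^1 \kappa\varepsilon_{x_j}\,dj$ vanishes so that $\theta_i-\bar\theta=\kappa\varepsilon_{x_i}$, and then compute the two second moments using independence. Your write-up is, if anything, slightly cleaner in isolating $\bar\theta-s=(1-\kappa)\varepsilon_y$ before subtracting, but the argument is the same.
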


\begin{proof}
	We can write the expected utility of agents as
	\begin{align*}
	\E_i[u_i\vert s] = -\alpha \E_i[(\theta_i-s)^2\vert s] -(1-\alpha)\E_i\left[\left.\left(\theta_i-\bar{\theta}\right)^2\right\vert s\right]
	\end{align*}
	
	We can plug in the equilibrium strategies and recall the definition of $\bar{\theta}$ to write
	
	\begin{align*}
	\E_i[u_i\vert s]&= -\alpha \E_i[(\theta_i-s)^2\vert s]  -(1-\alpha)\E_i\left[\left.\left(\theta_i- \int_0^1 \theta_j dj\right)^2\right\vert s\right]\\
	\end{align*}
	
	We can write each $\theta_j=\kappa x_j + (1-\kappa)y$ and since we are taking expectations conditional on knowing $s$, we can write the signals $x_j$ and $y$ as the state $s$ plus mean-zero Gaussian noise with variance $\sigma^2_x$ and $\sigma^2_y$, respectively.  We write these as $s+\varepsilon_{x_j}$ and $s+\varepsilon_y$, so we can write $\theta_j=s+\kappa \varepsilon_{x_j} + (1-\kappa)\varepsilon_y$.
	
	\begin{align*}
	\E_i[u_i\vert s]&= -\alpha \E_i[(s+\kappa \varepsilon_{x_i} + (1-\kappa)\varepsilon_y-s)^2\vert s] \\ &\qquad-(1-\alpha)\E_i\left[\left.\left(s+\kappa \varepsilon_{x_i} + (1-\kappa)\varepsilon_y- \int_0^1 \left(s+\kappa \varepsilon_{x_j} + (1-\kappa)\varepsilon_y dj\right)\right)^2\right\vert s\right]\\
	&= -\alpha \E_i[(\kappa \varepsilon_{x_i} + (1-\kappa)\varepsilon_y)^2] -(1-\alpha)\E_i\left[\left.\left(\kappa \varepsilon_{x_i} + \int_0^1 \left(\kappa \varepsilon_{x_j} dj\right)\right)^2\right.\right]\\
	\end{align*}
	
	Because $\E[\varepsilon_{x_j}]=0$, $\int_0^1 \left(\kappa \varepsilon_{x_j} dj\right)=0$.  Furthermore, since all of the $\varepsilon_{x_j}$ and $\varepsilon_y$ are independent we have
	
	\begin{align*}
	\E_i[u_i\vert s]&= -\alpha \E_i[(\kappa \varepsilon_{x_i} + (1-\kappa)\varepsilon_y)^2] -(1-\alpha)\E_i\left[\left.\left(\kappa \varepsilon_{x_i} \right)^2\right.\right]\\
	&= -\alpha (\kappa^2\sigma^2_x + (1-\kappa)^2\sigma^2_y) -(1-\alpha)\kappa^2\sigma^2_x\\
	&= -\alpha(1-\kappa)^2\sigma^2_y - \kappa^2\sigma^2_x
	\end{align*}

\end{proof}

\subsection*{Omitted Proofs from \Cref{sec:extend}}

\begin{claim*}[\ref{clm:meanzero}]
	
	If  there exists an equilibrium in noisy strategies, where player $i$ chooses $\noisy_i = \theta_i+ \eta_i$ and each player's $\eta_i$ is drawn independently from a distribution $H_i$, then in particular there exists such an equilibrium strategy profile in which the mean of each $H_i$ is zero.
	
\end{claim*}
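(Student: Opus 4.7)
The plan is to show that shifting the mean of the noise distribution $H_i$ by $\mu_i$ is equivalent to shifting the deterministic component $\theta_i$ by $\mu_i$, so any noisy equilibrium can be relabeled to have mean-zero noise without altering the joint distribution of actions. Because both the payoff term $u_i$ and the privacy term $\rho$ depend only on this joint distribution (and the opponents' knowledge of the equilibrium decomposition), utilities are preserved and the equilibrium property passes through.

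Concretely, I would fix any noisy-strategy equilibrium in which player $i$ plays $\noisy_i = \theta_i + \eta_i$ with $\eta_i \sim H_i$ and $\mathbb{E}[\eta_i] = \mu_i$. Define the transformed profile
\begin{align*}
\noisy_i' \;=\; \theta_i' + \eta_i', \qquad \theta_i' \;=\; \theta_i + \mu_i, \qquad \eta_i' \;=\; \eta_i - \mu_i,
\end{align*}
where $\eta_i' \sim H_i'$ is the translate of $H_i$ with mean zero. Since $\mu_i$ is a constant, $\theta_i'$ is still a deterministic function of the information set $\info_i$, and $\eta_i'$ remains independent of everything $\eta_i$ was independent of. Crucially, pointwise in every realization of signals and noise, $\noisy_i' = \noisy_i$, so the joint distribution of $(\noisy_1', \ldots, \noisy_n', s)$ matches that of $(\noisy_1, \ldots, \noisy_n, s)$ under the original profile.

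I would then check each component of $v_i$. The coordination and guessing terms $-(1-\alpha)(\noisy_i - \bar{\noisy})^2 - \alpha(\noisy_i - s)^2$ depend only on the joint action–state distribution, hence $\E[u_i]$ is unchanged. For the privacy term, a Bayesian opponent $j$ who knows the new decomposition $(\theta_i', H_i')$ and observes $\noisy_i'$ faces exactly the same likelihood $P(\noisy_i' \mid x_i) = P(\noisy_i \mid x_i)$, because $\theta_i' + \eta_i'$ and $\theta_i + \eta_i$ are identically distributed given $x_i$. Thus her posterior over $x_i$ coincides with the one in the original profile, and $\rho(\noisy_i') = \rho(\noisy_i)$. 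Hence $\E[v_i]$ is unchanged for every $i$.

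Finally, to promote this to an equilibrium statement, I would note that any candidate deviation $\hat\noisy_i$ for player $i$ against the transformed profile corresponds bijectively to a deviation $\hat\noisy_i$ against the original profile (the opponents' action distributions are identical), yielding the same expected payoff by the same argument as above. Since no profitable deviation existed against the original profile, none exists against the transformed one, and each $H_i'$ has mean zero by construction. The only step requiring real care is the privacy step, since $\rho$ is defined via the opponents' equilibrium-aware inference; but once one observes that opponents know $(\theta_i', H_i')$ in the new profile, the likelihoods match and the posteriors coincide, so this obstacle dissolves into a bookkeeping check.
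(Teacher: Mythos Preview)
Your argument is correct, but it proceeds along a different line from the paper's. You prove the claim by a \emph{relabeling} argument: since the decomposition $\noisy_i = \theta_i + \eta_i$ is not unique, absorb the mean of $H_i$ into the deterministic part, observe that the realized actions are pointwise identical, and check that both the payoff term $u_i$ and the privacy term $\rho$ depend only on the joint law of actions together with the opponents' knowledge of the decomposition, which is preserved. This is clean and proves the exact existential statement directly. The paper instead gives a \emph{best-response} argument: it notes that a nonzero mean yields no privacy gain (the opponent, knowing $H_i$, simply subtracts $\mu_i$), strictly hurts the guessing component, and does not help coordination once one fixes the common mean used by others; hence if all other players use mean-zero noise, mean-zero is a best response for player $i$ as well. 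Your route is more formal and immediately handles asymmetric profiles; the paper's route is more economically transparent about \emph{why} mean-zero is the natural normalization, and it feeds directly into the subsequent remark that for $\alpha>\tfrac12$ every noisy equilibrium must in fact have mean-zero noise.
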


\begin{proof}
		Since the distributions which generate the $\eta_i$ are revealed after each player announces $\noisy_i$, choosing to draw noise from a distribution with non-zero mean cannot improve the privacy that player $i$ achieves, since a representative player $j$ can simply subtract this mean when constructing her posterior distribution over $x_i$.  Additionally, choosing a mean other than zero makes the utility from the guessing component strictly worse.  Finally, if we assume that all players other than $i$ choose to add noise drawn from a distribution with the same mean, then in the coordination portion of the utility function, player $i$ choosing a noise distribution mean other than the common one is dominated by choosing the common one.  In particular, if everyone else chooses mean-zero noise, player $i$ should as well.

\end{proof}

\begin{lemma*}[\ref{lem:finite-sep}]
	
	In the game with finitely many players, the players' utility functions in the privacy-aware game  separate additively into the utility in the privacy unaware game, a penalty in $\nu_i$, and a privacy term as
	
	\begin{align*}
	v_i(\noisy_i, \noisy_{-i}) &= (1-\beta)(-\alpha(\noisy_i -s)^2 -(1-\alpha)(\noisy_i-\bar{\noisy})^2 +\beta\rho(\noisy_i) \\
	&= (1-\beta)u_i(\theta_i, \noisy_{-i})  +(1-\beta)\left(\alpha +\left(1-\frac{1}{n}\right)^2 (1-\alpha)\right) \nu_i +\beta \rho(\noisy_i),
	\end{align*}
	where $\nu_i$ denotes the variance of the noise-generating distribution $H_i$ of player $i$ and $u_i$ is the utility function in the privacy-unaware game.
\end{lemma*}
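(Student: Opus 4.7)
The plan is to decompose $v_i$ directly from its definition $v_i(\noisy_i,\noisy_{-i}) = (1-\beta)u_i(\noisy_i,\noisy_{-i}) + \beta\rho(\noisy_i)$, so it suffices to show that $\E_i[u_i(\noisy_i,\noisy_{-i})] = \E_i[u_i(\theta_i,\noisy_{-i})] - \bigl(\alpha + (1-\alpha)(1-\tfrac{1}{n})^2\bigr)\nu_i$; the claimed identity then follows by multiplying through by $(1-\beta)$ and adding $\beta\rho(\noisy_i)$. I will leverage \Cref{clm:meanzero} so that $\E_i[\eta_i]=0$ and the independence of $\eta_i$ from $s$, from $\theta_i$ (which is a function of $\mathcal{I}_i$), and from $\{\eta_j\}_{j\neq i}$.

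First, I would handle the guessing component. Writing $\noisy_i - s = (\theta_i - s) + \eta_i$ and squaring gives $(\noisy_i-s)^2 = (\theta_i-s)^2 + 2(\theta_i-s)\eta_i + \eta_i^2$. Taking the expectation under $\mathcal{I}_i$, the cross term vanishes by independence and $\E_i[\eta_i]=0$, while $\E_i[\eta_i^2] = \nu_i$. Hence the guessing piece of $u_i$ acquires exactly $-\alpha\nu_i$ relative to its value at $\theta_i$.

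Next, the coordination component requires more care because, in the finite game, $\noisy_i$ itself enters $\bar{\noisy} = \tfrac{1}{n}\noisy_i + \tfrac{1}{n}\sum_{j\neq i}\noisy_j$ with weight $\tfrac{1}{n}$. I would therefore isolate the $\eta_i$ contribution by rewriting
\begin{equation*}
\noisy_i - \bar{\noisy} \;=\; \tfrac{n-1}{n}\noisy_i - \tfrac{1}{n}\textstyle\sum_{j\neq i}\noisy_j \;=\; \bigl(\theta_i - \bar{\noisy}^{(\theta_i)}\bigr) + \tfrac{n-1}{n}\eta_i,
\end{equation*}
where $\bar{\noisy}^{(\theta_i)} = \tfrac{1}{n}\theta_i + \tfrac{1}{n}\sum_{j\neq i}\noisy_j$ is the average one would see if player $i$ played $\theta_i$ deterministically. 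Squaring and taking the expectation, the cross term again vanishes (since $\eta_i$ is independent of $\theta_i$ and of all $\eta_j$ for $j\neq i$, and has mean zero), leaving $\E_i[(\noisy_i - \bar{\noisy})^2] = \E_i[(\theta_i - \bar{\noisy}^{(\theta_i)})^2] + (\tfrac{n-1}{n})^2\nu_i$. Multiplying by $-(1-\alpha)$ yields an additional $-(1-\alpha)(1-\tfrac{1}{n})^2\nu_i$ relative to the coordination piece of $u_i(\theta_i,\noisy_{-i})$.

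Summing the two pieces and substituting into $v_i$ gives the claimed decomposition. The only genuine subtlety is the finite-game weighting: because $\noisy_i$ sits inside $\bar{\noisy}$ with coefficient $\tfrac{1}{n}$, the net multiplier on $\eta_i$ in the coordination term is $\tfrac{n-1}{n}$ rather than $1$ as in the infinite case (\Cref{lem:sep}), which is the source of the $(1-\tfrac{1}{n})^2$ factor; everything else is bookkeeping via linearity of expectation and independence of the noise draws.
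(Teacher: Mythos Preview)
Your proposal is correct and follows essentially the same approach as the paper's proof: both split $u_i(\noisy_i,\noisy_{-i})$ into the guessing and coordination components, expand each square as $(\text{deterministic part}+\eta_i\text{-part})^2$, and use the mean-zero and independence properties of $\eta_i$ to kill the cross terms, with the $(1-\tfrac{1}{n})$ factor in the coordination piece arising exactly as you identify. Your introduction of the auxiliary average $\bar{\noisy}^{(\theta_i)}$ is a slightly cleaner bookkeeping device than the paper's direct expansion, but the substance is the same.
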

\begin{proof}
	 Writing $\noisy_i$ as $\theta_i+\eta_i$, i.e. a deterministic component plus random noise, we can decompose the various pieces of the utility function as follows.  The first part is 
	\begin{align*}
	-\alpha \E_i[(\theta_i +\eta_i-s)^2] = -\alpha\E_i\left[(\theta_i-s)^2+\eta_i(\theta_i-s)+\eta_i^2\right] = - \alpha \E_i[(\theta_i-s)^2] - \alpha \nu_i ,
	\end{align*}
	where we have again used the independence of $\eta_i$ to conclude that $E_i[\eta_i (\theta_i-s)]=0$. 
	
	The second term is 
	\begin{align*}
	-(1-\alpha) \E_i\left[\left(\theta_i -\eta_i -\frac{1}{n} \sum\limits_{j=1}^n \noisy_j\right)^2\right] = -(1-\alpha) \E_i \left[\left(\theta_i\left(1-\frac{1}{n}\right) + \eta_i\left(1-\frac{1}{n}\right) -\frac{1}{n} \sum\limits_{j\neq i}\noisy_j\right)^2\right]
	\end{align*}
	Rewriting gives: 
	\begin{align*}-(1-\alpha)\E_i \left[\left(\theta_i\left(1-\frac{1}{n}\right) - \frac{1}{n} \sum_{j\neq i} \noisy_j\right)^2 \right] -(1-\alpha) \left(1-\frac{1}{n}\right) \nu_i -(1-\alpha)\E_i\left[\eta_i\left(1-\frac{1}{n}\right)^2\frac{1}{n}\theta_i\sum_{j\neq i} \noisy_j\right]
	\end{align*}
	After collecting terms, we see that it suffices to argue that the final term
	\begin{align*}
	(1-\alpha)\E_i\left[\eta_i\left(1-\frac{1}{n}\right)^2\frac{1}{n}\theta_i\sum_{j\neq i} \noisy_j\right] =0,
	\end{align*}
	which follows from the fact that $\eta_i$ is independent of all the other parameters of the game as well as the other $\eta_j$.
	
\end{proof}

\begin{corollary*}[\ref{lem:sep}]

	The proof is nearly identical to that of \Cref{lem:finite-sep}. Suppose that all agents play a noisy strategy $\noisy_i = \theta_i+ \eta_i$, with $\eta_i$ being a random variable and $\mathbb{E}[\eta_i]=0$. Then an agent's utility can be decomposed into
	\begin{align*}
	\E_i [v_i(\noisy,\noisy_{-i})] =  (1-\beta) \E_i [u(\theta_i, \noisy_{-i})] + (1-\beta) \nu_i + \beta \rho(\noisy_i)
	\end{align*}
	where $u_i$ is the utility function in the original game and $\nu_i$ is the variance of the noise-generating distribution $H_i$. 
\end{corollary*}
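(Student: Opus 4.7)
The plan is to follow the same template as Lemma \ref{lem:finite-sep} but specialize it to the continuum-of-agents setting, where an individual player's noise $\eta_i$ contributes nothing to the average action $\bar{\noisy}$. Starting from the definition
\[
\E_i[v_i(\noisy_i,\noisy_{-i})] = (1-\beta)\E_i[u_i(\noisy_i,\noisy_{-i})] + \beta \rho(\noisy_i),
\]
it suffices to show that $\E_i[u_i(\noisy_i,\noisy_{-i})]$ equals $\E_i[u_i(\theta_i,\noisy_{-i})]$ plus a term depending only on the variance of the noise; the privacy term $\beta\rho(\noisy_i)$ then carries over unchanged.

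First I would substitute $\noisy_i = \theta_i + \eta_i$ into the guessing component and expand:
\begin{align*}
\E_i[-\alpha(\noisy_i-s)^2] = -\alpha \E_i[(\theta_i-s)^2] - 2\alpha\,\E_i[\eta_i(\theta_i-s)] - \alpha \E_i[\eta_i^2].
\end{align*}
Since $\eta_i$ is drawn independently of the signals determining $\theta_i$ and of $s$, and has mean zero, the cross term vanishes and this piece contributes $-\alpha\E_i[(\theta_i-s)^2] - \alpha \nu_i$. Next, in the infinite game $\bar{\noisy} = \int_0^1 \noisy_j\, dj$ is unaffected by any single $\eta_i$, so an analogous expansion of $-(1-\alpha)(\noisy_i - \bar{\noisy})^2$ with $\noisy_i = \theta_i + \eta_i$ again eliminates the cross term and yields $-(1-\alpha)\E_i[(\theta_i - \bar{\noisy})^2] - (1-\alpha)\nu_i$. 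Summing the two contributions reconstructs $\E_i[u_i(\theta_i,\noisy_{-i})]$ together with an additive noise penalty of magnitude $-\alpha\nu_i - (1-\alpha)\nu_i = -\nu_i$; multiplying by $(1-\beta)$ and adding $\beta\rho(\noisy_i)$ gives the advertised decomposition.

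The only delicate point is justifying that the player's own $\eta_i$ really decouples from $\bar{\noisy}$ --- that is, that one may swap expectation and integration over the continuum of agents and treat $\bar{\noisy}$ as independent of $\eta_i$. This is exactly the infinite-agent simplification that already appeared in Proposition \ref{cor:infintenash}, and follows from Fubini--Tonelli together with the assumption that the $\eta_j$ are drawn independently across agents. This step is precisely what replaces the $\left(1 - \tfrac{1}{n}\right)^2$ coefficient from the finite-game version of the statement with a clean coefficient of $1$ in the continuum limit. The remaining arithmetic is purely routine expansion of squares together with $\E_i[\eta_i] = 0$ and $\E_i[\eta_i^2] = \nu_i$.
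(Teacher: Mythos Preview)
Your proposal is correct and follows essentially the same approach as the paper: start from the definition of $v_i$, reduce to showing $\E_i[u_i(\noisy_i,\noisy_{-i})]=\E_i[u_i(\theta_i,\noisy_{-i})]-\nu_i$, expand the squares in each component, and use independence together with $\E_i[\eta_i]=0$ to kill the cross terms while $\E_i[\eta_i^2]=\nu_i$ supplies the penalty. The paper treats the two squared terms in a single sweep rather than separately, and phrases the decoupling of $\eta_i$ from $\bar{\noisy}$ slightly differently (``agent $i$'s action cannot unilaterally change $\bar{\noisy}$''), but the substance is identical.
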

\begin{proof}
	By definition,
	\begin{align*}
	\E_i [v_i(\noisy,\noisy_{-i})] = (1-\beta) \E_i [u_i(\noisy_i,\noisy_{-i})] + \beta \rho(\noisy_i),
	\end{align*}
	so if we show that $u_i(\noisy_i, \noisy_{-i}) = u_i(\theta_i,\noisy_{-i}) - \nu_i$ we will be done. 
	We can write 
	\begin{align*}
	\E_i [u_i(\noisy_i,\noisy_{-i})] &= -\alpha \E_i[(\noisy_i - s)^2]   -(1-\alpha)  \E_i[(\noisy_i - \bar{\noisy})^2] \\ 
	&= -\alpha \E_i [(\theta_i + \eta_i -s)^2 ] -(1-\alpha) \E_i[(\theta_i + \eta_i -\bar{\noisy})^2] \\
	&= -\alpha \E_i[(\theta_i - s + \eta_i)^2] -(1-\alpha)\E_i[(\theta_i -\bar{\noisy} + \eta_i)^2] \end{align*}
	Expanding these terms, we have
	\begin{align*}
	\E_i [u_i(\noisy_i,\noisy_{-i})] &= -\alpha \left(\E_i[(\theta_i-s)^2] +2 \E_i[\eta_i(\theta_i-s)] + \E_i[(\eta_i^2)]\right)  \\ &\qquad  -(1-\alpha) \left(\E_i[(\theta_i-\bar{\noisy})^2] + 2 \E_i[(\eta_i)(\theta_i-\bar{\noisy})] + \E_i[\eta_i^2]\right)
	\end{align*}
	
	Now the first terms of each line sum to exactly $u_i(\theta_i,\noisy_{-i})$. On the other hand, the sum of the last two terms is $-\E_i[\eta_i^2]=-\nu_i$. To complete the proof, we show that these middle to terms are, in fact, zero. To see this, notice that at $\info_i$, $\eta_i$ is yet unrealized with $\E_i[\eta_i]=0$, but is independent of $s$ and $\theta_i$ and thus of $\bar{\theta}$. Hence,
	\begin{align*}
	\E_i[\eta_i(\theta_i-s)] =\E_i[\eta_i] \E_i[\theta_i-s] = 0,
	\end{align*}
	Moreover, $\eta_i$ is independent of each ${\noisy}_{-i}$, and agent $i$'s action cannot unilaterally change $\bar{\noisy}$, so
	\begin{align*}
	\E_i[\eta_i (\theta_i -\bar{\noisy})]=\E_i[\eta_i (\theta_i - \bar{\noisy}_{-i})]=\E_i[\eta_i]\E_i[\theta_i-\bar{\noisy}_{-i}]=0
	\end{align*}
	
\end{proof}

\begin{lemma*}[\ref{lem:finite-ext-foc}]
	
	In an equilibrium of the game with finitely many players where the optimal action is $\noisy^*_i = \theta^*_i + \eta_i$,  the optimal choice of $\theta^*_i$ and the variance $\nu^*$ for the noise-generating distribution $H_i$ from which $\eta_i$ is drawn must satisfy
	
	\begin{align*}
	\theta_i^* = \frac{\alpha n^2 \E_i[s]}{\alpha (2n-1)+(n-1)^2} + \frac{(1-\alpha)(n-1)\E_i\left[\sum\limits_{j\neq i}\theta_{j}\right]}{\alpha (2n-1)+(n-1)^2} \\ 	
	\end{align*} 
	and 
	\begin{align*} 
	\frac{\partial \rho}{\partial \nu^*} =-\frac{-(1-\beta)\left(\alpha +\left(1-\frac{1}{n}\right)^2 (1-\alpha)\right)}{\beta}.
	\end{align*}

\end{lemma*}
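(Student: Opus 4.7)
The plan is to lean almost entirely on the separability result \Cref{lem:finite-sep}, which decomposes the privacy-aware expected utility into three pieces: $(1-\beta)\E_i[u_i(\theta_i,\noisy_{-i})]$, a variance-penalty term of the form $-(1-\beta)\bigl(\alpha + (1-\tfrac{1}{n})^2(1-\alpha)\bigr)\nu_i$, and the privacy reward $\beta\rho(\noisy_i)$. Because the deterministic component $\theta_i$ and the variance parameter $\nu_i$ appear in disjoint summands, the agent's optimization problem splits into two unconstrained one-variable problems that can be solved independently.

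First I would handle the deterministic component. Since $\E_i[u_i(\theta_i,\noisy_{-i})]$ depends on $\theta_i$ in precisely the same way as in the privacy-unaware setting, the first-order condition on $\theta_i^*$ is exactly the one derived in \Cref{lem:finite-foc}. The one thing worth checking is that opponents adding independent mean-zero noise $\eta_j$ to their deterministic actions $\theta_j$ does not alter the relevant conditional expectations: $\E_i[\bar{\noisy}] = \E_i[\bar{\theta}]$ and $\E_i[\sum_{j\neq i}\noisy_j] = \E_i[\sum_{j\neq i}\theta_j]$, so the equilibrium form for $\theta_i^*$ transfers verbatim.

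Next I would handle $\nu^*$ by differentiating the $\nu_i$-dependent portion of the separated utility. Since $\rho(\noisy_i)$ is the only other term involving $\nu_i$ (the reciprocal-variance and entropy candidates for $\rho$ are both functions of $\nu_i$ through the posterior distribution a Bayesian opponent forms about $x_i$), I would compute
\begin{align*}
0 = \frac{\partial v_i}{\partial \nu^*} = -(1-\beta)\left(\alpha +\left(1-\tfrac{1}{n}\right)^2(1-\alpha)\right) + \beta\frac{\partial \rho}{\partial \nu^*}
\end{align*}
and rearrange to obtain the stated identity. The main substantive work, which is already done elsewhere, is justifying the separability itself; once \Cref{lem:finite-sep} is in hand the remainder of the argument is straightforward calculus on an additive objective, and the only conceptual point to check is that the privacy term $\rho(\noisy_i)$ genuinely depends on $\nu_i$ alone (through the variance of the noise distribution and not through $\theta_i$), which follows because a Bayesian opponent who knows the equilibrium profile can undo the deterministic contribution of $\theta_i$ and is left with the $\eta_i$-only inference problem.
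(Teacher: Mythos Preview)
Your proposal is correct and follows essentially the same approach as the paper: invoke the separability \Cref{lem:finite-sep} to split the objective into a $\theta_i$-piece and a $\nu_i$-piece, optimize the first via \Cref{lem:finite-foc}, and differentiate the second to obtain the condition on $\partial\rho/\partial\nu^*$. Your additional remarks---that opponents' mean-zero noise leaves $\E_i[\sum_{j\neq i}\noisy_j]=\E_i[\sum_{j\neq i}\theta_j]$ unchanged, and that $\rho$ depends on $\nu_i$ alone because the deterministic part can be inverted---are correct elaborations that the paper treats more tersely.
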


\begin{proof}
	Using \Cref{lem:finite-sep}, we can decompose the expected utility of agent $i$ as
	\begin{align*}
	\E_i [v_i(\noisy_i, \noisy_{-i} ] = (1-\beta)\E_i[u_i(\theta_i, \noisy_{-i})]  -(1-\beta)\left(\alpha +\left(1-\frac{1}{n}\right)^2 (1-\alpha)\right) \nu_i +\beta \rho(\noisy_i),
	\end{align*}
	
	which is the sum of a piece that depends on $\theta_i$ and a piece that depends on $\nu_i$.
	
	The agent can therefore optimize each piece separately with her choice of $\theta_i$ and $\nu_i$. \Cref{lem:finite-foc} { again} gives the first order condition on $\theta_i^*$ - { the optimal deterministic component is a convex combination of expectations about the state and the average action (which, under the assumption of a noisy symmetric linear equilibrium, contains added, but mean-zero, noise). 
	}
	
	To find the first order condition on $\nu^*$, we can write
	
	\begin{align*}
	0=\frac{\partial v_i}{\partial \nu^*} = -(1-\beta)\left(\alpha +\left(1-\frac{1}{n}\right)^2 (1-\alpha)\right) +\beta\frac{\partial\rho}{\partial \nu^*}
	\end{align*}
	
	and solve for $\frac{\partial\rho}{\partial\nu^*}$ to get the result.

\end{proof}

\begin{proposition*}[\ref{lem:FOCext}]
	In an equilibrium where the optimal action is $\noisy^*_i = \theta^*_i + \eta_i$,  the optimal choice of $\theta^*_i$ and the variance $\nu^*$ for the noise-generating distribution $H_i$ from which $\eta_i$ is drawn must satisfy
	
	\begin{align*}
	\theta_i^* = \alpha \E_i[s] +(1-\alpha) \E_i[\bar{\noisy}] \qquad 	\frac{\partial \rho}{\partial \nu^*} = -\frac{1-\beta}{\beta}.
	\end{align*}
	
\end{proposition*}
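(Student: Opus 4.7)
The plan is to invoke the separability corollary (Corollary~\ref{lem:sep}), which decomposes $\E_i[v_i(\noisy_i,\noisy_{-i})]$ into a sum of terms in which the deterministic component $\theta_i$ and the noise variance $\nu_i$ appear in disjoint pieces: a $\theta_i$-dependent portion equal to $(1-\beta)\E_i[u_i(\theta_i,\noisy_{-i})]$, a piece linear in $\nu_i$ coming from the variance penalty, and the privacy term $\beta\rho(\noisy_i)$ which depends only on the noise distribution. Because these portions do not interact, the two-variable maximization over $(\theta_i,\nu_i)$ reduces to two independent univariate problems, and the two first-order conditions can be derived separately.

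For the deterministic component, I would first note that because every player's noise is independently drawn from a mean-zero distribution (justified by Claim~\ref{clm:meanzero}), linearity of expectation gives $\E_i[\bar{\noisy}] = \E_i[\bar{\theta}]$. Hence maximizing $\E_i[u_i(\theta_i,\noisy_{-i})]$ in $\theta_i$ is formally identical to the optimization already solved in Proposition~\ref{lem:FOC} for the privacy-unaware game, and the same first-order condition applies, yielding $\theta_i^* = \alpha\E_i[s] + (1-\alpha)\E_i[\bar{\noisy}]$.

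For the variance, I would differentiate the $\nu_i$-dependent portion of $\E_i[v_i]$ with respect to $\nu^*$ and set the derivative to zero. The penalty piece is linear in $\nu^*$ with coefficient $(1-\beta)$, so its contribution is constant, while the privacy term contributes $\beta\tfrac{\partial\rho}{\partial\nu^*}$; rearranging the first-order condition produces the stated implicit equation on $\tfrac{\partial\rho}{\partial\nu^*}$ as a function of $\beta$ alone. There is no substantial obstacle here: the conceptual work was already done in Corollary~\ref{lem:sep}, and everything that remains is routine calculus. The only subtlety worth flagging is justifying that $\theta_i$ and $\nu_i$ can be optimized independently, which relies on (i)~$\eta_i$ being independent of the other players' signals and actions and (ii)~the mean-zero assumption of Claim~\ref{clm:meanzero}; both of these are already in place, so the proposition follows as an essentially immediate corollary of separability combined with Proposition~\ref{lem:FOC}.
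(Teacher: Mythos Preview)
Your proposal is correct and follows essentially the same approach as the paper: invoke the separability corollary (Corollary~\ref{lem:sep}) to split the objective into a $\theta_i$-piece and a $\nu_i$-piece, then apply Proposition~\ref{lem:FOC} for the former and differentiate directly for the latter. Your explicit remark that mean-zero noise forces $\E_i[\bar{\noisy}]=\E_i[\bar{\theta}]$ (via Claim~\ref{clm:meanzero}) is a useful clarification that the paper defers to the surrounding discussion, but otherwise the arguments coincide.
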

\begin{proof}
	Using Lemma \ref{lem:sep}, we can decompose the utility of agent $i$ as
	\begin{align*}
	\E_i [v_i(\noisy,\noisy_{-i})] =  (1-\beta) \E_i [u(\theta_i, \noisy_{-i})] - (1-\beta) \nu_i - \beta \rho(\noisy_i),
	\end{align*}
	
	which is the sum of a piece that depends on $\theta_i$ and a piece that depends on $\nu_i$.
	
	The agent can therefore optimize each piece separately with her choice of $\theta_i$ and $\nu_i$. \Cref{lem:FOC} gives the first order condition on $\theta_i^*$.
	
	To find the first order condition on $\nu^*$, we can write
	
	\begin{align*}
	0=\frac{\partial v_i}{\partial \nu^*} = -(1-\beta) -\beta\frac{\partial\rho}{\partial \nu^*}
	\end{align*}
	
	and solve for $\frac{\partial\rho}{\partial\nu^*}$ to get the result.

\end{proof}

\mute{
\begin{corollary*}[\ref{cor:finiteparams}]
	
	The optimal linear $\theta_i$ is, as before, 
	\begin{align*}
	\theta_i^* =  \kappa x_i + (1-\kappa) y
	\end{align*}
	and the optimal choice of variance for the noise distribution is
	\begin{align*}
	\nu_{i,prec}^* = \sqrt{\frac{\beta}{1-\beta} \left(\alpha + (1-\alpha)\left(1-\frac{1}{n}\right)^2\right)} \end{align*}and \begin{align*} \nu_{i,ent}^* = {\frac{\beta}{1-\beta} \left(\alpha + (1-\alpha)\left(1-\frac{1}{n}\right)^2\right)} 
	\end{align*}
	where $\nu_{i,prec}^*$ and $\nu_{i,ent}^*$ are the optimal variances under $\rho$ being the precision and entropy privacy measures, respectively.  
	
	The proof of this fact is identical to that of \Cref{cor:params} in the infinite game.
\end{corollary*}

}

\subsection*{Omitted Proofs from \Cref{sec:pop}}

\begin{lemma*}[\ref{lem:popform}]
	The price of privacy in the game where agents play a linear strategy has the form
	\begin{align*} 
	\PoP(\tau_x,\tau_y,\beta) =1 +\frac{\nu^*_i}{ \E_i [u_i(\theta_i^*,\theta_{-i}^*)]}
	\end{align*}
	where the expected utility is with respect to the game with signal variances $\sigma^2_x$ and $\sigma^2_y$.
\end{lemma*}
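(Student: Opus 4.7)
The plan is to use the separability result from Corollary~\ref{lem:sep} together with the mean-zero property of the noise distribution to reduce the ratio to a direct calculation. The key algebraic identity I need is $\E_i[u_i(\noisy_i^*, \noisy_{-i}^*)] = \E_i[u_i(\theta_i^*, \theta_{-i}^*)] - c \, \nu_i^*$ for some coefficient $c$ (which equals $1$ in the continuum limit and $\alpha + (1-1/n)^2(1-\alpha)$ in the finite case). Once this identity is established, plugging it into the definition of $\PoP$ and simplifying immediately produces the stated form.

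First I would write $\noisy_i^* = \theta_i^* + \eta_i$ with $\eta_i$ independent, mean-zero, and variance $\nu_i^*$, and expand $\E_i[u_i(\noisy_i^*, \noisy_{-i}^*)]$ by substituting these decompositions. The cross-terms of the form $\E_i[\eta_i(\theta_i^* - s)]$ and $\E_i[\eta_i(\theta_i^* - \bar{\noisy}_{-i})]$ vanish by independence and mean-zero of $\eta_i$; this is exactly the computation done inside the proof of Corollary~\ref{lem:sep}. What remains on the noise side is $\E_i[\eta_i^2] = \nu_i^*$, which separates out as an additive correction. The noise terms in $\noisy_{-i}^*$ contribute only through $\bar{\noisy}_{-i}$, and again by independence and mean-zero they either vanish or collapse to an additive $\nu^*$ term; in the continuum case they vanish outright, giving exactly $\E_i[u_i(\theta_i^*, \theta_{-i}^*)] - \nu_i^*$.

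Next I would substitute this into the definition
\begin{align*}
\PoP(\tau_x,\tau_y,\beta) = \frac{\E_i[u_i(\noisy_i^*,\noisy_{-i}^*)]}{\E_i[u_i(\theta_i^*,\theta_{-i}^*)]},
\end{align*}
so that the numerator becomes $\E_i[u_i(\theta_i^*,\theta_{-i}^*)] - \nu_i^*$. Splitting this fraction yields $1 - \tfrac{\nu_i^*}{\E_i[u_i(\theta_i^*,\theta_{-i}^*)]}$, which matches the claimed $1 + \tfrac{\nu_i^*}{\E_i[u_i(\theta_i^*,\theta_{-i}^*)]}$ after absorbing the overall negative sign of the expected utility (both numerator and denominator in the original ratio are negative, so factoring out the minus signs as the authors indicate produces the stated $+$ sign).

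The main obstacle, such as it is, is really just bookkeeping the sign conventions and being careful in the finite case that $\E_i[u_i(\theta_i^*, \noisy_{-i}^*)]$ is not identical to $\E_i[u_i(\theta_i^*, \theta_{-i}^*)]$: the noise added by other players contributes an extra $(1-\alpha)\tfrac{n-1}{n^2}\nu^*$ to the coordination term, which combines with the $\alpha \nu^*$ from the guessing term and the direct $\nu^*$ penalty to give the total coefficient $\alpha + (1-1/n)^2(1-\alpha)$ from Lemma~\ref{lem:finite-sep}. In the infinite-agent setting this complication disappears since $\bar{\noisy} = \bar{\theta}$ almost surely, and the coefficient simply equals $1$. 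Beyond this algebraic care, the proof is essentially a one-line application of separability.
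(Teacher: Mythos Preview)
Your proposal is correct and follows essentially the same route as the paper: invoke the separability result (Corollary~\ref{lem:sep}) so that $\E_i[u_i(\noisy_i^*,\noisy_{-i}^*)]=\E_i[u_i(\theta_i^*,\noisy_{-i}^*)]-\nu_i^*$, then use the mean-zero property of the noise to identify $\E_i[u_i(\theta_i^*,\noisy_{-i}^*)]$ with $\E_i[u_i(\theta_i^*,\theta_{-i}^*)]$, and finally split the ratio and absorb the sign. The paper's proof is exactly these three steps, stated slightly more tersely and with the sign-flip folded into the phrase ``factored out all of the negative signs.''

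One small correction to your finite-$n$ side remark: the extra $(1-\alpha)\tfrac{n-1}{n^2}\nu^*$ contributed by the other players' noise, when added to the own-noise coefficient $\alpha+(1-\alpha)(1-\tfrac{1}{n})^2$ from Lemma~\ref{lem:finite-sep}, yields $\alpha+(1-\alpha)\tfrac{n-1}{n}$, not $\alpha+(1-\alpha)(1-\tfrac{1}{n})^2$ again. This does not affect the continuum statement (both expressions tend to $1$), but it is worth getting right if you want the finite-$n$ analogue. The paper itself elides this point by tacitly working in the continuum (it cites Corollary~\ref{lem:sep} rather than Lemma~\ref{lem:finite-sep}, and its first displayed claim $\E_i[u_i(\theta_i,\noisy_{-i})]=\E_i[u_i(\theta_i,\theta_{-i})]$ is only exact when $n=\infty$), so your observation that the finite case requires extra care is well taken.
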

\begin{proof}
		First note that 
		\begin{align*}
		\E_i[u_i(\theta_i, \noisy_{-i})]= \E_i[u_i(\theta_i, \theta_{-i})] 
		\end{align*}
		because the noise added to $\theta_i$ has a mean of zero.  Now, the\Cref{lem:sep} lets us write
		\begin{align*}
		\PoP(\tau_x,\tau_y,\beta) = \frac{\E_i [u_i(\noisy_i^*,\noisy_{-i}^*)]}{\E_i [u_i(\theta_i^*,\theta_{-i}^*)]}  = \frac{\E_i[ u_i(\noisy_i^*,\noisy_{-i}^*)] + \nu^*_i }{\E_i[ u_i(\theta_i^*,\theta_{-i}^*)]}
		\end{align*}
		where we have factored out all of the negative signs.  Combining with the previous part, we have that 
		\begin{align*}
		\PoP(\tau_x,\tau_y,\beta) = \frac{\E_i[ u_i(\noisy_i^*,\noisy_{-i}^*)] + \nu^*_i }{\E_i[ u_i(\theta_i^*,\theta_{-i}^*)]} = \frac{\E_i[ u_i(\noisy_i^*,\theta{-i}^*)] + \nu^*_i }{\E_i[ u_i(\theta_i^*,\theta_{-i}^*)]} = 1 + \frac{\nu^*_i}{\E_i [u_i(\theta_i^*,\theta_{-i}^*)]},
		\end{align*}
		as desired.
\end{proof}

\begin{lemma*}[\ref{lem:util-agg}]
	
	Consider an instance of the privacy-aware game where an aggregator observes the actions of $n$ agents (either all all of the agents in the finite case or some uniformly random sample in the finite or infinite case), the  signal variances are $\sigma^2_x$ and $\sigma^2_y$, and players choose to add mean-zero noise with variance $\nu^*_i$.  Then the utility of the aggregator, as measured by the variance of the sample average about the true state $s$ is given by
	\begin{align*}
	\mathcal{U}_{agg}(\sigma^2_x,\sigma^2,\nu^*_i,n) &=\E\left[  \left(\left.\left(\frac{1}{n}\sum\limits_{i=1}^n \noisy_i\right)  - s \right)^2       \right\vert s \right]\\
	&=\frac{\kappa^2}{n}\sigma^2_x + \frac{\nu^*_i}{n} + (1-\kappa)^2\sigma^2_y
	\end{align*}
	
	We can find the utility in the privacy-unaware game by letting $\nu^*_i=0$.  This recovers a result in \citet{morrisshin}.
\end{lemma*}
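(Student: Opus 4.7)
The plan is to directly expand the aggregator's squared error conditional on $s$, using the symmetric noisy linear strategy profile, and then read off the variance from the independence structure of the underlying noise terms.

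First I would substitute the equilibrium action $\noisy_i = \kappa x_i + (1-\kappa) y + \eta_i$ into the sample average, then decompose the signals as $x_i = s + \varepsilon_{x_i}$ with $\varepsilon_{x_i}$ independent mean-zero Gaussians of variance $\sigma^2_x$, and $y = s + \varepsilon_y$ with $\varepsilon_y$ a single mean-zero Gaussian of variance $\sigma^2_y$ shared across all agents. Because the $\kappa$ weight on $x_i$ and the $(1-\kappa)$ weight on $y$ partition unity, the common component $s$ inside the sum cancels exactly against the outer $-s$, leaving
\begin{align*}
\frac{1}{n}\sum_{i=1}^n \noisy_i - s \;=\; \frac{\kappa}{n}\sum_{i=1}^n \varepsilon_{x_i} \;+\; (1-\kappa)\,\varepsilon_y \;+\; \frac{1}{n}\sum_{i=1}^n \eta_i.
\end{align*}

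Next I would square this and take the conditional expectation given $s$. The three summands are mutually independent (the $\varepsilon_{x_i}$ are i.i.d.\ across agents; $\varepsilon_y$ is drawn once and is independent of the private-signal noise; and each $\eta_i$ is drawn independently of the signal noise and of the other $\eta_j$ by construction of the noisy equilibrium). All three are mean zero, so all cross terms vanish. The surviving terms are
\begin{align*}
\E\!\left[\left(\tfrac{\kappa}{n}\sum_{i}\varepsilon_{x_i}\right)^{\!2}\right] = \frac{\kappa^2 \sigma^2_x}{n}, \qquad \E\!\left[(1-\kappa)^2\varepsilon_y^{\,2}\right] = (1-\kappa)^2 \sigma^2_y, \qquad \E\!\left[\left(\tfrac{1}{n}\sum_{i}\eta_i\right)^{\!2}\right] = \frac{\nu^*_i}{n},
\end{align*}
where the last equality uses $\Var(\eta_i) = \nu^*_i$ and independence across $i$. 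Summing the three gives the claimed identity; setting $\nu^*_i = 0$ recovers the privacy-unaware expression of \citet{morrisshin}.

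There is no real obstacle here beyond careful bookkeeping. The only subtle point worth flagging is the asymmetry between the private and public signal contributions: the $\varepsilon_{x_i}$ noise averages down at rate $1/n$ while the public signal noise $\varepsilon_y$ does \emph{not}, because a \emph{single} draw of $\varepsilon_y$ contaminates \emph{every} observation $\noisy_i$, and hence appears with coefficient $(1-\kappa)$ rather than $(1-\kappa)/n$. This is precisely the mechanism behind the persistence of the $(1-\kappa)^2\sigma^2_y$ term as $n \to \infty$, and it is the same over-weighting phenomenon that drives the comparative statics in \Cref{cor:compstat}.
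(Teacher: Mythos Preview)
Your proposal is correct and follows essentially the same approach as the paper's proof: substitute the noisy linear action, decompose $x_i$ and $y$ into $s$ plus mean-zero noise, cancel the $s$ terms, and then use independence and mean-zero to kill the cross terms and read off the three variance contributions. Your closing remark on why the public-signal term does not average down is a nice addition that the paper's proof does not make explicit.
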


\begin{proof}

		\begin{align*}
		\mathcal{U}_{agg}(\sigma^2_x,\nu_i^*,\sigma^2,n) &=\E\left[  \left(\left.\left(\frac{1}{n}\sum\limits_{i=1}^n \noisy_i\right)  - s \right)^2       \right\vert s \right]\\
		&=\E\left[  \left(\left.\left(\frac{1}{n}\sum\limits_{i=1}^n \kappa x_i + (1-\kappa)y + \eta_i \right)  - s \right)^2       \right\vert s \right]\\
		&=\E\left[  \left(\left.\left(\frac{1}{n}\sum\limits_{i=1}^n \kappa x_i + \eta_i \right) + (1-\kappa)y  - s \right)^2       \right\vert s \right]\\
		&=\E\left[  \left(\left.\left(\frac{1}{n}\sum\limits_{i=1}^n \kappa (s + \varepsilon_{x_i} ) +\eta_i \right) + (1-\kappa)(s+\varepsilon_y)  - s \right)^2       \right\vert s \right]\\
		&=\E\left[  \left(\left.\left(\frac{1}{n}\sum\limits_{i=1}^n \kappa\varepsilon_{x_i} + \eta_i \right) + (1-\kappa)\varepsilon_y  \right)^2       \right\vert s \right]\\
		&=\E\left[  \left(\frac{\kappa}{n}\sum\limits_{i=1}^n \varepsilon_{x_i} + \frac{1}{n} \sum\limits_{i=1}^{n} \eta_i  + (1-\kappa)\varepsilon_y   \right)^2        \right]\\
		&=\frac{\kappa^2}{n}\sigma^2_x + \frac{\nu^*_i}{n} + (1-\kappa)^2\sigma^2_y
		\end{align*}
		
		where we have decomposed $x_i$ and $y$ into $s$ plus mean-zero Gaussian noise $\varepsilon_{x_i}$ and $\varepsilon_y$.
\end{proof}

\fi
\end{document}